\DeclareMathOperator{\Op}{\mathit{Op}}
\DeclareMathOperator{\op}{\mathit{op}}
\DeclareMathOperator{\Write}{\mathit{W}}
\DeclareMathOperator{\Read}{\mathit{R}}
\newcommand{\Ops}[1]{\Op(#1)}
\newcommand{\Writes}[1]{\Write(#1)}
\newcommand{\Reads}[1]{\Read(#1)}
\newcommand{\initmem}{a^0}
\newcommand{\asms}{\mathcal{A}}
\newcommand{\barS} {\bar{\Domain}}
\newcommand{\fw}{\mathit{fw}}
\newcommand{\Domain}{\mathit{D}}
\newcommand{\Domainbot}{\mathit{D}_{\bot}}
\newcommand{\pc}{\mathit{pc}}
\newcommand{\pcof}[1]{\pc(#1)}
\newcommand{\Pos}[1]{\texttt{Pos}(#1)}
\newcommand{\moves}[1]{\xrightarrow{#1}}
\newcommand{\zigmoves}[1]{\stackrel{#1}{\rightsquigarrow}}
\newcommand{\expr}{\mathcal{E}}
\newcommand{\seqmap}{\mu}
\newcommand{\funnyp}{\mathcal{S}}
\newcommand{\fingraph}{\mathcal{G}}
\newcommand{\initnode}{v^0}
\newcommand{\countC}{\#_C}
\newcommand{\smscc}{\text{\tt scc}}
\newcommand{\sccg}{G}
\newcommand{\sccexpr}{\mathcal{C} }
\newcommand{\validw}{\mathcal{V}}
\newcommand{\enc} {\text{\tt Enc}}
\newcommand{\Var}{\mathit{Var}}
\newcommand{\sizeP}{\texttt{P}}
\newcommand{\sizeL}{\texttt{L}}
\newcommand{\sizeM}{\texttt{D}}
\newcommand{\sizeC}{\texttt{C}}
\newcommand{\scccnt}{\texttt{s}}
\newcommand{\sld}{\texttt{d}}
\newcommand{\nrt}{\texttt{t}}
\newcommand{\nrs}{\texttt{s}}
\newcommand{\nrvars}{\texttt{v}}
\newcommand{\setcon}[1]{ \lbrace #1 \rbrace }
\newcommand{\Setcon}[2]{ \lbrace #1 \mid #2 \rbrace }
\newcommand{\Naturals}{\mathbb{N}}
\newcommand{\polyrel}{\mathcal{R}}
\newcommand{\power}{\mathcal{P}}
\newcommand{\validforms}{\mathcal{F}}
\newcommand{\crosscomp}{\mathcal{C}}
\newcommand{\abs}[1]{|#1|}
\DeclareMathOperator{\row}{\texttt{row}}
\DeclareMathOperator{\col}{\texttt{col}}
\DeclareMathOperator{\pos}{\texttt{pos}}
\DeclareMathOperator{\bin}{\texttt{bin}}
\DeclareMathOperator{\Loop}{\texttt{Loop}}
\DeclareMathOperator{\scc}{\texttt{SCC}}
\newcommand{\prjauto}[2]{#1 \! \downarrow_{#2}}
\newcommand{\Wit}{\mathit{Wit}}
\newcommand{\Valid}{\mathit{Valid}}
\newcommand{\WitSCC}{\mathit{Wit_{SCC}}}
\newcommand{\ValidSCC}{\mathit{Valid_{SCC}}}
\newcommand{\bigO} {\mathcal{O}}
\newcommand{\bigOS}{\bigO^*}
\newcommand{\LCR}{\ComplexityFont{LCR}}
\newcommand{\BSR}{\ComplexityFont{BSR}}
\newcommand{\ETH}{\ComplexityFont{ETH}}
\newcommand{\SETH}{\ComplexityFont{SETH}}
\newcommand{\kkClique}{\ComplexityFont{k \times k ~ Clique}}
\newcommand{\kSAT}[1]{\ComplexityFont{#1}\text{-}\ComplexityFont{SAT}}
\newcommand {\kClique}{\ComplexityFont{k \text{-} Clique}}
\newcommand {\SetCov}{\ComplexityFont{Set~Cover}}
\newcommand {\problemtitle}[1]{\gdef\@problemtitle{#1}}
\newcommand {\problemshort}[1]{\gdef\@problemshort{#1}}
\newcommand {\probleminput}[1]{\gdef\@probleminput{#1}}
\newcommand {\problemparameter}[1]{\gdef\@problemparameter{#1}}
\newcommand {\problemquestion}[1]{\gdef\@problemquestion{#1}}
	\par\addvspace{.5\baselineskip}
			\normalsize \textbf{Input:} & \normalsize \@probleminput \\
			\normalsize \textbf{Question:} & \normalsize \@problemquestion
	\par\addvspace{.5\baselineskip}
\begin{document}

\title{Fine-Grained Complexity of Safety Verification}

\author{Peter Chini \and Roland Meyer \and Prakash Saivasan}

\institute{TU Braunschweig, Germany \\ \email{ \{p.chini, roland.meyer, p.saivasan\}@tu-bs.de}}

\authorrunning{P. Chini, R. Meyer, and P. Saivasan}

\maketitle

\begin{abstract}
	We study the fine-grained complexity of Leader Contributor Reachability ($\LCR$) and Bounded-Stage Reachability ($\BSR$), two variants of the safety verification problem for shared memory concurrent programs.
	For both problems, the memory is a single variable over a finite data domain.
	Our contributions are new verification algorithms and lower bounds.
	The latter are based on the Exponential Time Hypothesis ($\ETH$), the problem $\SetCov$, and cross-compositions.
	
	$\LCR$ is the question whether a designated leader thread can reach an unsafe state when interacting with a certain number of equal contributor threads.
	We suggest two parameterizations:
	(1) By the size of the data domain $\sizeM$ and the size of the leader $\sizeL$, and (2) by the size of the contributors $\sizeC$.
	We present algorithms for both cases. 
	The key techniques are compact witnesses and dynamic programming.
	The algorithms run in $\bigOS((\sizeL \cdot (\sizeM+1))^{\sizeL \cdot \sizeM} \cdot \sizeM^{\sizeM})$ and $\bigOS(2^{\sizeC})$ time, showing that both parameterizations are fixed-parameter tractable.
	We complement the upper bounds by (matching) lower bounds based on $\ETH$ and $\SetCov$.
	Moreover, we prove the absence of polynomial kernels.
	
	For $\BSR$, we consider programs involving $\nrt$ different threads.
	We restrict the analysis to computations where the write permission changes $\nrs$ times between the threads.
	$\BSR$ asks whether a given configuration is reachable via such an $\nrs$-stage computation.
	When parameterized by $\sizeP$, the maximum size of a thread, and $\nrt$, the interesting observation is that the problem has a large number of difficult instances.
	Formally, we show that there is no polynomial kernel, no compression algorithm that reduces the size of the data domain $\sizeM$ or the number of stages $\nrs$ to a polynomial dependence on $\sizeP$ and $\nrt$.
	This indicates that symbolic methods may be harder to find for this problem.
\end{abstract}

\section{Introduction}
\label{Section:Intro}
We study the fine-grained complexity of two safety verification problems~\cite{Atig14,Esparza13,Hague11} for shared memory concurrent programs. 
The motivation to reconsider these problems are recent developments in fine-grained complexity theory~\cite{Cygan2016,Lokshtanov2011,Calabro2009,Impagliazzo2001}. 
They suggest that classifications such as \NP\ or even \FPT\ are too coarse to explain the success of verification methods.
Instead, it should be possible to identify the precise influence that parameters of the input have on the verification time. 
Our contribution confirms this idea.
We give new verification algorithms for the two problems that, for the first time, can be proven optimal in the sense of fine-grained complexity theory.
To state the results, we need some background.
As we proceed, we explain the development of fine-grained complexity theory.

There is a well-known gap between the success that verification tools see in practice and the judgments about computational hardness that worst-case complexity is able to give.
The applicability of verification tools steadily increases by tuning them towards industrial instances. 
The complexity estimation is stuck with considering the input size or at best assuming certain parameters to be constant.
However, the latter approach is not very enlightening if the runtime is $n^k$, where $n$ is the input size and $k$ the parameter.

The observation of a gap between practical algorithms and complexity theory is not unique to verification but made in every field that has to solve hard computational problems.
Complexity theory has taken up the challenge to close the gap.
So-called \emph{fixed-parameter tractability} (\FPT)~\cite{Cygan2015,Downey2013} proposes to identify parameters $k$ so that the runtime is $f(k)\mathit{poly}(n)$, where $f$ is a computable function and $\mathit{poly(n)}$ denotes any polynomial dependent on $n$.
These parameters are powerful in the sense that they dominate the~complexity.

For an \FPT-result to be useful, function $f$ should only be mildly exponential, and of course $k$ should be small in the instances of interest.
Intuitively, they are what one needs to optimize. 
\emph{Fine-grained complexity} is the study of upper and lower bounds on the function. 
Indeed, the fine-grained complexity of a problem is written as $O^*(f(k))$, emphasizing $f$ and $k$ and suppressing the polynomial part.
For upper bounds, the approach is still to come up with an algorithm.

For lower bounds, fine-grained complexity has taken a new and very pragmatic perspective.
For the problem of $n$-variable \kSAT{3} the best known algorithm runs in $\bigO(2^n)$ time, and this bound has not been improved since 1970.
The idea is to take improvements on this problem as unlikely, known as the exponential-time hypothesis (\ETH)~\cite{Impagliazzo2001}. 
Formally, it asserts that there is no $2^{o(n)}$-time algorithm for \kSAT{3}.
\ETH~serves as a lower bound that is reduced to other problems~\cite{Lokshtanov2011}. 
An even stronger assumption about \SAT, called strong exponential-time hypothesis (\SETH)~\cite{Impagliazzo2001,Calabro2009}, and a similar one about \SetCov~\cite{Cygan2016} allow for lower bounds like the absence of $\bigOS((2-\delta)^n)$-time algorithms.

In this work, we contribute fine-grained complexity results for verification problems on concurrent programs.
The first problem (\LCR) is reachability for a leader thread that is interacting with an unbounded number of contributors~\cite{Hague11,Esparza13}.
We show that, assuming a parameterization by the size of the leader $\sizeL$ and the size of the data domain $\sizeM$, the problem can be solved in $\bigOS((\sizeL \cdot (\sizeM+1))^{\sizeL\cdot \sizeM} \cdot \sizeM^{\sizeM})$.
At the heart of the algorithm is a compression of computations into witnesses. 
To check reachability, our algorithm then iterates over candidates for witnesses and checks each of them for being a proper witness.  
Interestingly, we can formulate a variant of the algorithm that seems to be suited for large state spaces.

Using $\ETH$, we show that the algorithm is (almost) optimal. 
Moreover, the problem is shown to have a large number of hard instances.
Technically, there is no polynomial kernel~\cite{Bodlaender2009,Bodlaender2014}.
Experience with kernel lower bounds is still limited. 
This notion of hardness seems to indicate that symbolic methods are hard to apply here.
The lower bounds that we present share similarities with the reductions presented in \cite{Furbach2015,CantinLS03,GibbonsK97}.

If we consider the size $\sizeC$ of the contributors as a parameter, we obtain an $\bigOS(2^\sizeC)$ upper bound.
Our algorithm is based on dynamic programming.
We use the technique to solve a reachability problem on a graph that is shown to be a compressed representation for $\LCR$.
The compression is based on a saturation argument which is inspired by thread-modular reasoning \cite{FlanaganFQ02,FlanaganQ03,GotsmanBCS07,HolikMVW17}.
With the hardness assumption on $\SetCov$ we show that the algorithm is indeed optimal.
Moreover, we prove the absence of a polynomial kernel.

Parameterizations of $\LCR$ involving just a single parameter $\sizeM$ or $\sizeL$ are intractable.
We show that these problems are $\W[1]$-hard.
This proves the existence of an $\FPT$-algorithm for those parameterizations unlikely.

The second problem we study generalizes bounded context switching.
Bounded stage reachability (\BSR) asks whether a state is reachable if there is a bound $\nrs$ on the number of times the write permission is allowed to change between the threads~\cite{Atig14}.
Again, we show the new form of kernel lower bound.
The result is tricky and highlights the power of the computation model.

The results are summarized by the table below.
Main findings are highlighted in gray.
We present two new algorithms for $\LCR$.
Moreover, we suggest kernel lower bounds as hardness indicators for verification problems.
The corresponding lower bound for $\BSR$ is particularly difficult to achieve.
\begin{center}
	\begin{tabular}{ |m{1.9cm} | m{2.7cm}| m{2.7cm}| m{1.1cm} |}
		\hline
		\scriptsize{Problem} & \scriptsize{Upper Bound} & \scriptsize{Lower Bound} & \scriptsize{Kernel} \\ 
		\hline 
		\hline
		\scriptsize{$\LCR(\sizeM, \sizeL)$} &
		\cellcolor{black!15} \scriptsize{$\bigOS((\sizeL \cdot (\sizeM+1))^{\sizeL\cdot \sizeM} \cdot \sizeM^{\sizeM})$} & 
		\scriptsize{$2^{o(\sqrt{\sizeL \cdot \sizeM} \cdot \log (\sizeL \cdot \sizeM )}$} &
		\scriptsize{No poly.} \\
		\hline
		\scriptsize{$\LCR(\sizeC)$} &
		\cellcolor{black!15}\scriptsize{$\bigOS(2^{\sizeC})$} & 
		\scriptsize{$(2-\delta)^{\sizeC}$} &
		\scriptsize{No poly.} \\
		\hline
		\scriptsize{$\LCR(\sizeM) , \LCR(\sizeL)$} &
		\multicolumn{3}{c|}{\scriptsize{Intractable}} \\
		\hline
		\scriptsize{$\BSR(\sizeP, \nrt)$} & 
		\scriptsize{$\bigOS(\sizeP^{2\nrt})$} &
		\scriptsize{$2^{o(\nrt \cdot \log (\sizeP))}$} &
		\cellcolor{black!15}\scriptsize{No poly.} \\
		\hline
		\scriptsize{$\BSR(\nrs,\sizeM)$} &
		\multicolumn{3}{c|}{\scriptsize{Intractable}} \\
		\hline
	\end{tabular}
\end{center}

The paper at hand is the full version of \cite{Saivasan2018}.
It presents some new results.
This includes an improved algorithm for $\LCR$ running in $\bigOS(2^\sizeC)$ time instead of $\bigOS(4^\sizeC)$ and a new $(2-\delta)^\sizeC$ lower bound based on $\SetCov$.
Together, upper and lower bound show that the optimal algorithm for the problem has been found.
Moreover, we give proofs for the intractability of certain parameterizations of $\LCR$ and $\BSR$.
This justifies our choice of parameters.
Technical details can be found in the appendix of the paper.

\subsubsection*{Related work}
\label{Section:RelatedWork}

Concurrent programs communicating through a shared memory and having a fixed number of threads have been extensively studied \cite{Durand-Gasselin15,FortinMW17,HagueL12,AtigBQ09}. 
The leader contributor reachability problem as considered in this paper was introduced as parametrized reachability in \cite{Hague11}. 
In \cite{Esparza13}, it was shown to be $\NP$-complete when only finite state programs are involved and $\PSPACE$-complete for recursive programs. 
In \cite{Kahlon08}, the parameterized pairwise reachability problem was considered and shown to be decidable. 
Parameterized reachability under a variant of round robin scheduling was proven decidable in~\cite{TorreMP10}.

The bounded stage restriction on the computations of concurrent programs as considered here was introduced in \cite{Atig14}. 
The corresponding reachability problem was shown to be $\NP$-complete when only finite state programs are involved.
The problem remains in $\NEXP$-time and $\PSPACE$-hard for a combination of counters and a single pushdown. 
The bounded stage restriction generalizes the concept of bounded context switching from \cite{QadeerR05}, which was shown to be $\NP$-complete in that paper. 
In \cite{Chini2017}, $\FPT$-algorithms for bounded context switching were obtained under various parameterization. 
In \cite{AtigBT08}, networks of pushdowns communicating through a shared memory were analyzed under topological restrictions.

There have been few efforts to obtain fixed-parameter tractable algorithms for automata and verification-related problems. 
\FPT-algorithms for automata problems have been studied in \cite{Fernau2016,Fernau2015,Wareham2000}. 
In \cite{DemriLS02}, model checking problems for synchronized executions on parallel components were considered and proven intractable. 
In \cite{EneaF16}, the notion of conflict serializability was introduced for the TSO memory model and an \FPT-algorithm for checking serializability was provided. 
The complexity of predicting atomicity violation on concurrent systems was considered in~\cite{FarzanM09}. 
The finding is that \FPT-solutions are unlikely to exist.
In \cite{Esparza2014}, the problem of checking correctness of a program along a pattern is investigated.
The authors conduct an analysis in several parameters.
The results range from $\NP$-hardness even for fixed parameters to $\FPT$-algorithms.

\section{Preliminaries}
\label{Section:Prelim}

We introduce our model for programs, which is fairly standard~\cite{Atig14,Hague11,Esparza13}, and give the basics on fixed-parameter tractability.

\subsubsection*{Programs}
\label{Section:Prelim_Programs}

A program consists of finitely many threads that access a shared memory.
The memory is modeled to hold a single value at a time. 
Formally, a \emph{(shared memory) program} is a tuple $\asms = (\Domain, \initmem, (P_i)_{i \in [1..t]})$. 
Here, $\Domain$ is the data domain of the memory and $\initmem\in\Domain$ is the initial value.  
Threads are modeled as control-flow graphs that write values to or read values from the memory. 
These operations are captured by $\Ops{\Domain} = \Setcon{!a, ?a}{ a \in \Domain}$.
We use the notation $\Writes{\Domain} = \Setcon{!a}{a \in \Domain}$ for the write operations and $\Reads{\Domain} = \Setcon{?a}{a \in \Domain}$ for the read operations. 
A thread $P_{id}$ is a non-deterministic finite automaton  $(\Ops{\Domain}, Q, q^0, \delta)$ over the alphabet of operations. 
The set of states is $Q$ with $q^0\in Q$ the initial state.
The final states will depend on the verification task. 
The transition relation is $\delta \subseteq Q \times (\Ops{\Domain} \cup \{ \varepsilon \}) \times Q$. 
We extend it to words and also write $q \xrightarrow{w} q'$ for $q' \in \delta(q,w)$. 
Whenever we need to distinguish between different threads, we add indices and write $Q_{id}$ or $\delta_{id}$.

The semantics of a program is given in terms of labeled transitions between configurations. 
A \emph{configuration} is a pair $(\pc, a) \in (Q_1 \times \dots \times Q_t) \times \Domain$.
The program counter $\pc$ is a vector that shows the current state $\pcof{i}\in Q_i$ of each thread $P_i$.
Moreover, the configuration gives the current value in memory.  
We call \mbox{$c^0 = (\pc^0, \initmem)$} with $\pc^0(i)=q^0_i$ for all $i\in [1..t]$ the \emph{initial configuration}.
Let $C$ denote the set of all configurations. 
The program's transition relation among configurations $\rightarrow\ \subseteq C\times (\Ops{\Domain}\cup\setcon{\varepsilon})\times C$ is obtained by lifting the transition relations of the threads. 
To define it, let $\pc_1 = \pc[i=q_i]$, meaning thread $P_i$ is in state $q_i$ and otherwise the program counter coincides with $\pc$. 
Let $\pc_2=\pc[i=q_i']$.
If thread $P_i$ tries to read with the transition $q_i \xrightarrow{?a} q_i'$, then 
$(\pc_1, a)\xrightarrow{?a}(\pc_2, a)$. 
Note that the memory is required to hold the desired value. 
If the thread has the transition $q_i\xrightarrow{!b} q_i'$, then $(\pc_1, a)\xrightarrow{!b}(\pc_2, b)$. 
Finally, $q_i\xrightarrow{\varepsilon}q_i'$ yields $(\pc_1, a)\xrightarrow{\varepsilon}(\pc_2, a)$.
The program's transition relation is generalized to words, $c \xrightarrow{w} c'$. 
We call such a sequence of consecutive labeled transitions a \emph{computation}.
To indicate that there is a word justifying a computation from $c$ to $c'$, we write $c \rightarrow^* c'$.
We may use an index $\xrightarrow{w}_i$ to indicate that the computation was induced by~$P_i$.
Where appropriate, we use the program as an index, $\xrightarrow{w}_\asms$.

\subsubsection*{Fixed-Parameter Tractability}
\label{Section:Prelim_FPTBasic}

We wish to study the fine-grained complexity of safety verification problems for the above programs.
This means our goal is to identify parameters of these problems that satisfy two properties.
First, in practical instances they are small.
Second, assuming that these parameters are small, show that efficient verification algorithms can be obtained. 
\emph{Parametrized complexity} is a branch of complexity theory that makes precise the idea of being efficient relative to a parameter.

Fix a finite alphabet $\Sigma$.
A \emph{parameterized problem} $L$ is a subset of $\Sigma^*\times \Naturals$. 
The problem is called \emph{fixed-parameter tractable} if there is a deterministic algorithm that, given \mbox{$(x,k)\in\Sigma^*\times\Naturals$}, decides $(x,k)\in L$ in time $f(k)\cdot \abs{x}^{O(1)}$. 
We use $\FPT$ for the class of all such problems and say \emph{a problem is $\FPT$} to mean it is in that class. 
Note that $f$ is a computable function only depending on the parameter~$k$.
It is common to denote the runtime by $\bigOS(f(k))$ and suppress the polynomial part.
We will be interested in the precise dependence on the parameter, in upper and lower bounds on the function $f$.
This study is often referred to \mbox{as \emph{fine-grained complexity}.}

Lower bounds on $f$ are usually obtained from assumptions about $\SAT$.
The most famous is the \emph{Exponential Time Hypothesis} ($\ETH$).
It assumes that there is no algorithm solving $n$-variable $\kSAT{3}$ in $2^{o(n)}$ time.
Then, the reasoning is as follows:
If $f$ drops below a certain bound, $\ETH$ would fail.
Other standard assumptions for lower bounds are the \emph{Strong Exponential Time Hypothesis} ($\SETH$) and the hardness assumption of $\SetCov$.
We postpone the definition of the latter and focus on $\SETH$.
This assumption is more restrictive than $\ETH$.
It asserts that $n$-variable $\SAT$ cannot be solved in $\bigOS((2-\delta)^n)$ time for any $\delta > 0$.

While many parameterizations of $\NP$-hard problems were proven to be fixed-parameter tractable, there are problems that are unlikely to be $\FPT$.
Such problems are hard for the complexity class $\W[1]$.
For a theory of relative hardness, the appropriate notion of reduction is called \emph{parameterized reduction}.
Given parameterized problems $L, L'\subseteq\Sigma^*\times \Naturals$, we say that $L$ is \emph{reducible} to $L'$ via a \emph{parameterized reduction} if there is an algorithm that transforms an input $(x,k)$ to an input $(x',k')$ in time $g(k) \cdot \abs{x}^{O(1)}$ such that $(x,k) \in L$ if and only if $(x',k') \in L'$.
Here, $g$ is a computable function and $k'$ is computed by a function only dependent on $k$.

\section{Leader Contributor Reachability}
\label{Section:LCR}

We consider the \emph{leader contributor reachability problem} for shared memory programs. 
The problem was introduced in \cite{Hague11} and shown to be $\NP$-complete in \cite{Esparza13} for the finite state case.\footnote{The problem is called parameterized reachability in these works. 
We renamed it to avoid confusion with parameterized complexity.}  
We contribute two new verification algorithms that target two parameterizations of the problem.
In both cases, our algorithms establish fixed-parameter tractability. 
Moreover, with matching lower bounds we prove them to be optimal even in the fine-grained sense.

An instance of the leader contributor reachability problem is given by a shared memory program of the form $\asms = (\Domain,  \initmem, (P_L, (P_i)_{i \in [1..t]}))$. 
The program has a designated \emph{leader} thread $P_L$ and several \emph{contributor} threads $P_1, \dots, P_t$.
In addition, we are given a set of unsafe states for the leader.
The task is to check whether the leader can reach an unsafe state when interacting with a number of instances of the contributors. It is worth noting that the problem can be reduced to having a single contributor. 
Let the corresponding thread $P_C$ be the union of $P_1, \dots, P_t$ 
(constructed using an initial $\varepsilon$-transition).
We base our complexity analysis on this simplified formulation of the problem.

For the definition, let $\asms = (\Domain, \initmem, (P_L,P_C))$ be a program with two threads. Let \mbox{$F_L \subseteq Q_L$} be a set of unsafe states of the leader.
For any $t \in \Naturals$, define the program \mbox{$\asms^t = (\Domain, \initmem,  (P_L, (P_C)_{i\in[1..t]} ))$} to have exactly $t$ copies of $P_C$.
Further, let $C^f$ be the set of configurations where the leader is in an unsafe state (from~$F_L$).
The problem of interest is as follows:
\begin{myproblem}
	\problemtitle{Leader Contributor Reachability}
	\problemshort{($\LCR$)}
	\probleminput{ A program $\asms = (\Domain, \initmem, (P_L,P_C))$ and a set of states $F_L \subseteq Q_L$. }
	\problemquestion{ Is there a $t \in \Naturals$ such that $c^0 \rightarrow^*_{\asms^t} c$ for some $c \in C^f$?}
\end{myproblem}

We consider two parameterizations of $\LCR$.
First, we parameterize by $\sizeM$, the size of the data domain $\Domain$, and $\sizeL$, the number of states of the leader $P_L$.
We denote the parameterization by $\LCR(\sizeM, \sizeL)$.
The second parameterization that we consider is $\LCR(\sizeC)$, a parameterization by the number of states of the contributor $P_C$.
For both, $\LCR(\sizeM,\sizeL)$ and $\LCR(\sizeC)$, we present fine-grained analyses that include $\FPT$-algorithms as well as lower bounds for runtimes and kernels.

While for $\LCR(\sizeM, \sizeL)$ we obtain an $\FPT$-algorithm, it is not likely that $\LCR(\sizeM)$ and $\LCR(\sizeL)$ admit the same. 
We prove that these problems are $\W[1]$-hard.

\subsection{Parameterization by Memory and Leader}
\label{Section:LCR_Leader_Mem}

We give an algorithm that solves $\LCR$ in time $\bigOS( (\sizeL \cdot (\sizeM + 1))^{\sizeL \cdot \sizeM} \cdot \sizeM^{\sizeM} )$, which means $\LCR(\sizeM, \sizeL)$ is $\FPT$.
We then show how to modify the algorithm to solve instances of $\LCR$ as they are likely to occur in practice. 
Interestingly, the modified version of the algorithm lends itself to an efficient implementation based on off-the-shelf sequential model checkers.
We conclude with lower bounds for $\LCR(\sizeM, \sizeL)$.

\subsubsection*{Upper Bound}
\label{Section:LCR_Leader_Mem_Upper}

We give an algorithm for the parameterization $\LCR(\sizeM, \sizeL)$. 
The key idea is to compactly represent computations that may be present in an instance of the given program. 
To this end, we introduce a domain of so-called witness candidates.
The main technical result, Lemma~\ref{Lemma:LCR_Leader_Mem_Upper_Correctness}, links computations and witness candidates.
It shows that reachability of an unsafe state holds in an instance of the program if and only if there is a witness candidate that is valid (in a precise sense). 
With this, our algorithm iterates over all witness candidates and checks each of them for being valid.
To state the overall result, let $\Wit(\sizeL, \sizeM \,) = \left( \sizeL \cdot (\sizeM + 1) \right)^{\sizeL \cdot \sizeM} \cdot \sizeM^{\sizeM} \cdot \sizeL$ be the number of witness candidates and let $\Valid(\sizeL, \sizeM, \sizeC \,) = \sizeL^3 \cdot \sizeM^{\, 2} \cdot \sizeC^2$ be the time it takes to check validity of a candidate. Note that it is polynomial.

\begin{theorem}\label{Theorem:LCR_Leader_Mem_Upper}
	$\LCR$ can be solved in time $\bigO( \Wit(\sizeL, \sizeM \,) \cdot \Valid(\sizeL, \sizeM, \sizeC \,) )$.
\end{theorem}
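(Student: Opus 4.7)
The theorem is a direct consequence of the two-phase strategy the text previews: define a finite domain of \emph{witness candidates} whose cardinality is exactly $\Wit(\sizeL, \sizeM\,)$, define what it means for a candidate to be \emph{valid}, and check validity in polynomial time. Once these ingredients are in place, the algorithm simply enumerates all witness candidates and runs the validity check on each; correctness then reduces to Lemma~\ref{Lemma:LCR_Leader_Mem_Upper_Correctness}, which will assert that an unsafe leader state is reachable in some instance $\asms^t$ if and only if some witness candidate is valid.

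First, I would define a witness candidate as a triple. The first component is a target unsafe state $q_f \in F_L$, contributing the factor $\sizeL$. The second component is a function that, to every pair consisting of a leader state and a memory value, assigns either a successor pair produced by the next relevant leader step or a distinguished ``unused'' marker, giving the factor $(\sizeL \cdot (\sizeM+1))^{\sizeL \cdot \sizeM}$. The third component is a function $\sigma \colon \Domain \to \Domain$ that records, for each memory value $a$, the predecessor value a contributor must have read before producing $a$, contributing the factor $\sizeM^{\sizeM}$. Together these three pieces are intended to summarise the entire interaction pattern up to reaching $q_f$, quotienting out both the scheduling freedom and the duplicability of contributors.

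Second, I would prove the correctness lemma. The forward direction takes a successful computation of $\asms^t$ and extracts a witness by reading off the leader's traversal pattern and, for every value ever written by a contributor, the most recent value that contributor had to read before issuing the write. The backward direction is the technically harder one: given a valid candidate, one must fabricate a concrete computation. The idea is to instantiate a fresh copy of $P_C$ for each required write and let that copy replay the prefix prescribed by $\sigma$; since contributors are duplicable, enough independent helpers are available. The main obstacle will be scheduling these copies consistently with the leader and with one another, which is where the acyclicity of $\sigma$ and the small size of $\Domain$ are exploited.

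Third, I would specify the validity check. It must verify (i) that the leader component of the candidate corresponds to an actual trajectory in $P_L$, (ii) that $\sigma$ is acyclic so that contributor writes can be totally ordered, and (iii) that for every value $a$ flagged as producible, $P_C$ admits a computation that first reads $\sigma(a)$ and eventually writes $a$. Each of these is a reachability question in a product graph of size polynomial in $\sizeL$, $\sizeM$ and $\sizeC$, yielding the polynomial bound $\Valid(\sizeL, \sizeM, \sizeC\,) = \sizeL^3 \cdot \sizeM^{\,2} \cdot \sizeC^{2}$. Multiplying the number of candidates by the cost of a single validity check gives the claimed total runtime $\bigO(\Wit(\sizeL, \sizeM\,) \cdot \Valid(\sizeL, \sizeM, \sizeC\,))$.
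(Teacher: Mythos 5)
Your outer structure is right: enumerate a finite domain of witness candidates whose cardinality is $\Wit(\sizeL, \sizeM\,)$, decide validity in polynomial time $\Valid(\sizeL, \sizeM, \sizeC\,)$, and invoke a correctness lemma of the form of Lemma~\ref{Lemma:LCR_Leader_Mem_Upper_Correctness}. That is exactly the paper's strategy. But the concrete witness object you propose is not expressive enough, and the validity check you sketch could not be made to work.

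The paper's witness candidate is a \emph{word} $w \in \expr = ((Q_L.\Domainbot)^{\leq \sizeL}.\barS)^{\leq \sizeM}.Q_L$, and the sequential structure is doing essential work. It records the order in which first writes $\bar{c}_1, \ldots, \bar{c}_n$ occur, and it interleaves them with the (loop-compressed) portion of the leader's run that lies between consecutive first writes. This ordering is what lets requirement~(2) insist that a read by the leader at position $i$ may only consume symbols in $\barS(w,\pos(a_i))$, and what lets requirement~(3) demand a \emph{monotone embedding} $\mu$ of a contributor's reads into a \emph{prefix} $v$ of $w$ — i.e., that each contributor read is served by something the program made available \emph{before} that read happens. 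Your triple discards that structure. A function from $(q,a)$-pairs to successors defines at best a graph on $Q_L \times \Domain$; it does not record where the first writes fall relative to the leader's steps, so you cannot check that the leader reads only symbols already produced, and you cannot phase-align helper contributors with the leader. It also cannot represent the leader \emph{looping}: the paper deliberately keeps a state $q$ in the witness to stand for a loop $q \rightarrow \cdots \rightarrow q$ in $P_L$, and $\Loop(q,S)$ is then consulted to serve contributor reads. A deterministic successor function on $(q,a)$ has no way to say ``here the leader cycles indefinitely, writing these symbols as needed.''

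The most serious gap is the third component $\sigma \colon \Domain \to \Domain$. You want $\sigma(a)$ to be ``the predecessor value a contributor must have read before producing $a$,'' and you propose to check that $\sigma$ is acyclic and that $P_C$ has a run that reads $\sigma(a)$ and later writes $a$. But a contributor's run to its first write of $a$ may read an arbitrary \emph{sequence} of values $b_1, \ldots, b_k$, each of which must be available at the time it is read — some from earlier first writes, some from leader writes, some from leader loops. Collapsing this to a single predecessor per target symbol throws away both the multiplicity and the timing constraints, and acyclicity of a single-successor map does not certify that a consistent global schedule exists. (Concretely: two values $a$ and $a'$ could both have $\sigma(a) = \sigma(a') = b$ and the map be acyclic, yet the contributor runs for $a$ and $a'$ might each need to read $b$ \emph{and also} each other's output in an order that no schedule satisfies.) The paper avoids this by making the contributor check a full embedding of the contributor's read sequence into the ordered witness word, which is precisely the information $\sigma$ cannot carry. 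You would need to revise the witness to be a sequence that interleaves leader steps with first writes and supports the embedding-based contributor check; at that point you would have reconstructed $\expr$ and Definition~\ref{Definition:Validity}.
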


Let $\asms = (\Domain, \initmem, (P_L, P_C))$ be the program of interest and $F_L$ be the set of unsafe states in the leader.
Assume we are given a computation $\rho$ showing that $P_L$ can reach a state in $F_L$ when interacting with a number of contributors.
We explain the main ideas to find an efficient representation for $\rho$ that still allows for the reconstruction of a similar computation.
To simplify the presentation, we assume the leader never writes $!a$ and immediately reads $?a$ (same value). 
If this is the case, the read can be replaced by $\varepsilon$. 

In a first step, we delete most of the moves in $\rho$ that were carried out by the contributors. 
We only keep \emph{first writes}. 
For each value $a$, this is the write transitions $\fw(a) = c \xrightarrow{!a} c'$ where $a$ is written by a contributor for the first time.
The reason we can omit subsequent writes of $a$ is the following:
If $\fw(a)$ is carried out by contributor $P_1$, we can assume that there is an arbitrary number of other contributors that all mimicked the behavior of $P_1$.
This means whenever $P_1$ did a transition, they copycatted it right away.
Hence, there are arbitrarily many contributors pending to write $a$.
Phrased differently, the symbol $a$ is available for the leader whenever $P_L$ needs to read it.
The idea goes back to the \emph{Copycat Lemma} stated in \cite{Esparza13}.
The reads of the contributors are omitted as well.
We will make sure they can be served by the first writes and the moves done by $P_L$.

After the deletion, we are left with a shorter expression $\rho'$. 
We turn it into a word $w$ over the alphabet $Q_L \cup \Domainbot \cup \barS$ with $\Domainbot=\Domain\cup\setcon{\bot}$ and $\barS = \Setcon{\bar{a}}{ a \in \Domain}$.
Each transition $c \xrightarrow{!a / ?a / \varepsilon}_L c'$ in $\rho'$ that is due to the leader moving from $q$ to $q'$ is mapped (i) to $q.a.q'$ if it is a write and (ii) to $q.\bot.q'$ otherwise.
A first write $\fw(a) = c \xrightarrow{a} c'$ of a contributor is mapped to $\bar{a}$.
We may assume that the resulting word $w$ is of the form $w = w_1 . w_2$ with $w_1 \in ( ( Q_L . \Domainbot )^* . \barS )^*$ and $w_2 \in (Q_L . \Domainbot)^* . F_L$.
Note that $w$ can still be of unbounded length.

In order to find a witness of bounded length, we compress $w_1$ and $w_2$ to $w'_1$ and $w'_2$.
Between two first writes $\bar{a}$ and $\bar{b}$ in $w_1$, the leader can perform an unbounded number of transitions, represented by a word in $( Q_L . \Domainbot )^*$.
Hence, there are states $q \in Q_L$ repeating between $\bar{a}$ and $\bar{b}$.
We contract the word between the first and the last occurrence of $q$ into just a single state $q$.
This state now represents a loop on $P_L$. 
Since there are $\sizeL$ states in the leader, this bounds the number of contractions.  
Furthermore, we know that the number of first writes is bounded by $\sizeM$, each symbol can be written for the first time at most once. 
Thus, the compressed string $w'_1$ is a word in the language $( (Q_L.\Domainbot)^{\leq \sizeL} . \barS )^{\leq \sizeM}$.

The word $w_2$ is of the form $w_2 = q.u$ for a state $q \in Q_L$ and a word $u$.
We truncate the word $u$ and only keep the state $q$.
Then we know that there is a computation leading from $q$ to a state in $F_L$ where $P_L$ can potentially write any symbol but read only those symbols which occurred as a first write in $w'_1$.
Altogether, we are left with a word of bounded length.

\begin{definition}
	The set of witness candidates is $\expr = ( (Q_L.\Domainbot)^{\leq \sizeL} . \barS )^{\leq \sizeM} . Q_L$.
\end{definition}

Before we elaborate on the precise relation between witness candidates and computations, we turn to an example.
It shows how an actual computation is compressed to a witness candidate following the above steps.

\begin{example}\label{Example:ComputationToWitness}
	Consider the program $\asms = (\Domain, \initmem, (P_L,P_C))$ with domain $\Domain$, leader thread $P_L$, and contributor thread $P_C$ given in Figure \ref{Figure:CompressComputation}.
	We follow a computation in $\asms^2$ that reaches the unsafe state $q_4$ of the leader.
	Note that the transitions are labeled by $L$ and $C$, depending on whether the leader or a contributor moved.
	\begin{align*}
		(q_0,p_0,p_0,a^0) \xrightarrow{\textcolor{red}{!a}}_C (q_0,p_1,p_0,a) &\xrightarrow{?a}_L (q_1,p_1,p_0,a) \xrightarrow{!b}_L \\ 
		(q_2,p_1,p_0,b) \xrightarrow{?b}_C (q_2,p_1,p_2,b) &\xrightarrow{\textcolor{red}{!c}}_C (q_2,p_1,p_2,c) \xrightarrow{?c}_L \\ 
		(q_3,p_1,p_2,c) \xrightarrow{!a}_C (q_3,p_1,p_2,a) &\xrightarrow{?a}_L (q_4,p_1,p_2,a).
	\end{align*}
	
	We construct a witness candidate out of the computation.
	To this end, we only keep the first writes of the contributors.
	These are the write $!a$ in the first transition and the write $!c$ in the fifth transition. 
	Both are marked red.
	They will be represented in the witness candidate by the symbols $\bar{a}, \bar{c} \in \bar{\Domain}$.
	
	Now we map the transitions of the leader to words.
	Writes are preserved, reads are mapped to $\bot$.
	Then we obtain the witness candidate
	\begin{align*}
		\bar{a} \, . \, q_0 \, . \, \bot \, . \, q_1 \, . \, b \, . \, \bar{c} \, . \, q_2 .
	\end{align*}
	
	Note that we omit the last two transitions of the leader.
	The reason is as follows.
	After the first write $\bar{c}$, the leader is in state $q_2$.
	From this state, the leader can reach $q_4$ while only reading from first writes that have already appeared in the candidate, namely $a$ and $c$.
	Hence, we can truncate the witness candidate at that point and do not have to keep the remaining computation to~$q_4$.
\end{example}

\begin{figure}
	\begin{center}
		\begin{tikzpicture}
			\node[state, draw, inner sep = 2pt, minimum size = 0pt, initial, initial text = ] (q0) {$q_0$};
			\node[circle, draw, inner sep = 2pt, right of = q0, node distance = 1.25cm] (q1) {$q_1$};
			\node[circle, draw, inner sep = 2pt, right of = q1, node distance = 1.25cm] (q2) {$q_2$};
			\node[circle, draw, inner sep = 2pt, right of = q2, node distance = 1.25cm] (q3) {$q_3$};
			\node[circle, draw, inner sep = 2pt, right of = q3, node distance = 1.25cm, accepting] (q4) {$q_4$};
			
			\node[left of = q0, node distance = 0.75cm, yshift = 0.5cm] {$P_L$};
			
			\path[->, shorten >= 1pt] (q0) edge 
				node[above] {$?a$}
			(q1);
			
			\path[->, shorten >= 1pt] (q1) edge[bend left]
				node[above] {$!b$}
			(q2);
			\path[->, shorten >= 1pt] (q1) edge[bend right]
				node[below] {$\varepsilon$}
			(q2);
			
			\path[->, shorten >= 1pt] (q2) edge 
				node[above] {$?c$}
			(q3);
			
			\path[->, shorten >= 1pt] (q3) edge 
				node[above] {$?a$}
			(q4);
			
			\node[circle, draw, inner sep = 2pt, right of = q4, node distance = 2.5cm, initial, initial text = ] (p0) {$p_0$};
			\node[circle, draw, inner sep = 2pt, right of = p0, node distance = 1.5cm, yshift = 0.5cm] (p1) {$p_1$};
			\node[circle, draw, inner sep = 2pt, below of = p1] (p2) {$p_2$};
			
			\node[left of = p0, node distance = 0.75cm, yshift = 0.5cm] {$P_C$};
			
			\path[->, shorten >= 1pt] (p0) edge
				node[above] {$!a$}
			(p1);
			
			\path[->, shorten >= 1pt] (p1) edge[loop right]
				node[right] {$!a$}
			(p1);
			
			\path[->, shorten >= 1pt] (p0) edge
				node[below] {$?b$}
			(p2);
			
			\path[->, shorten >= 1pt] (p2) edge[loop right]
				node[right] {$!c$}
			(p2);
		\end{tikzpicture}
	\end{center}
	\caption{Leader thread $P_L$ (left) and contributor thread $P_C$ (right) over the data domain $\Domain = \setcon{a^0, a, b, c}$.
		The only unsafe state is given by $F_L = \setcon{q_4}$.}
	\label{Figure:CompressComputation}
\end{figure}
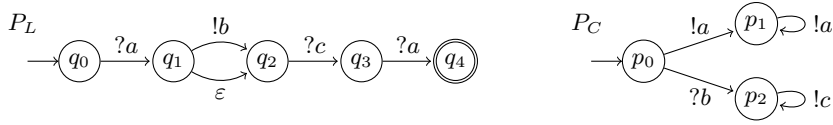

To characterize computations in terms of witness candidates, we define the notion of validity. 
This needs some notation.
Consider a word $w = w_1 \dots w_\ell$ over some alphabet~$\Gamma$.
For $i \in [1..\ell]$, we set $w[i] = w_i$ and $w[1..i] = w_1 \dots w_i$.
If $\Gamma' \subseteq \Gamma$, we use $\prjauto{w}{\Gamma'}$ for the projection of $w$ to the letters in $\Gamma'$. 

Consider a witness candidate $w \in \expr$ and let $i \in [1..\abs{w}]$.
We use $\barS(w, i)$ for the set of all first writes that occurred in $w$ up to position $i$. 
Formally, we define it to be \mbox{$\barS(w,i) = \Setcon{ a }{ \bar{a} \text{ is a letter in } \prjauto{w[1..i]}{\barS} }$}.
We abbreviate  $\barS(w, \abs{w})$ as $\barS(w)$.
Let $q \in Q_L$ and $S \subseteq \Domain$.
Recall that the state represents a loop in $P_L$. 
The set of all letters written within a loop from $q$ to $q$ when reading only symbols from $S$ is 
\mbox{$\Loop(q,S) = \Setcon{a}{ a \in \Domain \text{ and } \exists v_1, v_2 \in (\Writes{\Domain} \cup \Reads{S})^* : q \xrightarrow{v_1 !a v_2}_L q }$.}

The definition of validity is given next.
Technical details of the three requirements are made precise in the text below.
\begin{definition}\label{Definition:Validity}
	A witness candidate $w \in \expr$ is \emph{valid} if it satisfies the following properties: (1) First writes are unique. (2) The word $w$ encodes a run on $P_L$. (3) There are supportive computations on the contributors. 
\end{definition}

(1) If $\prjauto{w}{\barS} \, = \bar{c_1} \dots \bar{c_\ell}$, then the $\bar{c_i}$ are pairwise different.

(2) Let $\prjauto{w}{Q_L \cup \Domainbot} = q_1 a_1 q_2 a_2 \dots a_\ell q_{\ell+1}$. 
If $a_i \in \Domain$, then $q_i \xrightarrow{!a_i}_L q_{i+1} \in \delta_L$ is a write transition of $P_L$. 
If $a_i = \bot$, then we have an $\varepsilon$-transition \mbox{$q_i \xrightarrow{\varepsilon}_L q_{i+1}$.} 
Alternatively, there is a read $q_i \xrightarrow{?a}_L q_{i+1}$ of a symbol $a \in \barS(w, \pos(a_i))$ that already occurred within a first write (the leader does not read its own writes). 
Here, we use $\pos(a_i)$ to access the position of $a_i$ in $w$. 
State $q_1=q^0_L$ is initial.
There is a run from $q_{\ell + 1}$ to a state $q_f \in F_L$. 
During this run, reading is restricted to symbols that occurred as first writes in $w$.
Formally, there is a word \mbox{$v \in (\Writes{\Domain} \cup \Reads{\barS(w)})^*$} leading to an unsafe state $q_f$.
We have $q_{\ell+1} \xrightarrow{v}_L q_f$.

(3) For each prefix $v\bar{a}$ of $w$ with $\bar{a} \in \barS$ there is a computation $q^0_C \xrightarrow{u !a}_C q$ on $P_C$ so that the reads in $u$ can be obtained from $v$. 
Formally, let $u' = \prjauto{u}{\Reads{\Domain}}$.
Then there is an embedding of $u'$ into $v$, a monotone map $\mu : [1..\abs{u'}] \rightarrow [1..\abs{v}]$ that satisfies the following. 
Let $u'[i] =\ ?a$ with $a \in \Domain$. 
The read is served in one of the following three ways. 
We may have $v[\mu(i)] = a$, which corresponds to a write of $a$ by $P_L$. 
Alternatively, $v[\mu(i)] = q \in Q_L$ and $a \in \Loop(q,\barS(w,\mu(i)))$. 
This amounts to reading from a leader's write that was executed in a loop.
Finally, we may have $a \in \barS(w, \mu(i))$, corresponding to reading from another contributor.

Our goal is to prove that a valid witness candidate exists if and only if there is a computation leading to an unsafe state.
Before we state the corresponding lemma, we provide some intuition for the three requirements along an example.

\begin{example}
	Reconsider the program $\asms$ from Figure \ref{Figure:CompressComputation}.
	We elaborate on why the three requirements for validity are essential.
	To this end, we present three witness candidates, each violating one of the requirements. 
	Thus, these candidates cannot correspond to an actual computation of the program.
	
	The witness candidate $w_1 = \bar{a} \, . \, q_0 \, . \, \bot \, . \, q_1 \, . \, b \, . \, \bar{a} \, . \, q_2$ clearly violates requirement~(1) due to the repetition of $\bar{a}$.
	Since first writes are unique there cannot exist a computation of program $\asms$ following candidate $w_1$.
	
	Requirement (2) asks for a proper run on the leader thread $P_L$.
	Hence, the witness candidate $w_2 = \bar{a} \, . \, q_0 \, . \, a \, . \, q_1 \, . \, b \, . \, \bar{c} \, . \, q_2$ violates the requirement although it satisfies (1).
	The subword $q_0 \, . \, a \, . \, q_1$ of $w_2$ encodes that the leader should take the transition $q_0 \xrightarrow{!a}_L q_1$.
	But this transition does not exist in $P_L$.
	Consequently, there is no computation of $\asms$ which corresponds to the witness candidate $w_2$.
	
	For requirement (3), consider the candidate $w_3 = \bar{a} \, . \, q_0 \, . \, \bot \, . \,q_1 \, . \, \bot \, . \, \bar{c} \, . \, q_2$.
	It clearly satisfies (1).
	Requirement (2) is also fulfilled.
	In fact, the subwords encoding transitions of the leader are $q_0 \, . \, \bot \, . \, q_1$ and $q_1 \, . \, \bot \, . \, q_2$.
	The first subword corresponds to transition $q_0 \xrightarrow{?a}_L q_1$ which can be takes since $a$ already appeared as a first write in~$w_3$.
	The second subword refers to the transition $q_1 \xrightarrow{\varepsilon}_L q_2$.
	
	To explain that $w_3$ does not satisfy requirement (3), we show that $c$ cannot be provided as a first write.
	To this end, assume that $w_3$ satisfies (3).
	Then, for the prefix $v . \bar{c}$ with $v = \bar{a} \, . \, q_0 \, . \, \bot \, . \,q_1 \, . \, \bot$, there is a computation of the form $p_0 \xrightarrow{u!c}_C p_2$.
	The reads in $u$ are either first writes in $v$ or writes provided by the leader (potentially in loops).
	Symbol $b$ is not provided as such: It is neither a first write in $v$ nor a symbol written by the leader (in a loop) along $v$.
	However, a computation $u$ leading to state $p_2$ in $P_C$ needs to read $b$ once.
	Hence, such a computation does not exist and $c$ cannot be provided as a first write.

	The witness candidate $w = \bar{a} \, . \, q_0 \, . \, \bot \, . \, q_1 \, . \, b \, . \, \bar{c} \, . \, q_2$ from Example \ref{Example:ComputationToWitness} satisfies all the requirements.
	In particular (3) is fulfilled since $b$ is written by the leader in the transition $q_1 \xrightarrow{!b} q_2$.
	Hence, in this case, $c$ can be provided as a first write.
\end{example}

\begin{lemma}\label{Lemma:LCR_Leader_Mem_Upper_Correctness}
	There is a $t \in \Naturals$ so that $c^0\rightarrow^*_{\asms^t}c$ with $c\in C^f$ if and only if there is a valid witness candidate $w \in \expr$.
\end{lemma}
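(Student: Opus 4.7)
I would prove the biconditional by two separate constructions, matching the informal compression and expansion described above.

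For the forward direction, suppose $c^0 \rightarrow^*_{\asms^t} c$ with $c \in C^f$ via some computation $\rho$. I compress $\rho$ in four steps. First, for each $a \in \Domain$ that some contributor writes during $\rho$, I keep only the first such write $\fw(a)$ and drop all later contributor writes, together with every contributor read. Second, I project onto a word over $Q_L \cup \Domainbot \cup \barS$ by encoding each leader transition $q \moves{!a/?a/\varepsilon} q'$ as $q.a.q'$ or $q.\bot.q'$ and each surviving first write as $\bar{a}$. Third, between any two consecutive $\barS$-letters the leader performs a word in $(Q_L . \Domainbot)^*$; as soon as more than $\sizeL$ leader states occur there, pigeonhole gives a repeating state, and I contract the infix between its first and last occurrence into a single copy, which will later represent a loop. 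Fourth, I truncate after the last $\bar{a}$-letter, keeping only the starting leader state of the trailing segment. The resulting word lies in $\expr$. Uniqueness of first writes is immediate from the definition of $\fw$; requirement~(2) holds because the leader block structure is preserved and the truncated tail run is exactly the suffix of $\rho$ restricted to leader moves, which reads only symbols that already occurred as first writes; requirement~(3) holds by taking, for each $\bar{a}$, the contributor of $\rho$ that performed $\fw(a)$ and embedding its reads into the prefix of $w$ induced by the corresponding prefix of~$\rho$.

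For the backward direction, let $w \in \expr$ be valid. I build a computation in $\asms^t$ for sufficiently large $t$ using the copycat strategy of~\cite{Esparza13}: for each $\bar{a}$ appearing in $w$, requirement~(3) gives a contributor run $q^0_C \xrightarrow{u_a !a}_C q_a$, and I schedule many independent copies of $P_C$ that execute $u_a !a$ in lockstep up to the $!a$. I traverse $w$ from left to right, simulating each symbol in turn: a leader block $q.x.q'$ with $x \in \Domain \cup \setcon{\bot}$ triggers the corresponding transition of $P_L$ from requirement~(2); a contracted state $q$ triggers a loop $q \rightarrow^*_L q$ whose label lies in $(\Writes{\Domain} \cup \Reads{\barS(w,\pos(q))})^*$, which exists by the definition of $\Loop$; a letter $\bar{a}$ triggers one contributor from the pool for $a$ firing its final $!a$, while further copycats in the same pool remain waiting to rewrite $a$ on demand. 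Finally, I append the trailing leader run $q_{\ell+1} \rightarrow^* q_f \in F_L$ from requirement~(2), whose reads lie in $\barS(w)$ and are served by the pools already scheduled. Each read inside some $u_a$ is placed according to the monotone embedding $\mu$ and is served either by a leader write appearing literally in $w$, by a write inside a leader loop I have just expanded, or by a previous first write from another pool.

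The main obstacle is the scheduling in the backward direction: I must serve every read of every contributor and every read of the leader \emph{consistently in time} from a shared cell that only holds the last written value. Two issues arise. First, several contributors may need to read the same symbol at non-adjacent moments; I handle this by keeping arbitrarily many synchronized copycats per pool, so that a fresh writer of each first write is always available on demand. Second, the leader may have to re-enter the same contracted loop state more than once, to serve both a $\Loop$-read demanded by some $u_a$ and its own later read or write; I handle this by unrolling each contracted state into as many copies of the witnessing loop as the embeddings $\mu$ require. With these two provisions, the linear order in $w$ together with the embeddings $\mu$ pins down a concrete interleaving, and a straightforward induction on the length of $w$ shows it is a valid computation of $\asms^t$ reaching $C^f$.
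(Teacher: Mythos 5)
Your proposal follows essentially the same two-directional construction as the paper's proof: compress a computation into a candidate by retaining first writes, projecting leader moves to blocks, contracting repeated states via pigeonhole, and truncating after the last first write; and, conversely, expand a valid candidate into a computation using synchronized copycat pools per first write, unrolling leader loops to serve the $\Loop$-reads demanded by the embeddings $\mu$. The one place where your sketch undersells the difficulty is the closing ``straightforward induction'' in the backward direction, which in the paper requires a concrete finite bound on the number of contributors per pool together with three maintained invariants (a monotonicity condition tying $\mu_i$ to the current position, a quantitative lower bound on remaining pool sizes, and lockstep synchrony within pools); your text identifies both obstacles that these invariants resolve, so the approach is the same, but the induction itself is the substantial part of the argument rather than a routine step.
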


Our algorithm iterates over all witness candidates $w \in \expr$ and tests whether $w$ is valid.
The number of candidates $\Wit(\sizeL, \sizeM )$ is $\left( \sizeL \cdot (\sizeM + 1) \right)^{\sizeL \cdot \sizeM} \cdot \sizeM^{\sizeM} \cdot \sizeL$.
This is due to the fact that we can force a witness candidate to have maximum length via inserting padding symbols.
Hence, the number of candidates constitutes the first factor of the complexity estimation stated in Theorem $\ref{Theorem:LCR_Leader_Mem_Upper}$.
The polynomial factor $\Valid(\sizeL, \sizeM, \sizeC)$ is due to the following lemma.

\begin{lemma}\label{Lemma:LCR_Leader_Mem_Upper_Validity}
	Validity of $w \in \expr$ can be checked in time $\bigO( \sizeL^3 \cdot \sizeM^{\, 2} \cdot \sizeC^{\, 2} \, )$. 
\end{lemma}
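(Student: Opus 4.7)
The plan is to verify each of the three clauses of Definition~\ref{Definition:Validity} by a dedicated polynomial-time routine. A key observation used throughout is that a witness candidate has bounded length: by the grammar defining $\expr$, $|w| \leq (2\sizeL + 1)\sizeM + 1 = \bigO(\sizeL \cdot \sizeM)$. Thus linear passes through $w$ are cheap; the real work happens in the auxiliary analyses of $P_L$ and $P_C$.

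Clause (1) is decided by a single scan of $\prjauto{w}{\barS}$ using a bit vector over $\Domain$. For clause (2), we traverse $\prjauto{w}{Q_L \cup \Domainbot}$ triple by triple and verify each encoded transition against $\delta_L$; reads encoded by $\bot$ are justified by some transition $q_i \xrightarrow{?a}_L q_{i+1}$ with $a \in \barS(w, \pos(a_i))$, for which the cumulative sets $\barS(w, \cdot)$ are precomputed in one extra scan. The tail condition, that $q_{\ell+1}$ reaches some $q_f \in F_L$ via operations in $\Writes{\Domain} \cup \Reads{\barS(w)} \cup \{\varepsilon\}$, is then a single BFS in the restricted sub-automaton of $P_L$, costing $\bigO(|\delta_L|)$.

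Clause (3) is the delicate one. We first precompute $\Loop(q, S)$ for every $q \in Q_L$ and every distinct prefix set $S = \barS(w, j)$ (at most $\sizeM + 1$ of them). For each such $S$, take the sub-automaton of $P_L$ induced by $\Writes{\Domain} \cup \Reads{S} \cup \{\varepsilon\}$, compute its SCCs, and record for every SCC the set of symbols written inside it; a state $q$ is then mapped to $\Loop(q, S)$ in constant time. Then, for each prefix $v\bar{a}$ of $w$ carrying a first write (at most $\sizeM$ many), existence of a supporting run $q^0_C \xrightarrow{u!a}_C q$ with a monotone $\mu$ reduces to reachability in a product graph on $Q_C \times [0..|v|]$: writes and $\varepsilon$-moves of $P_C$ keep the second coordinate fixed, a dedicated skip edge advances $j$ to $j+1$, and a read edge for $p \xrightarrow{?b}_C p'$ is installed at position $j$ whenever $b$ is provided at $j$ (directly by a leader write $v[j]=b$, via $b \in \barS(w,j)$, or through $b \in \Loop(v[j], \barS(w,j))$ when $v[j]$ is a leader state). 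A single BFS from $(q^0_C, 0)$ to any node whose $P_C$-component offers an outgoing $!a$ transition settles the prefix.

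The main obstacle is to argue that the monotonicity of the embedding $\mu$ is faithfully captured by the product construction: the skip edges allow $j$ to advance arbitrarily, and because the cumulative set of provided symbols grows with $j$, no supporting computation is lost when we insist on handling each read at its current value of $j$. Once this is in place, summing the costs of the $\Loop$ precomputation and the $\bigO(\sizeM)$ product-reachability calls, each on a graph whose size is polynomial in $\sizeL$, $\sizeM$, $\sizeC$, yields the claimed $\bigO(\sizeL^3 \cdot \sizeM^2 \cdot \sizeC^2)$ bound; the remaining arithmetic is routine.
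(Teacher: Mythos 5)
Your proposal matches the paper's argument: properties (1) and (2) are checked by direct scans and a reachability query on $P_L$, and property (3) is reduced to reachability in a product graph over $Q_C \times$ positions of $v$, with skip edges advancing the position and read edges installed exactly when the read symbol is provided (by a leader write, a prior first write, or via $\Loop$). The only cosmetic differences are that you spell out the SCC-based precomputation of $\Loop$, which the paper leaves implicit, and you run one BFS per first-write prefix rather than a single query on one automaton $N$; both give the stated polynomial bound.
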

\subsubsection*{Practical Algorithm}
\label{Section:LCR_Leader_Mem_Upper_Practical}

We improve the above algorithm so that it should work well on practical instances.
The idea is to factorize the leader along its \emph{strongly connected components} (SCCs), the number of which is assumed to be small in real programs. 
Technically, our improved algorithm works with \emph{valid SCC-witnesses}. 
They symbolically represent SCCs rather than loops in the leader.   
To state the complexity, we first define the \emph{straight line depth}, the number of SCCs the leader may visit during a computation.
The definition needs a graph construction.

Let $\validw \subseteq \barS^{\leq \sizeM}$ contain only words that do not repeat letters. 
Take an element \mbox{$r = \bar{c}_1 \dots \bar{c}_\ell \in \validw$} and let $i \in [0..\ell]$. By $\prjauto{P_L}{i}$ we denote the automaton obtained from $P_L$ by removing all transitions that read a value outside $\setcon{c_1, \dots, c_i}$.
Let $\scc(\prjauto{P_L}{i})$ denote the set of all SCCs in this automaton. 
We construct the directed graph $\sccg(P_L,r)$ as follows. 
The vertices are the SCCs of all $\prjauto{P_L}{i}$ where $i \in [0..\ell]$. 
There is an edge between $S, S' \in \scc(\prjauto{P_L}{i})$, if there are states $q \in S, q' \in S'$ with $q \xrightarrow{?a / !a / \varepsilon} q'$ in $\prjauto{P_L}{i}$.
If \mbox{$S \in \scc(\prjauto{P_L}{i-1})$} and $S' \in \scc(\prjauto{P_L}{i})$, we only get an edge if we can get from $S$ to $S'$ by reading $c_i$.
Note that the resulting graph is acyclic. 

The depth $d(r)$ of $P_L$ relative to $r$ is the length of the longest path in $\sccg(P_L,r)$. 
The \emph{straight line depth} is 
$\sld = \max \Setcon{d(r)}{ r \in \validw}$. 
The \emph{number of SCCs} $\scccnt$ is the size of $\scc(\prjauto{P_L}{0})$. 
With these values at hand, the number of SCC-witness candidates (the definition of which can be found in Appendix \ref{Appendix_LCR_Leader_Mem}) can be bounded by $\WitSCC(\scccnt, \sizeM, \sld) \leq (\scccnt \cdot (\sizeM + 1))^{\sld} \cdot \sizeM^{\sizeM} \cdot 2^{\sizeM + \sld}$. 
The time needed to test whether a candidate is valid is $\ValidSCC(\sizeL, \sizeM, \sizeC, \sld) = \sizeL^2 \cdot \sizeM \cdot \sizeC^2 \cdot \sld^2$.

\begin{theorem}\label{Theorem:LCR_Leader_Mem_Upper_SCC}
	$\LCR$ can be solved in time $\bigO(\WitSCC(\scccnt, \sizeM, \sld) \cdot \ValidSCC(\sizeL, \sizeM, \sizeC, \sld))$.
\end{theorem}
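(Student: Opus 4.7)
The plan is to follow the structure of the proof of Theorem~\ref{Theorem:LCR_Leader_Mem_Upper}, but replace the compression that contracts leader loops to single states by a coarser compression that uses entire SCCs as atomic units. Concretely, I would introduce SCC-witness candidates (formally defined in Appendix~\ref{Appendix_LCR_Leader_Mem}) that, in place of a loop-state $q \in Q_L$, carry an SCC $S \in \scc(\prjauto{P_L}{i})$, where $i$ counts how many first writes have already been announced. The skeleton of such a candidate alternates first writes $\bar{c}_i \in \barS$ with SCC labels and, between consecutive SCC labels, a single leader transition of $\sccg(P_L,r)$. The decisive structural fact that bounds the candidate space is that the SCC-labels along a candidate trace out a path in the acyclic graph $\sccg(P_L,r)$, and the length of any such path is at most $\sld$.

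Next I would prove the analogue of Lemma~\ref{Lemma:LCR_Leader_Mem_Upper_Correctness}: an unsafe state is reachable in some $\asms^t$ if and only if there is a valid SCC-witness candidate. The forward direction collapses a reaching computation of $\asms^t$ by absorbing every maximal leader segment that stays inside one SCC into a single SCC label; inside one SCC the leader can freely revisit states, so all writes occurring along the segment are captured by an SCC-lifted analogue of the $\Loop$ predicate relative to the set of symbols already announced as first writes. The reverse direction expands the candidate into a concrete computation by picking, for each SCC label, an internal path that realises the required writes, and by invoking the Copycat principle to generate as many contributors as needed to service the announced first writes.

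For the complexity estimate I would count SCC-witness candidates as follows. The ordering $r \in \validw$ of first writes contributes the $\sizeM^{\sizeM}$ factor. Once $r$ is fixed, the SCC-labels must lie on a path of length at most $\sld$ in $\sccg(P_L,r)$, and at each of these at most $\sld$ positions the label is drawn from at most $\scccnt$ SCCs alternating with memory symbols from $\Domainbot$, yielding the factor $(\scccnt \cdot (\sizeM+1))^{\sld}$. The remaining $2^{\sizeM + \sld}$ factor absorbs the binary choices needed to mark which slots are active versus padded, giving the bound $\WitSCC(\scccnt,\sizeM,\sld)$. Validity of a candidate reduces to the three local requirements already used for $\expr$: uniqueness of first writes, existence of a leader run threading the SCCs while respecting the available read symbols, and existence of supportive contributor computations via a monotone embedding of reads into the candidate. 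Each check ranges over at most $\sld$ positions and involves polynomial-time queries on $P_L$ and $P_C$, matching the polynomial bound $\ValidSCC(\sizeL,\sizeM,\sizeC,\sld)$. Multiplying the two factors yields the claimed runtime.

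The main obstacle I expect is the SCC-analogue of the $\Loop$ predicate: within a single SCC $S \in \scc(\prjauto{P_L}{i})$, one must characterise precisely which symbols the leader can write on some closed walk while reading only symbols in $\setcon{c_1,\dots,c_i}$, and this characterisation must be robust enough that when the candidate is expanded back to a computation, all contributor reads promised along the way can actually be served. Once this SCC-local analysis is in place, the rest of the argument mirrors the proof of Theorem~\ref{Theorem:LCR_Leader_Mem_Upper} almost verbatim.
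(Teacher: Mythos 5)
Your proposal follows the same overall algorithmic skeleton as the paper — enumerate SCC-witness candidates over $\sccexpr$, bound their number via the acyclicity of $\sccg(P_L,r)$ and the straight-line depth $\sld$, and check each for validity in polynomial time — and your counting of $\WitSCC$ and your polynomial estimate for $\ValidSCC$ match. Where you diverge is in the correctness argument: you propose to prove directly that a valid SCC-witness candidate exists iff some $\asms^t$ can reach an unsafe state, which would require re-running the full inductive reconstruction of a computation (with the Copycat bookkeeping, the invariants on $\funnyp(i)$, etc.) at the SCC level. The paper instead proves a purely syntactic transfer lemma (Lemma~\ref{Lemma:PRSCCCorrectness}) showing that a valid SCC-witness candidate exists iff a valid ordinary witness candidate exists, and then simply composes with Lemma~\ref{Lemma:LCR_Leader_Mem_Upper_Correctness}; the forward direction replaces each maximal same-SCC segment by its SCC label and transports the embedding $\mu$ across the contraction, and the backward direction expands each SCC to a shortest path between entry and exit states. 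This relaying trick is shorter and avoids repeating the delicate semantic induction. Your direct route should still work — the SCC analogue of $\Loop$ that you flag as the main obstacle is exactly the $\Domain(\smscc)$ predicate the paper uses, and the key observation that makes both directions go through is that $a\in\Loop(q,\barS(w,i))$ for $q\in\smscc$ implies $a\in\Domain(\smscc)$ — but if you take the direct route, be prepared to redo the contributor-supply invariants of the long induction rather than inheriting them for free.
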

For this algorithm, what matters is that the leader's state space is strongly connected. 
The number of states has limited impact on the runtime.

\subsubsection*{Lower bound}
\label{Section:LCR_Leader_Mem_Lower}

We prove that the algorithm from Theorem \ref{Theorem:LCR_Leader_Mem_Upper} is only a root-factor away from being optimal:
A $2^{o( \sqrt{\sizeL \cdot \sizeM} \cdot \log( \sizeL \cdot \sizeM ) )}$-time algorithm for $\LCR$ would contradict $\ETH$.
We achieve the lower bound by a reduction from $\kkClique$, the problem of finding a clique of size $k$ in a graph the vertices of which are elements of a $k \times k$ matrix.
Moreover, the clique has to contain one vertex from each row.
Unless $\ETH$ fails, the problem cannot be solved in time $2^{o(k \cdot \log(k))}$ \cite{Lokshtanov2011}.

Technically, we construct from an instance $(G,k)$ of $\kkClique$ an instance $(\asms = (\Domain, \initmem, (P_L, P_C)), F_L)$ of $\LCR$ such that $\sizeM = \bigO(k)$ and $\sizeL = \bigO(k)$.
Furthermore, we show that $G$ contains the desired clique of size $k$ if and only if there is a $t \in \Naturals$ such that $c^0 \rightarrow^*_{\asms^t} c$ with $c \in C^f$.
Suppose we had an algorithm for $\LCR$ running in time $2^{o( \sqrt{\sizeL \cdot \sizeM} \cdot \log( \sizeL \cdot \sizeM ) )}$.
Combined with the reduction, this would yield an algorithm for $\kkClique$ with runtime $2^{o(\sqrt{k^2} \cdot \log (k^2))} = 2^{o(k \cdot \log k)}$.
But unless the exponential time hypothesis fails, such an algorithm cannot exist.

\begin{proposition}\label{Proposition:LCR_Leader_Mem_Lower}
	$\LCR$ cannot be solved in time $2^{o( \sqrt{\sizeL \cdot \sizeM} \cdot \log( \sizeL \cdot \sizeM ) )}$ unless $\ETH$ fails.
\end{proposition}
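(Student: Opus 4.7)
The plan is to reduce from $\kkClique$ to $\LCR$ so that both parameters grow only linearly in $k$. Concretely, given an instance $(G,k)$ with vertices partitioned into rows $R_1, \dots, R_k$ of size $k$ each, I would build a program $\asms = (\Domain, \initmem, (P_L, P_C))$ together with unsafe states $F_L$ in which $\sizeM, \sizeL = \bigO(k)$. The data domain would contain $\bigO(k)$ tokens, for example one row-marker per row, one column-label per column, plus a handful of auxiliary symbols used for synchronisation between the two phases described below. Once the reduction is in place, a hypothetical $2^{o(\sqrt{\sizeL\cdot\sizeM}\log(\sizeL\cdot\sizeM))}$ algorithm for $\LCR$ would solve $\kkClique$ in time $2^{o(\sqrt{k^2}\log k^2)} = 2^{o(k\log k)}$, contradicting the ETH-based lower bound of \cite{Lokshtanov2011}.

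The leader would operate in two phases. In a \emph{selection phase} it sweeps through rows $1, \dots, k$ using $\bigO(k)$ states: at row $i$ it writes the marker $r_i$, then reads back some column label $c_{j_i}$ offered by a contributor, recording that the vertex at position $(i,j_i)$ has been selected. To avoid having to remember the tuple $(j_1, \dots, j_k)$ explicitly (which would blow $\sizeL$ up to $k^{\bigO(k)}$), the leader does not keep choices in its own control state; instead, once row $i$ is decided, it enters a \emph{verification phase} that is interleaved with selection, immediately asking the contributor pool to certify all edges between the freshly picked $v_i$ and the previously picked vertices before moving on to row $i+1$. The contributors encode the edge set: for each edge $\{u,v\}$ of $G$ with $u\in R_i$, $v\in R_{i'}$, a contributor of constant size is able to observe the row/column markers and reply with a confirmation symbol if and only if the leader's current claim matches the endpoints of its edge. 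Because arbitrarily many copies of each contributor are available, every edge can be certified as often as needed.

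Correctness then goes in the usual two directions. A genuine $k$-clique $v_1, \dots, v_k$ with $v_i \in R_i$ yields a run that selects each $v_i$ in turn and, at each step, uses one edge-contributor per already-selected vertex to discharge the required $i-1$ edge checks before advancing, so the leader reaches $F_L$. Conversely, any accepting run must provide an edge certificate for every pair of selected vertices, which forces the $k$ picks to be pairwise adjacent and hence to form a clique with one vertex per row. The only property of the Copycat-style semantics one needs is that contributor instances can be spawned freely, which is exactly the setting of $\LCR$.

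The main obstacle is the parameter budget: performing $\binom{k}{2}$ edge checks while keeping $\sizeL$ and $\sizeM$ both linear in $k$ leaves no room to store choices in the leader's state. The trick above pushes all combinatorial memory into the contributor pool, using the shared variable only as a constant-width rendezvous channel carrying a row label, a column label, or a verification token. Getting these three tasks to multiplex on a single memory cell of size $\bigO(k)$ without letting the leader or a contributor ``cheat'' (for instance, by a contributor writing an edge confirmation for vertices the leader never picked) is the delicate part; it will require carefully chosen handshake protocols and trap states in $P_L$ that punish any malformed interaction by avoiding $F_L$.
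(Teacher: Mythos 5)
Your reduction source ($\kkClique$), target parameter budget ($\sizeM, \sizeL = \bigO(k)$), and the encoding of vertices as row/column symbols on the shared cell all match the paper. But the mechanism you propose for verifying adjacency is different from the paper's, and as described it does not fit the budget.

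You propose \emph{per-edge} contributors and a leader-driven, interleaved verification: after fixing $v_i$, the leader asks the pool to certify the $i-1$ edges $\{v_i,v_{i'}\}$ for $i'<i$ before proceeding to row $i+1$. This requires the leader to collect $\binom{k}{2}$ distinct certifications overall, and the leader must be able to tell them apart, otherwise a contributor holding some true edge can simply re-issue the same confirmation and cover for a missing one. With memory symbols of the form $\mathit{ok}_{i'}$ (to stay at $\bigO(k)$ symbols), the leader has to cycle through $i'=1,\dots,i-1$ in round $i$; doing this correctly forces the leader's state to jointly track the round index $i$ and the verification counter $i'$, giving $\sum_{i=1}^k (i-1) = \Theta(k^2)$ states. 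Encoding the pair into the confirmation symbol instead pushes $\sizeM$ to $\Theta(k^2)$. Either way $\sizeL\cdot\sizeM = \Omega(k^3)$, so a $2^{o(\sqrt{\sizeL\cdot\sizeM}\log(\sizeL\cdot\sizeM))}$ algorithm would only give $2^{o(k^{3/2}\log k)}$ for $\kkClique$, which does not contradict the $2^{\Omega(k\log k)}$ lower bound. The difficulty you flag at the end (preventing cheating) is real, but the sharper obstacle is that the quadratic number of facts to be verified cannot be discharged within a linear budget under this scheme.

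The paper avoids this entirely by using \emph{per-vertex} contributors that \emph{self-filter}. Every contributor guesses a vertex $(i',j')$ at the outset and stores it. As the leader announces each $(\row(i),\col(j_i))$, every surviving contributor $P_{(i',j')}$ either deadlocks (if $i'=i$ and $j'\neq j_i$, or if $i'\neq i$ and $(i',j')$ is not adjacent to $(i,j_i)$) or continues. After $k$ rounds, the adjacency checks have been done implicitly inside the contributor pool; the leader only needs to read the $k$ survival confirmations $\#_1\dots\#_k$, which is linear in $k$ in both $\sizeL$ and $\sizeM$. The crucial saving is that each contributor performs all $k-1$ of its own adjacency checks locally and reports a single success bit, instead of the leader mediating $\binom{k}{2}$ pairwise checks. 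To fix your proposal you would have to move to this per-vertex, survival-based encoding, at which point it becomes the paper's construction.
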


We assume that the vertices $V$ of $G$ are given by tuples $(i,j)$ with $i,j \in [1..k]$, where $i$ denotes the row and $j$ denotes the column in the matrix.
In the reduction, we need the leader and the contributors to communicate on the vertices of $G$. 
However, we cannot store tuples $(i,j)$ in the memory as this would cause a quadratic blow-up $\sizeM = \bigO(k^2)$. 
Instead, we communicate a vertex $(i,j)$ as a string $\row(i).\col(j)$.
We distinguish between row- and column-symbols to avoid stuttering, the repeated reading of the same symbol.
With this, it cannot happen that a thread reads a row-symbol twice and takes it for a column.

The program starts its computation with each contributor choosing a vertex $(i,j)$ to store.
For simplicity, we denote a contributor storing the vertex $(i,j)$ by $P_{(i,j)}$.
Note that there can be copies of $P_{(i,j)}$.

Since there are arbitrarily many contributors, the chosen vertices are only a superset of the clique we want to find.
To cut away the false vertices, the leader $P_L$ guesses for each row the vertex belonging to the clique.
Contributors storing other vertices than the guessed ones will be switched off bit by bit.
To this end, the program performs for each $i \in [1..k]$ the following steps:
If $(i,j_i)$ is the vertex of interest, $P_L$ first writes $\row(i)$ to the memory.
Each contributor that is still active reads the symbol and moves on for one state.
Then $P_L$ communicates the column by writing $\col(j_i)$.
Again, the active contributors $P_{(i',j')}$ read.

Upon transmitting $(i,j_i)$, the contributors react in one of the
following three ways:
(1) If $i' \neq i$, the contributor $P_{(i',j')}$ stores a vertex of a different row.
The computation in $P_{(i',j')}$ can only go on if $(i',j')$ is connected to $(i,j_i)$ in $G$.
Otherwise it will stop.
(2) If $i' = i$ and $j' = j_i$, then $P_{(i',j')}$ stores exactly the vertex guessed by $P_L$.
In this case, $P_{(i',j')}$ can continue its computation.
(3) If $i' = i$ and $j' \neq j$, thread $P_{(i',j')}$ stores a different vertex from row $i$.
The contributor has to stop.

After $k$ such rounds, there are only contributors left that store vertices guessed by $P_L$.
Furthermore, each two of these vertices are connected.
Hence, they form a clique.
To transmit this information to $P_L$, each $P_{(i,j_i)}$ writes $\#_i$ to the memory, a special symbol for row $i$.
After $P_L$ has read the string $\#_1 \dots \#_k$, it moves to its final state.
A formal construction is given in Appendix \ref{Appendix_LCR_Leader_Mem}.

Note that the size $\bigO(k)$ of the data domain cannot be avoided, even if we encoded the row and column symbols in binary.
The reason is that $P_L$ needs a confirmation of $k$ contributors that were not stopped during the guessing and terminated correctly.
Since contributors do not have final states, we need to transmit this information in the form of $k$ different memory symbols.
\subsubsection*{Absence of a Polynomial Kernel}
\label{Section:LCR_LEader_Mem_Kernel_Lower}

A kernelization of a parameterized problem is a compression algorithm. 
Given an instance, it returns an equivalent instance the size of which is bounded by a function only in the parameter. 
From an algorithmic perspective, kernels put a bound on the number of hard instances.
Indeed, the search for small kernels is a key interest in algorithmics, similar to the search for $\FPT$-algorithms.
It can be shown that kernels exist if and only if a \mbox{problem admits an $\FPT$-algorithm~\cite{Cygan2015}.}

Let $Q$ be a parameterized problem.
A \emph{kernelization} of $Q$ is an algorithm that given an instance $(B, k)$, runs in polynomial time in $B$ and $k$,  and outputs an equivalent instance $(B', k')$ such that $\abs{B'} + k' \leq g(k)$. 
Here, $g$ is a computable function.
If $g$ is a polynomial, we say that $Q$ admits a \emph{polynomial kernel}.

Unfortunately, for many problems the community failed to come up with polynomial kernels. 
This lead to the contrary approach, namely disproving their existence \cite{Fortnow2011,Bodlaender2009,Bodlaender2014}.  
The absence of a polynomial kernel constitutes an exponential lower bound on the number of hard instances.
Like computational hardness results, such a bound is seen as an indication of general hardness of the problem. 
Technically, the existence of a polynomial kernel for the problem of interest is shown to imply $\NP \subseteq \co\NP / \poly$.
However, the inclusion is considered unlikely as it would cause a collapse of the polynomial hierarchy to the third level \cite{Yap1983}.

In order to link the existence of a polynomial kernel for $\LCR(\sizeM, \sizeL)$ with the above inclusion, we follow the framework developed in \cite{Bodlaender2014}.
Let $\Gamma$ be an alphabet.
A \emph{polynomial equivalence relation} is an equivalence relation $\polyrel$ on $\Gamma^*$ with the following properties:
Given $x,y \in \Gamma^*$, it can be decided in time polynomial in $\abs{x} + \abs{y}$ whether $(x,y) \in \polyrel$.
Moreover, for $n \in \Naturals$ there are at most polynomially many \mbox{equivalence classes in $\polyrel$ restricted to $\Gamma^{\leq n}$.}

The key tool for proving kernel lower bounds are cross-compositions.
Let \mbox{$L \subseteq \Gamma^*$} be a language and $Q \subseteq \Gamma^* \times \mathbb{N}$ be a parameterized language. 
We say that $L$ \emph{cross-composes} into $Q$ if there exists a polynomial equivalence relation $\polyrel$ and an algorithm $\crosscomp$, together called the \emph{cross-composition}, with the following properties:
$\mathcal{C}$ takes as input $\varphi_1, \dots, \varphi_I \in \Gamma^*$, all equivalent under $\polyrel$.
It computes in time polynomial in $\sum_{\ell=1}^{I} \abs{\varphi_\ell}$ a string $(y,k) \in \Gamma^* \times \Naturals$ such that $(y,k) \in Q$ if and only if there is an $\ell \in [1..I]$ with $\varphi_\ell \in L$.
Furthermore, parameter $k$ is bounded by $p(\max_{\ell \in [1..I]} \abs{\varphi_\ell} + \log(I)  )$, where $p$ is a polynomial.

It was shown in \cite{Bodlaender2014} that a cross-composition of any $\NP$-hard language into a parameterized language $Q$ prohibits the existence of a polynomial kernel for $Q$ unless $\NP \subseteq \co\NP / \poly$.
In order to make use of this result, we show how to cross-compose $\kSAT{3}$ into $\LCR(\sizeM, \sizeL)$.
This yields the following:

\begin{theorem}\label{Theorem:LCR_Leader_Mem_Kernel_Lower}
	$\LCR(\sizeM, \sizeL)$ does not admit a poly. kernel unless $\NP \subseteq \co\NP / \poly$.
\end{theorem}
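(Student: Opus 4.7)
The plan is to invoke the cross-composition framework of~\cite{Bodlaender2014} with the $\NP$-hard problem $\kSAT{3}$ as source language. As polynomial equivalence relation $\polyrel$ I would take the one that relates two $\kSAT{3}$ formulas iff they have the same number $n$ of variables and the same number $m$ of clauses, grouping malformed strings into one extra class. This relation is decidable in polynomial time, and its restriction to length-$N$ strings has polynomially many classes.

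Given $I$ $\polyrel$-equivalent formulas $\varphi_1, \dots, \varphi_I$, the cross-composition $\crosscomp$ builds an $\LCR$ instance $\asms = (\Domain, \initmem, (P_L, P_C))$ with a single unsafe leader state $q_f$. The memory $\Domain$ contains $\initmem$, two bit symbols together with a separator $\#$ (to prevent stuttering), variable names $x_1, \dots, x_n$, truth values $T, F$, clause names $C_1, \dots, C_m$, and control markers $\texttt{done}_I, \texttt{done}_A, \texttt{ack}$. The leader runs in three phases:
\begin{enumerate}
\item it writes $\lceil \log I \rceil$ bits spelling a guessed index $\ell \in [1..I]$, interleaved with $\#$, followed by $\texttt{done}_I$;
\item for $i = 1, \dots, n$, it writes $x_i$ and a guessed value $v_i \in \{T,F\}$, then $\texttt{done}_A$;
\item for $j = 1, \dots, m$, it writes $C_j$ and reads $\texttt{ack}$, and after the last one it enters $q_f$.
\end{enumerate}
Each contributor $P_C$ non-deterministically commits to a triple $(\ell, j, \ell_k)$ where $\ell_k$ is a literal in the $j$-th clause of $\varphi_\ell$. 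It then reads the index bits while matching them against $\ell$, skips every assignment write for variables other than the one underlying $\ell_k$, checks that the transmitted value makes $\ell_k$ true, and, after seeing the leader write $C_j$, writes $\texttt{ack}$.

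For correctness: if $\varphi_\ell$ is satisfied by some $(v_1, \dots, v_n)$, spawning one appropriately committed contributor per clause lets the leader complete all three phases; conversely, reaching $q_f$ produces $m$ $\texttt{ack}$'s that pin down a satisfying assignment to the guessed $\varphi_\ell$. For the parameter, $\sizeL = \bigO(n + m + \log I)$ and $\sizeM = \bigO(n + m)$, so $\sizeM + \sizeL$ is polynomially bounded in $\max_\ell |\varphi_\ell| + \log I$, as the framework demands. The contributor size $\bigO(I \cdot m)$ and the polynomial running time of $\crosscomp$ are unconstrained. The claim then follows from~\cite{Bodlaender2014}.

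The main obstacle is exactly the logarithmic-in-$I$ budget on the parameter: neither the leader's state count nor the data domain may scale with $I$ itself. This forces the bitwise transmission of $\ell$, together with the $\#$-separator trick that prevents a contributor from missing or double-counting identical consecutive bits. A minor issue is the consistency of the guessed assignment across the clause-checking contributors, but this is automatic because the leader broadcasts the assignment once and every contributor observes the very same stream of writes.
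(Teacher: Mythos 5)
Your high-level plan matches the paper's exactly: cross-compose $\kSAT{3}$ into $\LCR(\sizeM,\sizeL)$ using the same polynomial equivalence relation, and transmit the index of the chosen instance bit by bit so that $\sizeM$ and $\sizeL$ depend only logarithmically on $I$. But your phase-3 gadget is unsound, and repairing it is precisely where the paper's construction diverges from yours in an essential way.

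In your construction the leader, for $j = 1,\dots,m$, writes $C_j$ and then reads a single generic symbol $\texttt{ack}$. That symbol carries no information about which clause it certifies. Reads are non-destructive, and by the Copycat Lemma arbitrarily many identical copies of a contributor committed to $(\ell,1,\ell_k)$ can read $C_1$ during the window in which $C_1$ sits in memory; each of them is then free to write $\texttt{ack}$ at any later time, including after the leader has moved on to $C_2, C_3, \ldots$. Concretely, take the unsatisfiable two-clause formula $x_1 \wedge \neg x_1$ and let the leader guess $x_1 = T$: a contributor committed to $(\ell,1,x_1)$ passes your index and assignment checks and reads $C_1$; a second copy does the same; the leader then writes $C_2$ and happily reads the second, stale $\texttt{ack}$, reaching $q_f$. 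So the composed $\LCR$ instance can be a yes-instance even when every $\varphi_\ell$ is unsatisfiable, and the cross-composition fails. The paper avoids this by using $m$ distinct symbols $\#_1, \dots, \#_m$ and having the leader read the fixed string $\#_1 \cdots \#_m$: a contributor that can only write $\#_1$ can never masquerade as a certificate for $C_2$.

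A secondary gap is phase 2: you write $x_i$ and $v_i$ as two separate memory symbols, so a slow contributor can read $x_i$ and then the later value $v_{i'}$ for some $i' > i$, mis-associating variable and value. Whether this produces a false positive or only a deadlock depends on position-tracking details your sketch leaves implicit; the paper sidesteps the issue entirely by writing the assignment as atomic pairs $(x_i, v_i)$ in the data domain, so name and value are consumed in a single read. With distinct per-clause acknowledgement symbols and atomic assignment tuples your construction would coincide with the paper's and the proof would go through; without the first fix the reduction is simply wrong.
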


The difficulty in coming up with a cross-composition is the restriction on the size of the parameters.
In our case, this affects $\sizeM$ and $\sizeL$:
Both parameters are not allowed to depend polynomially on $I$, the number of given $\kSAT{3}$-instances. 
We resolve the polynomial dependence by encoding the choice of  such an instance into the contributors via a binary tree.

\begin{proof}[Idea]
	Assume some encoding of Boolean formulas as strings over a finite alphabet.
	We use the polynomial equivalence relation $\polyrel$ defined as follows:
	Two strings $\varphi$ and $\psi$ are equivalent under $\polyrel$ if both encode $\kSAT{3}$-instances, and the numbers of clauses and variables coincide.
	
	Let the given $\kSAT{3}$-instances be $\varphi_1, \dots, \varphi_I$.
	Every two of them are equivalent under $\polyrel$.
	This means all $\varphi_\ell$ have the same number of clauses $m$ and use the same set of variables $\setcon{x_1, \dots, x_n}$.
	We assume that $\varphi_\ell = C^\ell_1 \wedge \dots \wedge C^\ell_m$.
	
	We construct a program proceeding in three phases.
	First, it chooses an instance $\varphi_\ell$, then it guesses an evaluation for all variables, and in the third phase it verifies that the evaluation satisfies $\varphi_\ell$.
	While the second and the third phase do not cause a dependence of the parameters on $I$, the first phase does. 
	It is not possible to guess a number $\ell \in [1..I]$ and communicate it via the memory as this would provoke a polynomial dependence of $\sizeM$ on $I$.
	
	To implement the first phase without a polynomial dependence, we transmit the indices of the $\kSAT{3}$-instances in binary.
	The leader guesses and writes tuples $(u_1, 1), \dots, (u_{\log(I)}, \log(I))$ with $u_\ell \in \setcon{0,1}$ to the memory.
	This amounts to choosing an instance $\varphi_\ell$ with binary representation $\bin(\ell) = u_1 \dots u_{\log(I)}$.
	
	It is the contributors' task to store this choice.
	Each time the leader writes a tuple $(u_i,i)$, the contributors read and branch either to the left, if $u_i = 0$, or to the right, if $u_i = 1$.
	Hence, in the first phase, the contributors are binary trees with $I$ leaves, each leaf storing the index of an instance $\varphi_\ell$.
	Since we did not assume that $I$ is a power of $2$, there may be computations arriving at leaves that do not represent proper indices.
	\mbox{In this case, the computation deadlocks.}
	
	The size of $\Domain$ and $P_L$ in the first phase is $\bigO(\log(I))$. 
	Note that this satisfies the size-restrictions of a cross-composition.
	
	For guessing the evaluation in the second phase, the program communicates on tuples $(x_i, v)$ with $i \in [1..n]$ and $v \in \setcon{0,1}$.
	The leader guesses such a tuple for each variable and writes it to the memory.
	Any participating contributor is free to read one of these.
	After reading, it stores the variable and the evaluation.
	
	In the third phase, the satisfiability check is performed as follows:
	Each contributor that is still active has stored in its current state the chosen instance $\varphi_\ell$, a variable $x_i$, and its evaluation $v_i$.
	Assume that $x_i$ when evaluated to $v_i$ satisfies $C^\ell_j$, the $j$-th clause of $\varphi_\ell$.
	Then the contributor loops in its current state while writing the symbol $\#_j$.
	The leader waits to read the string $\#_1 \dots \#_m$.
	If $P_L$ succeeds, we are sure that the $m$ clauses of $\varphi_\ell$ were satisfied by the chosen evaluation.
	Thus, $\varphi_\ell$ is satisfiable and $P_L$ moves to its final state.
	For details of the construction and a proof of correctness, we refer to Appendix \ref{Appendix_LCR_Leader_Mem}.
	\qed
\end{proof}

\subsection{Parameterization by Contributors}
\label{Section:LCR_Contributor}

The size of the contributors $\sizeC$ has substantial influence on the complexity of $\LCR$. 
We show that the problem can be solved in time $\bigOS(2^\sizeC)$ via dynamic programming.
Moreover, we present a matching lower bound proving it unlikely that $\LCR$ can be solved in time $\bigOS((2 - \delta)^\sizeC)$, for any $\delta > 0$.
The result is obtained by a reduction from $\SetCov$.
Finally, \mbox{we prove the absence of a polynomial kernel.}

\subsubsection*{Upper Bound}
\label{Section:LCR_Contributor_Upper}

Our algorithm is based on dynamic programming.
Intuitively, we cut a computation of the program along the states reached by the contributors.
To this end, we keep a table with an entry for each subset of the contributors' states.
The entry of set $S \subseteq Q_C$ contains those states of the leader that are reachable under a computation where the behavior of the contributors is limited to $S$.
We fill the table by a dynamic programming procedure and check in the end whether a final state of the leader occurs in an entry.
The result is as follows.

\begin{theorem}\label{Theorem:LCR_Contributor_Upper_Bound}
	$\LCR$ can be solved in time $\bigO( 2^\sizeC \cdot \sizeC^{\: 4} \cdot \sizeL^2 \cdot \sizeM^{\, 2} )$.
\end{theorem}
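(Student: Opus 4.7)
The plan is to recast the reachability question as reachability in an exponentially larger but still polynomial-per-node graph, and then to solve it by standard dynamic programming / BFS.

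The state space of the dynamic program consists of triples $(q_L, a, S) \in Q_L \times \Domain \times \power(Q_C)$. The intended meaning of a reachable triple is: in some instance $\asms^t$ there is a computation ending with the leader in state $q_L$, the memory holding $a$, and every state in $S$ currently occupied by at least one contributor (while no state outside $S$ has ever been visited by any contributor). The initial entry is $(q_L^0, a^0, \{q_C^0\})$, and success is any entry with $q_L \in F_L$. The crucial monotonicity observation, inherited from the Copycat Lemma underlying Section~\ref{Section:LCR_Leader_Mem_Upper}, is that once a contributor state $q$ has been populated it can be treated as populated by arbitrarily many copies forever after; therefore $S$ only grows and each state in $S$ is always available to fire any enabled outgoing $P_C$-transition.

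I would then describe four types of edges in this graph, each corresponding to one primitive step and leaving $S$ either unchanged or enlarged by one element. (i) A silent or write transition of the leader $q_L \xrightarrow{\varepsilon/!b}_L q_L'$ updates $q_L$ and possibly the memory; here $S$ is unchanged. (ii) A read transition $q_L \xrightarrow{?a}_L q_L'$ fires only if the current memory already equals $a$. (iii) A contributor transition $q \xrightarrow{\op}_C q'$ with $q \in S$ either writes (updating the memory to $b$) or reads a value already in memory, or is silent; in all cases the new contributor set becomes $S \cup \{q'\}$. (iv) By the copycat argument, if $q' \in S$ we may still fire such a transition without changing $S$, which is convenient but not strictly needed. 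Correctness in both directions is then routine: any concrete computation in $\asms^t$ projects to a path in this graph, and conversely any path can be realized by spawning enough contributors so that each newly populated state $q'$ has a dedicated copycat stack reaching it, and by the copycat argument each read step in (ii) or (iii) is supported.

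For the complexity bound, the number of nodes is $\sizeL \cdot \sizeM \cdot 2^{\sizeC}$. For each node one generates outgoing edges by scanning the transitions of $P_L$ (at most $\sizeL \cdot \sizeM$ of them, charging $\sizeL$ for the target state and $\sizeM$ for the label) and of $P_C$ restricted to sources in $S$ (at most $\sizeC^2 \cdot \sizeM$ candidates, charging $\sizeC$ for the source, $\sizeC$ for the target, and $\sizeM$ for the label). A BFS from the initial triple that marks reachable nodes and stops when a node with $q_L \in F_L$ is discovered therefore runs in time $\bigO(2^{\sizeC} \cdot \sizeC^{4} \cdot \sizeL^{2} \cdot \sizeM^{2})$ after accounting for the cost of representing $S$ and performing set updates (the extra factor $\sizeC^2$ over a naive count covers the manipulation of $S$ and lookups indexed by $S$). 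The main obstacle, and the step that deserves the most care in the full proof, is the ``spawn enough contributors'' direction of correctness: one must schedule the copycat traces so that every read witnessed in the graph can actually be served concurrently in some $\asms^t$; this is precisely where the monotone-growth discipline of $S$ and the copycat saturation are combined, mirroring the thread-modular reasoning highlighted in the introduction. \qed
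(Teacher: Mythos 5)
Your approach is correct and reaches the stated bound, but it takes a genuinely different algorithmic route from the paper. Both you and the paper work with the same saturation graph $\fingraph$ over $Q_L \times \Domain \times \power(Q_C)$, and both rely on the same characterization via the Copycat Lemma (Lemma~\ref{Lemma:LCR_Contributor_Upper_Bound_Reach} in the paper). The paper, however, deliberately avoids running a global search on $\fingraph$: it organizes the work as a bottom-up dynamic program over the subsets $S \subseteq Q_C$, introducing a table $T$ with $T[S] = \bigcup_{p \in S} R(S \setminus \{p\}, S)$ (Lemma~\ref{Lemma:LCR_Contributor_Upper_Bound_Recursion}), where each $R(\cdot,\cdot)$ is a reachability query confined to a polynomial-size \emph{slice} $\fingraph_{S \setminus \{p\}, S}$. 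The accounting then multiplies $2^\sizeC$ table entries by the per-entry cost $\bigO(\sizeC^4 \cdot \sizeL^2 \cdot \sizeM^2)$ coming from $\sizeC$ slice constructions (Lemma~\ref{Lemma:LCR_Contributor_Upper_Bound_Construction_Slice}) plus fixed-point iterations. You instead run a direct BFS on $\fingraph$ with on-the-fly successor generation: for each dequeued triple $(q_L, a, S)$, enumerate the $\bigO(\sizeL\cdot\sizeM)$ leader transitions out of $q_L$ and the $\bigO(\sizeC^2\cdot\sizeM)$ contributor transitions with source in $S$, each edge incurring $\bigO(\sizeC)$ for the bit-vector update and visited-lookup. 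That also lands at $\bigO(2^\sizeC \cdot \sizeC^4 \cdot \sizeL^2 \cdot \sizeM^2)$ (indeed with room to spare), so the exponential base is the same; the paper's slice decomposition is a structured reorganization of essentially the same exploration, processing all nodes with a common $S$-component together and exploiting the fact that the $S$-component is monotone along any path. Your BFS is shorter to state; the paper's DP makes the monotone-$S$ structure explicit, which is what lets them phrase the algorithm as ``solving reachability without constructing $\fingraph$'' and may be what prompted the improvement over the $4^\sizeC$ of the conference version.

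One small inaccuracy worth flagging: in your item (iv) you claim that firing a contributor transition $q \to q'$ with $q' \in S$ is ``convenient but not strictly needed.'' When that transition is a \emph{write}, the edge $(q_L, b, S) \rightarrow (q_L, a, S)$ changes the memory component even though $S$ is unchanged, and these edges are genuinely required for completeness (the leader may need to read a value that only a contributor already ``saturated'' into $S$ can re-supply). The paper's edge rule includes them for exactly this reason. Since your algorithm does fire these transitions, the proof is unaffected, but the side remark should be dropped.
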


To define the table, we first need a compact way of representing computations that allows for fast iteration.
The observation is that keeping one set of states for all contributors suffices.
Let $S \subseteq Q_C$ be the set of states reachable by the contributors in a given computation.
By the \emph{Copycat Lemma} \cite{Esparza13}, we can assume for each $q \in S$ an arbitrary number of contributors that are currently in $q$.
This means that we do not have to distinguish \mbox{between different contributor instances.}

Formally, we reduce the search space to $V = Q_L \times \Domain \times \power(Q_C)$.
Instead of explicit configurations, we consider tuples $(q, a, S)$, where $q \in Q_L$, $a \in \Domain$, and $S \subseteq Q_C$.
Between these tuples, we define an edge relation $E$.
If $P_L$ writes $a \in \Domain$ with transition $q \xrightarrow{!a} q'$, we get $(q, b, S) \rightarrow_E (q', a, S)$ for each $b \in \Domain$ and $S \subseteq Q_C$.
Reads of the leader are similar.
Contributors also change the memory but saturate set $S$ instead of changing the state:
If there is a transition $p \xrightarrow{!a} p'$ in $P_C$ with $p \in S$, then $(q, b, S) \rightarrow_E (q, a, S \cup \setcon{p'})$ for each $b \in \Domain$ and $q \in Q_L$.
Reads are handled similarly.

The set $V$ together with the relation $E$ form a finite directed graph \mbox{$\fingraph = (V,E)$}.
We call the node $\initnode = (q^0_L, \initmem, \setcon{q^0_C})$ the \emph{initial node}.
Computations are represented by paths in $\fingraph$ starting in $\initnode$.
Hence, we reduced $\LCR$ to the problem of checking whether the set of nodes $F_L \times \Domain \times \power(Q_C)$ is reachable from $\initnode$ in $\fingraph$.

\begin{lemma}\label{Lemma:LCR_Contributor_Upper_Bound_Reach}
	There is a $t \in \Naturals$ so that $c^0 \rightarrow^*_{\asms^t} c$ with $c \in C^f$ if and only if there is a path in $\fingraph$ from $\initnode$ to a node in $F_L \times \Domain \times \power(Q_C)$.
\end{lemma}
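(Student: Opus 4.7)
To prove the biconditional, I translate between computations in $\asms^t$ and paths in $\fingraph$, using the Copycat Lemma of \cite{Esparza13} to bridge the two representations. For the ($\Rightarrow$) direction, suppose $c^0 = c_0 \rightarrow c_1 \rightarrow \cdots \rightarrow c_n = c$ is a computation in $\asms^t$ with $c \in C^f$. At step $i$, let $q_L^i$ denote the leader state in $c_i$, $a_i$ the memory value, and $S_i$ the set of all contributor states that have occurred in the prefix $c_0, \ldots, c_i$. Project each $c_i$ to the node $v_i = (q_L^i, a_i, S_i) \in V$, and verify that $v_{i-1} \rightarrow_E v_i$ for every $i$: a leader step fires the matching leader edge in $\fingraph$, leaving $S_i = S_{i-1}$; a contributor step from $p$ to $p'$ ensures $p \in S_{i-1}$ and triggers the saturation edge that adds $p'$ to the set. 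Since $v_0 = \initnode$ and $q_L^n \in F_L$, this yields the required path into $F_L \times \Domain \times \power(Q_C)$.

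For the ($\Leftarrow$) direction, given a path $\initnode = v_0 \rightarrow_E \cdots \rightarrow_E v_n$ ending in $F_L \times \Domain \times \power(Q_C)$, I build an actual computation by induction on the path length while maintaining the invariant that, at the simulated step $i$, every state $p \in S_i$ is occupied by arbitrarily many contributor instances. The base case holds since $S_0 = \setcon{q^0_C}$ and all contributors begin in $q^0_C$. Leader edges are simulated by the leader thread itself. For a saturating contributor edge that adds $p'$ via a transition $p \rightarrow p'$, the invariant furnishes a contributor in state $p$ to take the step, and the Copycat Lemma supplies additional duplicates shadowing this contributor so that arbitrarily many instances remain in $p$ while $p'$ is populated analogously. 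Since the path is finite, choosing $t$ large enough accommodates all needed copies, and the resulting execution ends in a configuration in $C^f$.

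The main obstacle is making the copycat invariant precise in the backward direction: in $\fingraph$ the component $S$ never shrinks, whereas in an actual computation a contributor leaves its source state when moving. The Copycat Lemma resolves this by letting us accompany every contributor with sufficient duplicates, so each departure from a state is masked by siblings staying behind. Hence the monotone growth of $S$ along the path faithfully reflects a genuine multi-contributor execution.
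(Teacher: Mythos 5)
Your proof is correct and follows the same strategy as the paper's: induct over the length of the computation for the forward direction, and over the length of the path — invoking the Copycat Lemma to keep enough contributor instances available — for the converse. The paper proves a slightly more precise auxiliary statement (for every reachable $c$ there is a path to some $(q,a,S)$ with $\pi_L(c)=q$, $\pi_\Domain(c)=a$, $\pi_C(c)\subseteq S$, and conversely with equality $\pi_C(c)=S$), and applies Copycat with $k=1$ once per contributor edge rather than appealing to an informal ``arbitrarily many copies'' invariant as you do; making your invariant precise would require exactly that sort of bookkeeping, so the two arguments coincide in substance.
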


Before we elaborate on solving reachability on $\fingraph$ we turn to an example.
It shows how $\fingraph$ is constructed from a program and illustrates Lemma~\ref{Lemma:LCR_Contributor_Upper_Bound_Reach}.

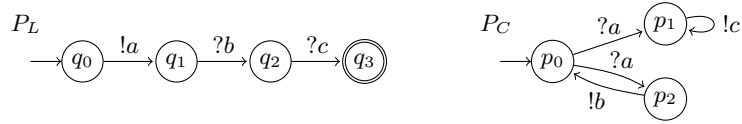
\begin{figure}
	\begin{center}
		\begin{tikzpicture}
		\node[state, draw, inner sep = 2pt, minimum size = 0pt, initial, initial text = ] (q0) {$q_0$};
		\node[circle, draw, inner sep = 2pt, right of = q0, node distance = 1.25cm] (q1) {$q_1$};
		\node[circle, draw, inner sep = 2pt, right of = q1, node distance = 1.25cm] (q2) {$q_2$};
		\node[circle, draw, inner sep = 2pt, right of = q2, node distance = 1.25cm, accepting] (q3) {$q_3$};
		
		\node[left of = q0, node distance = 0.75cm, yshift = 0.5cm] {$P_L$};
		
		\path[->, shorten >= 1pt] (q0) edge 
			node[above] {$!a$}
		(q1);
		
		\path[->, shorten >= 1pt] (q1) edge
			node[above] {$?b$}
		(q2);
		
		\path[->, shorten >= 1pt] (q2) edge 
			node[above] {$?c$}
		(q3);
		
		\node[circle, draw, inner sep = 2pt, right of = q3, node distance = 2.5cm, initial, initial text = ] (p0) {$p_0$};
		\node[circle, draw, inner sep = 2pt, right of = p0, node distance = 1.5cm, yshift = 0.5cm] (p1) {$p_1$};
		\node[circle, draw, inner sep = 2pt, below of = p1] (p2) {$p_2$};
		
		\node[left of = p0, node distance = 0.75cm, yshift = 0.5cm] {$P_C$};
		
		\path[->, shorten >= 1pt] (p0) edge
			node[above] {$?a$}
		(p1);
		
		\path[->, shorten >= 1pt] (p1) edge[loop right]
			node[right] {$!c$}
		(p1);
		
		\path[->, shorten >= 1pt] (p0) edge[out = 350, in = 150]
			node[above, pos = 0.65] {$?a$}
		(p2);
		
		\path[->, shorten >= 1pt] (p2) edge[out = 170, in = 330]
			node[below, pos = 0.65] {$!b$}
		(p0);
		\end{tikzpicture}
	\end{center}
	\caption{Leader thread $P_L$ (left) and contributor thread $P_C$ (right). The data domain is given by $\Domain = \setcon{a^0, a, b, c}$ and the only unsafe state is $F_L = \setcon{q_3}$.}
	\label{Figure:ComputationGraphSetup}
\end{figure}

\begin{example}
	We consider the program $\asms = (\Domain, \initmem, (P_L,P_C))$ from Figure \ref{Figure:ComputationGraphSetup}.
	The nodes of the corresponding graph $\fingraph$ are given by $V = Q_L \times \Domain \times \power(\setcon{p_0,p_1,p_2})$.
	Its edges $E$ are constructed following the above rules.
	For instance, we get an edge $(q_1,a,\setcon{p_0}) \rightarrow_E (q_1,a,\setcon{p_0,p_1})$ since $P_C$ has a read transition $p_0 \xrightarrow{?a} p_1$.
	Intuitively, the edge describes that currently, the leader is in state $q_1$, the memory holds $a$, and an arbitrary number of contributors is waiting in $p_0$.
	Then, some of these read $a$ and move to $p_1$.
	Hence, we might assume an arbitrary number of contributors in the states $p_0$ and $p_1$.
	
	The complete graph $\fingraph$ is presented in Figure \ref{Figure:ComputationGraph}.
	For the purpose of readability, we only show the nodes reachable from the initial node \mbox{$\initnode = (q_0,a^0,\setcon{p_0})$}.
	Moreover, we omit self-loops and we present the graph as a collection of subgraphs.
	The latter means that for each subset $S$ of $\power(\setcon{p_0,p_1,p_2})$, we consider the induced subgraph $\fingraph[Q_L \times \Domain \times \setcon{S}]$.
	It contains the set of nodes $Q_L \times \Domain \times \setcon{S}$ and all edges that start and end in this set.
	Note that we omit the last component from a node $(q,a,S)$ in $\fingraph[Q_L \times \Domain \times \setcon{S}]$ since it is clear from the context.
	The induced subgraphs are connected by edges that saturate $S$.
	
	The red marked nodes are those which contain the unsafe state $q_3$ of the leader.
	Consider a path from $\initnode$ to one of these nodes.
	It starts in the subgraph $\fingraph[Q_L \times \Domain \times \setcon{\setcon{p_0}}]$.
	To reach one of the red nodes, the path has to traverse via $\fingraph[Q_L \times \Domain \times \setcon{\setcon{p_0,p_1}}]$ to $\fingraph[Q_L \times \Domain \times \setcon{Q_C}]$ or via $\fingraph[Q_L \times \Domain \times \setcon{\setcon{p_0,p_2}}]$.
	Phrased differently, the states of the contributors need to be saturated two times along the path.
	This means that in an actual computation, there must be contributors in $p_0$, $p_1$, and $p_2$.
	These can then provide the symbols $b$ and $c$ which are needed by the leader to reach the state $q_3$.
\end{example}

\begin{figure}
	\begin{center}
		\begin{tikzpicture}[
			dot/.style = {circle, fill = black, inner sep = 2pt}]
		
			\node[dot] (q0a0-p0) {};
			\node[above of = q0a0-p0, node distance = 0.3cm] (q0a0-p0Name) {\scriptsize $(q_0,a^0)$};
			
			\node[dot, below of = q0a0-p0] (q1a-p0) {};
			\node[below of = q1a-p0, node distance = 0.3cm] (q1a-p0Name) {\scriptsize $(q_1,a)$};
			
			\path[->, shorten >= 1pt] (q0a0-p0) edge (q1a-p0);
			
			\node[fit = (q0a0-p0)(q0a0-p0Name)(q1a-p0)(q1a-p0Name)] (graph-p0) {};
			\node[above of = graph-p0, node distance = 1.5cm] (graph-p0Name) {$\fingraph[Q_L \times \Domain \times \setcon{\setcon{p_0}}]$};
			
			\begin{pgfonlayer}{background}
				\path[fill = lightgray!20, rounded corners] (graph-p0.south west) rectangle (graph-p0.north east);
			\end{pgfonlayer}
			
			\node[dot, right of = q0a0-p0, node distance = 3cm] (q1a-p1) {};
			\node[below of = q1a-p1, node distance = 0.3cm] (q1a-p1Name) {\scriptsize $(q_1,a)$};
			
			\node[dot, above of = q1a-p1] (q1c-p1) {};
			\node[above of = q1c-p1, node distance = 0.3cm] (q1c-p1Name) {\scriptsize $(q_1,c)$};
			
			\path[->, shorten >= 1pt] (q1a-p1) edge (q1c-p1);
			\path[->, shorten >= 1pt] (q1a-p0) edge (q1a-p1);
			
			\node[fit = (q1a-p1)(q1a-p1Name)(q1c-p1)(q1c-p1Name)] (graph-p1) {};
			\node[above of = graph-p1, node distance = 1.5cm] (graph-p1Name) {$\fingraph[Q_L \times \Domain \times \setcon{\setcon{p_0,p_1}}]$};
			
			\begin{pgfonlayer}{background}
				\path[fill = lightgray!20, rounded corners] (graph-p1.south west) rectangle (graph-p1.north east);
			\end{pgfonlayer}
			
			\node[dot, right of = q1a-p0, node distance = 3cm, yshift = -1cm] (q1a-p2) {};
			\node[above of = q1a-p2, node distance = 0.3cm] (q1a-p2Name) {\scriptsize $(q_1,a)$};
			
			\node[dot, below of = q1a-p2, xshift = -0.5cm] (q1b-p2) {};
			\node[below of = q1b-p2, node distance = 0.3cm] (q1b-p2Name) {\scriptsize $(q_1,b)$};
			
			\node[dot, below of = q1a-p2, xshift = 0.5cm] (q2b-p2) {};
			\node[below of = q2b-p2, node distance = 0.3cm] (q2b-p2Name) {\scriptsize $(q_2,b)$};
			
			\path[->, shorten >= 1pt] (q1a-p2) edge (q1b-p2);
			\path[->, shorten >= 1pt] (q1b-p2) edge (q2b-p2);
			\path[->, shorten >= 1pt] (q1a-p0) edge (q1a-p2);
			
			\node[fit = (q1a-p2)(q1a-p2Name)(q1b-p2)(q1b-p2Name)(q2b-p2)(q2b-p2Name)] (graph-p2) {};
			\node[below of = graph-p2, node distance = 1.5cm] (graph-p2Name) {$\fingraph[Q_L \times \Domain \times \setcon{\setcon{p_0,p_2}}]$};
			
			\begin{pgfonlayer}{background}
				\path[fill = lightgray!20, rounded corners] (graph-p2.south west) rectangle (graph-p2.north east);
			\end{pgfonlayer}
		
			\node[dot, right of = q1a-p0, node distance = 6cm] (q1a-F) {};
			\node[above of = q1a-F, node distance = 0.3cm] (q1a-FName) {\scriptsize $(q_1,a)$};

			\node[dot, above of = q1a-F, xshift = 1cm] (q1b-F) {};
			\node[above of = q1b-F, node distance = 0.3cm] (q1b-FName) {\scriptsize $(q_1,b)$};
			
			\node[dot, right of = q1b-F] (q2b-F) {};
			\node[above of = q2b-F, node distance = 0.3cm] (q2b-FName) {\scriptsize $(q_2,b)$};
			
			\node[dot, fill = red, right of = q2b-F] (q3b-F) {};
			\node[above of = q3b-F, node distance = 0.3cm] (q3b-FName) {\scriptsize $(q_3,b)$};

			\node[dot, below of = q1a-F, xshift = 1cm] (q1c-F) {};
			\node[below of = q1c-F, node distance = 0.3cm] (q1c-FName) {\scriptsize $(q_1,c)$};
			
			\node[dot, right of = q1c-F] (q2c-F) {};
			\node[below of = q2c-F, node distance = 0.3cm] (q2c-FName) {\scriptsize $(q_2,c)$};
			
			\node[dot, fill = red, right of = q2c-F] (q3c-F) {};
			\node[below of = q3c-F, node distance = 0.3cm] (q3c-FName) {\scriptsize $(q_3,c)$};
			
			\path[->, shorten >= 1pt] (q1a-F) edge[out = 10, in = 190] (q1b-F);
			\path[->, shorten >= 1pt] (q1a-F) edge[out = 350, in = 170] (q1c-F);
			
			\path[->, shorten >= 1pt] (q1b-F) edge (q2b-F);
			\path[->, shorten >= 1pt] (q1b-F) edge[out = 290, in = 70] (q1c-F);
			
			\path[->, shorten >= 1pt] (q1c-F) edge[out = 110, in = 250] (q1b-F);
			
			\path[->, shorten >= 1pt] (q2b-F) edge[out = 290, in = 70] (q2c-F);
			
			\path[->, shorten >= 1pt] (q2c-F) edge (q3c-F);
			\path[->, shorten >= 1pt] (q2c-F) edge[out = 110, in = 250] (q2b-F);
			
			\path[->, shorten >= 1pt] (q3b-F) edge[out = 290, in = 70] (q3c-F);
			
			\path[->, shorten >= 1pt] (q3c-F) edge[out = 110, in = 250] (q3b-F);
			
			\path[->, shorten >= 1pt] (q1a-p1) edge (q1a-F);
			\path[->, shorten >= 1pt] (q1a-p2) edge (q1a-F);
			
			\node[fit = (q1a-F)(q1a-FName)(q1b-F)(q1b-FName)(q2b-F)(q2b-FName)(q3b-F)(q3b-FName)(q1c-F)(q1c-FName)(q2c-F)(q2c-FName)(q3c-F)(q3c-FName)] (graph-F) {};
			\node[above of = graph-F, node distance = 2cm] (graph-FName) {$\fingraph[Q_L \times \Domain \times \setcon{Q_C}]$};
			
			\begin{pgfonlayer}{background}
				\path[fill = lightgray!20, rounded corners] (graph-F.south west) rectangle (graph-F.north east);
			\end{pgfonlayer}
			
			\node[fit = (graph-p0)(graph-p0Name)(graph-p1)(graph-p1Name)] (slice) {};
			
			\coordinate[yshift = 1cm] (newsoutheast) at (slice.south east);
			
			\begin{pgfonlayer}{backbackground}
				\path[fill = blue!20, rounded corners] (slice.south west) -- (slice.south) -- ++ (0,1) -- (newsoutheast) -- (slice.north east) -- (slice.north west) -- cycle;
			\end{pgfonlayer}
			
		\end{tikzpicture}
	\end{center}
	\caption{Graph $\fingraph$ summarizing the computations of the program given in Figure \ref{Figure:ComputationGraphSetup}.
		Self-loops and nodes not reachable from the initial node $\initnode = (q_0,a^0,\setcon{p_0})$ are omitted.
		We further omit the third component of nodes since it is clear from the context.
		Nodes that are marked red involve the unsafe state $q_3$ of the leader.
		The blue highlighted area shows the slice $\fingraph_{\setcon{p_0},\setcon{p_0,p_1}}$.}
	\label{Figure:ComputationGraph}
\end{figure}
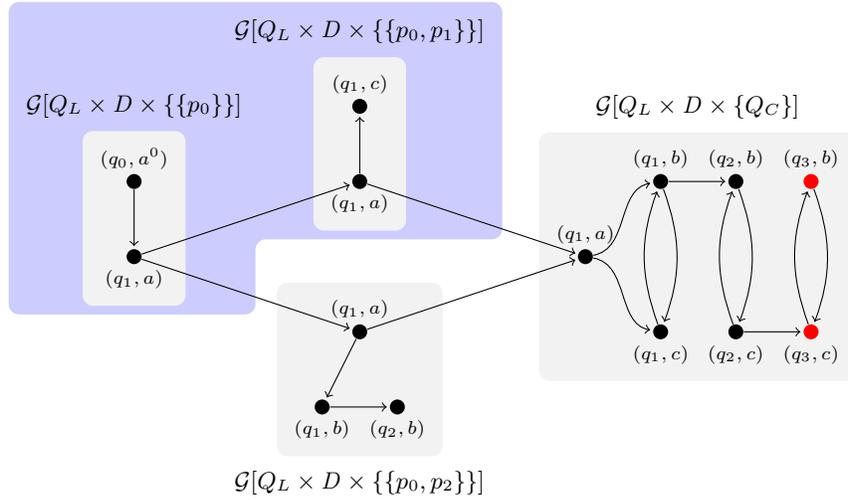

Constructing $\fingraph$ for a program and solving reachability takes time~$\bigOS(4^\sizeC)$~\cite{Saivasan2018}.
Hence, we have to solve reachability without constructing $\fingraph$ explicitly.
Our algorithm computes a table $T$ which admits a recurrence relation that simplifies the reachability query:
Instead of solving reachability directly on $\fingraph$, we can restrict to so-called \emph{slices} of~$\fingraph$.
These are subgraphs of polynomial size where reachability queries can be decided efficiently.

We define the table $T$.
For each set $S \subseteq Q_C$, \mbox{we have an entry $T[S]$ given by:}
\begin{align*}
	T[S] = \Setcon{ (q,a) \in Q_L \times \Domain }{ v^0 \rightarrow^*_E (q,a,S) }.
\end{align*}
Intuitively, $T[S]$ contains all nodes from the induced subgraph $\fingraph[Q_L \times \Domain \times \setcon{S}]$ that are reachable from the initial node $\initnode$.

Assume we have already computed $T$.
By Lemma \ref{Lemma:LCR_Contributor_Upper_Bound_Reach} we get:
There is a $t \in \Naturals$ so that $c^0 \rightarrow^*_{\asms^t} c \in C^f$ if and only if there is a set $S \subseteq Q_C$ such that $T[S] \cap F_L \times D \neq \emptyset$.
The latter can be checked in time $\bigO(2^\sizeC \cdot \sizeL^2 \cdot \sizeM^2)$ as there are $2^\sizeC$ candidates for a set $S$.

It remains to compute the table.
Our goal is to employ a dynamic programming based on a recurrence relation over $T$.
To formulate the relation, we need the notion of \emph{slices} of $\fingraph$.
Let $W \subseteq Q_C$ be a subset and $p \in Q_C \setminus W$ be a state.
We denote by $S$ the union $S = W \cup \setcon{p}$.
The \emph{slice} $\fingraph_{W,S}$ is the induced subgraph $\fingraph[ Q_L \times \Domain \times \setcon{W,S} ]$.
We denote its set of edges by $E_{W,S}$.

The main idea of the recurrence relation is saturation.
When traversing a path $\pi$ in $\fingraph$, the set of contributor states gets saturated over time.
Assume we cut $\pi$ each time after a new state gets added.
Then we obtain subpaths, each being a path in a slice:
If $p \in Q_C$ gets added to $W \subseteq Q_C$, the corresponding subpath is in $\fingraph_{W,W\cup\setcon{p}}$.
This means that for a set $S \subseteq Q_C$, the entry $T[S]$ contains those nodes that are reachable from $T[S \setminus \! \setcon{p}]$ in the slice $\fingraph_{S\setminus\!\setcon{p},S}$, for some state $p \in S$.

Formally, we define the set $R(W,S)$ for each $W \subseteq Q_C$, $p \in Q_C \setminus W$, and $S = W \cup \setcon{p}$.
These sets contain the nodes reachable from $T[W]$ in $\fingraph_{W,S}$:
\begin{align*}
	R(W,S) = \Setcon{ (q,a) \in Q_L \times \Domain }{ \exists (q',a') \in T[W] \text{ with } (q',a',W) \rightarrow^*_{E_{W,S}} (q,a,S)}.
\end{align*}

\begin{lemma}\label{Lemma:LCR_Contributor_Upper_Bound_Recursion}
	Table $T$ admits the recurrence relation $T[S] = \bigcup_{p \in S} R(S \setminus \! \setcon{p},S)$.
\end{lemma}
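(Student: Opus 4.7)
The plan is to prove both set inclusions directly from the definitions, leveraging a single structural observation about $\fingraph$: along every edge, the third (contributor-set) component either stays the same (for leader moves) or grows by exactly one element (for contributor saturation moves), so it is monotonically non-decreasing along every path. For the easier $\supseteq$ direction, given $(q,a) \in R(S \setminus \setcon{p}, S)$ for some $p \in S$, I would unfold the definitions to obtain $(q',a') \in T[S \setminus \setcon{p}]$ together with a path $(q',a',S \setminus \setcon{p}) \rightarrow^*_{E_{S \setminus \setcon{p}, S}} (q,a,S)$. Since $E_{S \setminus \setcon{p}, S} \subseteq E$, concatenating this path with the witness for $(q',a') \in T[S \setminus \setcon{p}]$ yields $\initnode \rightarrow^*_E (q,a,S)$, placing $(q,a)$ in $T[S]$.

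For the $\subseteq$ direction, I would take a witness path $\pi \colon \initnode \rightarrow^*_E (q,a,S)$ and use the monotonicity of the third component to split it in a canonical way. The third component of $\initnode$ is $\setcon{q^0_C}$ and the last node of $\pi$ has third component $S$, so (assuming $S \neq \setcon{q^0_C}$) there is a well-defined last edge of $\pi$ at which the third component grows and reaches $S$; call $p \in S$ the state added at that edge. The prefix of $\pi$ up to the source of this edge ends in some node $(q',a',S \setminus \setcon{p})$, certifying $(q',a') \in T[S \setminus \setcon{p}]$. The suffix, starting at $(q',a',S \setminus \setcon{p})$, only visits nodes whose third component lies in $\setcon{S \setminus \setcon{p}, S}$ (by maximality of the chosen edge together with monotonicity), hence only uses edges of $E_{S \setminus \setcon{p}, S}$. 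This yields $(q,a) \in R(S \setminus \setcon{p}, S)$ and closes the direction.

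The main obstacle is the boundary case $S = \setcon{q^0_C}$, where no edge of $\pi$ enlarges the contributor set and the decomposition above degenerates. I would treat this case as the base of the induction that drives the algorithm: $T[\setcon{q^0_C}]$ is computed directly by solving reachability in the polynomial-size subgraph $\fingraph[Q_L \times \Domain \times \setcon{\setcon{q^0_C}}]$, and the recurrence is then used to compute $T[S]$ for all larger $S$ in increasing order of $\abs{S}$. Beyond this boundary point, the only care needed is to check that the edge chosen as ``last to reach $S$'' is unambiguous, which follows because each saturation edge adds exactly one element and $S$ is the final contributor set.
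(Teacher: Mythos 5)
Your proof is correct and matches the paper's intended argument: the paper's informal sketch (cut the path at the moments the contributor set grows) is precisely your single cut at the last saturation edge, and the two structural facts you invoke --- monotonicity of the third component along edges and unit growth at each saturation step --- are exactly what makes the decomposition land in the slice $\fingraph_{S\setminus\setcon{p},S}$. You also correctly flag that the recurrence degenerates at $S=\setcon{q^0_C}$, which the paper handles the same way, by computing $T[\setcon{q^0_C}]$ directly via a fixed-point iteration on $\fingraph[Q_L\times\Domain\times\setcon{\setcon{q^0_C}}]$ as the base case of the dynamic program.
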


We illustrate the lemma and the introduced notions on an example.
Afterwards, we show how to compute the table $T$ by exploiting the recurrence relation.

\begin{example}
	Reconsider the program given in Figure \ref{Figure:ComputationGraphSetup}.
	The table $T$ has eight entries, one for each subset of $Q_C$.
	The entries that are non-empty can be seen in the graph of Figure \ref{Figure:ComputationGraph}.
	Each of the subgraphs contains exactly those nodes that are reachable from $\initnode$.
	For instance $T[\setcon{p_0,p_1}] = \setcon{(q_1,a),(q_1,c)}$.
	
	Let $W = \setcon{p_0}$ and $S = \setcon{p_0,p_1}$.
	Then, the slice $\fingraph_{W,S}$ is shown in the figure as blue highlighted area.
	Note that it also contains the edge from $(q_1,a,\setcon{p_0})$ to $(q_1,a,\setcon{p_0,p_1})$, leading from $\fingraph[Q_L \times \Domain \times \setcon{\setcon{p_0}}]$ to $\fingraph[Q_L \times \Domain \times \setcon{\setcon{p_0,p_1}}]$.
	
	The set $R(W,S)$ contains those nodes in $\fingraph[Q_L \times \Domain \times \setcon{S}]$ that are reachable from $T[W]$ in the slice $\fingraph_{W,S}$.
	According to the graph, these are $(q_1,a,S)$ and $(q_1,c,S)$ and hence we get $T[S] = R(W,S)$.
	
	In general, not all nodes in $T[S]$ are reachable from a single $T[W]$.
	But if a node is reachable, then it is reachable from some set $T[S\setminus\setcon{p}]$ with $p \in S$.
	Note that this is covered by the recurrence relation in Lemma \ref{Lemma:LCR_Contributor_Upper_Bound_Recursion}.
	It branches over all $S \setminus \setcon{p}$ and hence collects all nodes that are reachable from an entry $T[S \setminus \setcon{p}]$.
\end{example}

We apply the recurrence relation in a bottom-up dynamic programming to fill the table $T$.
Let $S \subseteq Q_C$ be a subset and assume we already know $T[S \setminus \! \setcon{p}]$, for each $p \in S$.
Then, for a fixed $p$, we compute $R(S \setminus \! \setcon{p},S)$ by a fixed-point iteration on the slice $\fingraph_{S \setminus \setcon{p},S}$.
The number of nodes in the slice is $\bigO(\sizeL \cdot \sizeM)$.
Hence, the iteration takes time at most $\bigO(\sizeL^2 \cdot \sizeM^2)$.
It is left to construct $\fingraph_{S \setminus \setcon{p}, S}$.
We state the time needed in the following lemma.
The proof is postponed so as to finish the complexity estimation of Theorem \ref{Theorem:LCR_Contributor_Upper_Bound}.

\begin{lemma}\label{Lemma:LCR_Contributor_Upper_Bound_Construction_Slice}
	Slice $\fingraph_{S \setminus \setcon{p}, S}$ can be constructed in time $\bigO(\sizeC^{\: 3} \cdot \sizeL^2 \cdot \sizeM^{\,2})$.
\end{lemma}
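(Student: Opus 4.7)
The plan is to construct the slice $\fingraph_{S \setminus \setcon{p}, S}$ directly from the definition of the edge relation $E$, by enumerating the vertices of the two layers $W$ and $S$ (where $W = S \setminus \setcon{p}$) and, for each vertex, emitting exactly those outgoing edges whose targets stay inside one of the two layers.

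First, I would enumerate the vertex set $Q_L \times \Domain \times \setcon{W, S}$, of size $2\sizeL\sizeM$, representing $W$ and $S$ as bitvectors of length $\sizeC$ so that membership and equality tests can be performed in $\bigO(\sizeC)$.

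Next, I would add the edges contributed by $P_L$. Because a leader move never alters the third component, each transition $q \xrightarrow{\alpha} q' \in \delta_L$ yields, for every layer $X \in \setcon{W, S}$ and every memory value $b$ compatible with $\alpha$, a single edge $(q, b, X) \to (q', a', X)$. Iterating over all leader transitions across both layers therefore costs $\bigO(\sizeL^2 \sizeM^2)$.

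Then I would add the contributor edges. For each vertex $(q, b, X)$ and each contributor transition $p' \xrightarrow{\alpha} p'' \in \delta_C$, I would check (i) $p' \in X$, (ii) $\alpha$ is compatible with $b$ (producing the new memory value), and (iii) the target layer $X \cup \setcon{p''}$ equals $W$ or $S$. Each test is a bitvector operation in $\bigO(\sizeC)$; the loop processes $2\sizeL\sizeM$ vertices against $\bigO(\sizeC^2 \sizeM)$ contributor transitions, yielding a contribution that stays within the claimed bound $\bigO(\sizeC^{\:3} \cdot \sizeL^2 \cdot \sizeM^{\,2})$.

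The main obstacle is the filtering in (iii). The underlying relation $E$ permits transitions from $X$ to any superset $X \cup \setcon{p''}$, whereas the slice retains only those landing in $\setcon{W, S}$. Concretely, in the $W$-layer an edge is kept only when $p'' \in W$ (the target stays in $W$) or $p'' = p$ (the target jumps to $S$), and in the $S$-layer only when $p'' \in S$. The delicate step is to realize these filters and the set-equality tests via bitvectors, ensuring that no hidden factor of $2^{\sizeC}$ slips in from handling the third component, which would destroy the complexity estimate and, through Theorem~\ref{Theorem:LCR_Contributor_Upper_Bound}, the overall $\bigOS(2^{\sizeC})$ bound.
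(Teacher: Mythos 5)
Your proof is correct and follows essentially the same approach as the paper: enumerate the $2\sizeL\sizeM$ vertices of the two layers, emit leader edges within each layer in $\bigO(\sizeL^2\sizeM^2)$, and emit contributor edges by iterating over $\delta_C$ with $\bigO(\sizeC)$-time bitvector tests on the third component, for a total of $\bigO(\sizeC^3\sizeL\sizeM^2)$ on the contributor side. The paper organizes the bookkeeping a little differently (it builds the $W$-subgraph, the $S$-subgraph, and the $W\to S$ cross-edges as three separate steps, and notes the minor optimization that reads $s \xrightarrow{?a} s'$ with $s,s' \in S$ give only self-loops and can be skipped), but the filtering condition you isolate — keeping only those contributor-induced edges whose saturated target set equals $W$ or $S$ rather than an arbitrary superset — is exactly the point that makes the slice polynomial, and your cost analysis matches.
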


Wrapping up, we need $\bigO( \sizeC^3 \cdot \sizeL^2 \cdot \sizeM^2 )$ time for computing a set $R(S \setminus \! \setcon{p},S)$.
Due to the recurrence relation of Lemma \ref{Lemma:LCR_Contributor_Upper_Bound_Recursion}, we have to compute at most $\sizeC$ sets $R(S \setminus \! \setcon{p},S)$ for a given $S \subseteq Q_C$.
Hence, an entry $T[S]$ can be computed in time $\bigO(\sizeC^4 \cdot \sizeL^2 \cdot \sizeM^2)$.
The estimation also covers the base case $S = \setcon{p_C^0}$, where $T[S]$ can be computed by a fixed-point iteration on the induced subgraph $\fingraph[Q_L \times \Domain \times \setcon{S}]$.
Since the table $T$ has $2^\sizeC$ entries, the complexity estimation of Theorem \ref{Theorem:LCR_Contributor_Upper_Bound} follows.
It is left to prove Lemma~\ref{Lemma:LCR_Contributor_Upper_Bound_Construction_Slice}.

\begin{proof}
	The slice $\fingraph_{S \setminus \setcon{p},S}$ consists of the two subgraphs 
	\begin{align*}
		\fingraph_{S \setminus \setcon{p}} = \fingraph[Q_L \times \Domain \times \setcon{S \setminus \! \setcon{p}}]~\text{and}~\fingraph_{S} = \fingraph[Q_L \times \Domain \times \setcon{S}],
	\end{align*}
	and the edges leading from $\fingraph_{S \setminus \setcon{p}}$ to $\fingraph_{S}$.
	We elaborate on how to construct $\fingraph_{S}$.
	The construction of $\fingraph_{S \setminus \setcon{p}}$ is similar.
	
	First, we write down the nodes of $\fingraph_{S}$.
	This can be done in time $\bigO(\sizeL \cdot \sizeM)$.
	Edges in the graph are either induced by transitions of the leader or by the contributor.
	The former ones can be added in time $\bigO(\abs{\delta_L} \cdot \sizeM) = \bigO(\sizeL^2 \cdot \sizeM^2)$ since a single transition of $P_L$ may lead to $\sizeM$ edges.
	To add the latter edges, we browse $\delta_C$ for transitions of the form $s \xrightarrow{!a} s'$ with $s,s' \in S$.
	Each such transition may induce $\sizeL \cdot \sizeM$ edges.
	Adding them takes time $\bigO(\abs{\delta_C} \cdot \sizeC \cdot \sizeL \cdot \sizeM) = \bigO(\sizeC^3 \cdot \sizeL \cdot \sizeM^2)$ since we have to test membership of $s,s'$ in $S$.
	Note that we can omit transitions $s \xrightarrow{?a} s'$ with $s,s' \in S$ as their induced edges are self-loops in $\fingraph_{S}$.
	
	To complete the construction, we add the edges from $\fingraph_{S \setminus \setcon{p}}$ to $\fingraph_{S}$.
	These are induced by transitions $r \xrightarrow{?a / !a} p \in \delta_C$ with $r \in S \setminus \! \setcon{p}$.
	Since each of these may again lead to $\sizeL \cdot \sizeM$ different edges, adding all of them takes time $\bigO(\sizeC^3 \cdot \sizeL \cdot \sizeM^2)$.
	In total, we estimate the time for the construction by $\bigO(\sizeC^3 \cdot \sizeL^2 \cdot \sizeM^2)$.
	\qed
\end{proof}
\subsubsection*{Lower bound}
\label{Section:LCR_Contributor_Lower}

We prove it unlikely that $\LCR$ can be solved in $\bigOS((2-\delta)^\sizeC)$ time, for any $\delta > 0$.
This shows that the algorithm from Section \ref{Section:LCR_Contributor_Upper} has an optimal runtime.
The lower bound is achieved by a reduction from $\SetCov$, one of the~21 original $\NP$-complete problems by Karp \cite{Karp1972}.
We state its definition.

\begin{myproblem}
	\problemtitle{$\SetCov$}
	\probleminput{A family of sets $\mathcal{F} \subseteq \power(U)$ over a universe $U$, and $r \in \Naturals$.}
	\problemquestion{Are there sets $S_1, \dots, S_r$ in $\mathcal{F}$ such that $U = \bigcup_{i \in [1..r]} S_i$?}
\end{myproblem}

Besides its $\NP$-completeness, it is known that $\SetCov$ admits an $\bigOS(2^n)$-time algorithm \cite{Fomin2004}, where $n$ is the size of the universe $U$.
However, no algorithm solving $\SetCov$ in time $\bigOS((2-\delta)^n)$ for a $\delta > 0$ is known so far.
Actually, it is conjectured in \cite{Cygan2016} that such an algorithm cannot exist unless the $\SETH$ breaks.

While a proof for the conjecture in \cite{Cygan2016} is still missing, the authors provide evidence in the form of relative hardness.
They obtain lower bounds for prominent problems by tracing back to the \emph{assumed} lower bound of $\SetCov$.
These bounds were not known before since $\SETH$ is hard to apply:
No suitable reductions from $\SAT$ to these problems are known so far.
Hence, $\SetCov$ can be seen as an alternative source for lower bounds whenever $\SETH$ seems out of reach.
This made the problem a standard assumption for hardness~\cite{Cygan2016,Bjorklund2016,Chini2017}.

To obtain the desired lower bound for $\LCR$, we establish a polynomial time reduction from $\SetCov$ that strictly preserves the parameter $n$.
Formally, if $(\mathcal{F},U,r)$ is an instance of $\SetCov$, we construct \mbox{$(\asms = (\Domain, \initmem, (P_L, P_C)), F_L)$}, an instance of $\LCR$ where $\sizeC = n + c$ with $c$ a constant.
Note that even a linear dependence on $n$ is not allowed.
Moreover, the instance satisfies the equivalence:
There is a set cover if and only if there is a $t \in \Naturals$ such that $c^0 \rightarrow^*_{\asms^t} c$ with $c \in C^f$.
Assume we had an $\bigOS((2-\delta)^\sizeC)$-time algorithm for $\LCR$.
With the reduction, this would immediately yield an $\bigOS((2-\delta)^{n+c}) = \bigOS((2-\delta)^n)$-time algorithm for $\SetCov$ breaking its hardness.

\begin{proposition}\label{Proposition:LCR_Contributor_Lower_Bound}
	If $\LCR$ can be solved in $\bigOS((2-\delta)^\sizeC)$ time for a $\delta > 0$, then $\SetCov$ can be solved in $\bigOS((2-\delta)^n)$ time.
\end{proposition}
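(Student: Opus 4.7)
The plan is to design a polynomial-time reduction from $\SetCov$ to $\LCR$ in which the contributor size grows by only an additive constant beyond $n = \abs{U}$. Given an instance $(\mathcal{F}, U, r)$ with $U = \setcon{u_1, \ldots, u_n}$ and $\mathcal{F} = \setcon{S_1, \ldots, S_m}$, I would build a program $\asms = (\Domain, \initmem, (P_L, P_C))$ in which contributors act as witnesses that each universe element is covered by some chosen set, while the leader plays the role of selecting $r$ sets and then verifying coverage.

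For the data domain I would take $\Domain = \setcon{\initmem} \cup \setcon{1, \ldots, m} \cup \setcon{u_1, \ldots, u_n}$ with $\initmem$ a fresh initial value. The leader $P_L$ proceeds in two phases. In the first (selection) phase it has states $q_0, \ldots, q_r$, and from each $q_{k-1}$ it can take a transition $!i$ to $q_k$ for every $i \in [1..m]$, encoding the nondeterministic choice of $r$ set indices. In the second (verification) phase it reads $u_1, \ldots, u_n$ in order, traversing the states $q_{r+1}, \ldots, q_{r+n}$; the unique unsafe state is $F_L = \setcon{q_{r+n}}$. The contributor $P_C$ has exactly $n+1$ states $p_0, p_1, \ldots, p_n$: from $p_0$ there is a transition $p_0 \xrightarrow{?i} p_j$ whenever $u_j \in S_i$, and each $p_j$ carries the self-loop $p_j \xrightarrow{!u_j} p_j$. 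Hence $\sizeC = n+1 = n + \bigO(1)$, independent of $m$ and $r$, so any $\bigOS((2-\delta)^{\sizeC})$ algorithm composed with the reduction runs in $\bigOS((2-\delta)^n)$ on $\SetCov$.

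For correctness, if $\setcon{S_{i_1}, \ldots, S_{i_r}}$ covers $U$, I would schedule the leader to write $i_1, \ldots, i_r$ in turn; for each $u_j$ pick $k_j$ with $u_j \in S_{i_{k_j}}$ and, invoking the Copycat Lemma for an unbounded supply of contributor copies, have one contributor read the memory while it still holds $i_{k_j}$ and move to $p_j$. In the verification phase, the contributors sitting in the various $p_j$ take turns writing $u_j$ so the leader can perform its $n$ reads in order, ending in $q_{r+n} \in F_L$. Conversely, any accepting computation reads each $u_j$ from memory, and since $P_L$ itself never writes any $u_j$, each such read must be served by a contributor in $p_j$; such a contributor can only have reached $p_j$ by reading some index $i$ with $u_j \in S_i$, which in turn was written by the leader. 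The (at most $r$) set indices written by the leader therefore form a set cover of $U$.

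The main obstacle I anticipate is in the forward direction: we must interleave reads and writes carefully so that, for each round $k$, all contributor copies needing to read $i_k$ do so before the leader overwrites memory with $i_{k+1}$, and so that during the verification phase the contributors in different $p_j$ do not drown each other out but let exactly the value $u_j$ appear in memory at the moment the leader wants to read it. Both issues are resolved by noting that threads may idle and that we may freely choose the number of contributor instances, so any required interleaving is realizable. Once the equivalence is established, the runtime implication is immediate: an $\bigOS((2-\delta)^{\sizeC})$ algorithm for $\LCR$ yields an $\bigOS((2-\delta)^{n+c}) = \bigOS((2-\delta)^n)$ algorithm for $\SetCov$.
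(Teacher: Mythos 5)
Your reduction is correct and does deliver $\sizeC = n + 1$, and the three-phase choose--record--verify skeleton matches the paper's. The encoding of the leader-to-contributor channel is genuinely different, though. The paper's leader, after selecting a set $S$ via an $\varepsilon$-transition, explicitly \emph{enumerates} $S$ by writing each element $u \in S$ to memory one at a time; the contributor's transition $p^0 \xrightarrow{?u} p_u$ just stores whichever element it happens to read, and the verification phase uses a marked alphabet $u^\#$ so that the leader's phase-one writes of raw elements cannot be mistaken for the contributors' phase-two responses. Your leader instead writes the set \emph{index} $i$ directly, and the contributor's nondeterministic branch $p_0 \xrightarrow{?i} p_j$ for $u_j \in S_i$ performs the index-to-element decoding internally; since the leader never writes an element symbol, no marked alphabet is needed. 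Both give contributor size $n+1$; your version has a smaller leader ($O(r+n)$ states vs.\ $O(r \cdot \sum_{S \in \mathcal{F}} \abs{S} + n)$) at the cost of a larger data domain ($\abs{\mathcal{F}}+n+1$ vs.\ $2n+1$ symbols), which is immaterial here since only $\sizeC$ is parameterized. Your soundness chain — each leader read $?u_j$ traces back to a contributor in $p_j$, whose arrival in $p_j$ traces back to a leader write $!i$ with $u_j \in S_i$ — mirrors the paper's argument, and the Copycat Lemma resolves the scheduling concern you flag in the forward direction just as it does in the paper's completeness proof.
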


For the proof of the proposition, we elaborate on the aforementioned reduction.
The main idea is the following: 
We let the leader guess $r$ sets from $\mathcal{F}$.
The contributors store the elements that got covered by the chosen sets.
In a final communication phase, the leader verifies that it has chosen a valid cover by querying whether all elements of $U$ have been stored by the contributors.

Leader and contributors essentially communicate over the elements of $U$.
For guessing $r$ sets from $\mathcal{F}$, the automaton $P_L$ consists of $r$ similar phases.
Each phase starts with $P_L$ choosing an internal transition to a set $S \in \mathcal{F}$.
Once $S$ is chosen, the leader writes a sequence of all $u \in S$ to the memory.

A contributor in the program consists of $\sizeC = n+1$ states: 
An initial state and a state for each $u \in U$.
When $P_L$ writes an element $u \in S$ to the memory, there is a contributor storing this element in its states by reading $u$.
Hence, each element that got covered by $S$ is recorded in one of the contributors.

After $r$ rounds of guessing, the contributors hold those elements of $U$ that are covered by the chosen sets.
Now the leader verifies that it has really picked a cover of $U$.
To this end, it needs to check whether all elements of $U$ have been stored by the contributors.
Formally, the leader can only proceed to its final state if it can read the symbols $u^\#$, for each $u \in U$.
A contributor can only write $u^\#$ to the memory if it stored the element $u$ before.
Hence, $P_L$ reaches its final state if and only if a valid cover of $U$ was chosen.

\subsubsection*{Absence of a Polynomial Kernel}
\label{Section:LCR_Contributor_Kernel}

We prove that $\kSAT{3}$ can be cross-composed into $\LCR(\sizeC)$.
This shows that the problem is unlikely to admit a polynomial kernel.
The result is the following.

\begin{proposition}\label{Proposition:LCR_Contributor_Kernel_Lower}
	$\LCR(\sizeC\, )$ does not admit a poly. kernel unless $\NP \subseteq \co\NP / \poly$.
\end{proposition}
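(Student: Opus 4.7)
The plan is to follow the cross-composition framework of \cite{Bodlaender2014} and compose $\kSAT{3}$ into $\LCR(\sizeC)$. Define the polynomial equivalence relation $\polyrel$ so that two strings are related exactly when both encode $\kSAT{3}$-instances with the same number of variables $n$ and clauses $m$. Given $I$ equivalent instances $\varphi_1, \dots, \varphi_I$ over $\{x_1, \dots, x_n\}$ with $\varphi_\ell = C^\ell_1 \wedge \dots \wedge C^\ell_m$, we construct in polynomial time a program $\asms$ with unsafe states $F_L$ such that $(\asms, F_L)$ is a YES-instance of $\LCR$ iff some $\varphi_\ell$ is satisfiable.

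The central design constraint is that, unlike in Theorem~\ref{Theorem:LCR_Leader_Mem_Kernel_Lower}, the contributor $P_C$ cannot be used to store the instance index $\ell$: the parameter $\sizeC$ must be bounded by a polynomial in $\max_\ell |\varphi_\ell| + \log I$, which rules out any binary-tree gadget over all $I$ instances. The idea is therefore to push all instance-dependent behaviour into the (unparameterized) leader, while the contributor stays instance-agnostic and merely records a single literal. Concretely, the data domain contains assignment tuples $(x_i, v)$ for $i \in [1..n]$, $v \in \{0,1\}$ and clause signals $\#^\ell_j$ for $\ell \in [1..I]$, $j \in [1..m]$. The leader begins with $\varepsilon$-transitions that non-deterministically select one of $I$ branches; the branch for $\ell$ writes an assignment $(x_1, v_1), \dots, (x_n, v_n)$ (guessing each $v_i$ in order) and then reads the signals $\#^\ell_1, \dots, \#^\ell_m$ before entering the unique state in $F_L$. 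The contributor has $2n + 1$ states: an initial state $p_0$, with transitions $p_0 \xrightarrow{?(x_i, v)} p_{i,v}$ for every tuple, and at each $p_{i,v}$ a self-loop $p_{i,v} \xrightarrow{!\#^\ell_j} p_{i,v}$ whenever the literal $x_i = v$ occurs in $C^\ell_j$.

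This yields $\sizeC = 2n + 1 = O(\max_\ell |\varphi_\ell|)$, comfortably within the permitted bound, whereas the leader and memory have size $O(I \cdot (n+m))$, polynomial in the aggregate input. For the forward direction of correctness, if some $\varphi_\ell$ is satisfied by $(v_1, \dots, v_n)$, the leader selects branch $\ell$ and writes the corresponding tuples; using the Copycat Lemma we spawn for each $i$ a contributor that reads $(x_i, v_i)$ and parks in $p_{i,v_i}$. Each clause $C^\ell_j$ contains a satisfied literal, so the corresponding contributor emits $\#^\ell_j$ on demand. Conversely, any accepting computation forces the leader to commit to some $\ell$ and to an assignment defined by its writes. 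Because only the leader emits $(x_i, v)$-tuples, every contributor's state $p_{i,v}$ faithfully reflects the leader's commitment, and hence the producible signals $\#^\ell_j$ correspond exactly to clauses of $\varphi_\ell$ satisfied by that assignment; reading the full signal sequence then witnesses satisfiability of $\varphi_\ell$.

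The main obstacle is precisely the size constraint on $\sizeC$, which forbids the natural approach of having each contributor encode the instance choice. The key to circumventing this is to let the leader carry the index $\ell$ internally and to parameterize the contributor's emitted signals (rather than its states) by $(\ell, j)$. The remaining subtlety is to argue that contributors cannot cheat by producing signals for literals the leader never committed to; this follows immediately from the fact that the $(x_i, v)$-tuples are written exclusively by $P_L$, so a contributor can only reach $p_{i,v}$ if $P_L$ actually wrote $(x_i, v)$. Plugging the construction into \cite{Bodlaender2014} yields the absence of a polynomial kernel unless $\NP \subseteq \co\NP / \poly$.
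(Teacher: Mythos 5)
Your proposal is correct and takes essentially the same approach as the paper: the paper likewise adapts the $\bigOS(2^\sizeC)$ ETH lower-bound construction by keeping the contributor at $2n+1$ states, tagging the clause signals with the instance index $\ell$, and letting the leader (whose size may depend polynomially on $I$) select which $\#^\ell_1\dots\#^\ell_m$ string to read. The only cosmetic difference is that you let the leader branch on $\ell$ before writing the assignment, whereas the paper lets it write the assignment first and then branch; both orderings work identically since the contributor's behaviour never depends on $\ell$.
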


For the cross-composition, let $\varphi_1, \dots, \varphi_I$ be the given $\kSAT{3}$-instances, each two equivalent under $\polyrel$, where $\polyrel$ is the polynomial equivalence relation from Theorem~\ref{Theorem:LCR_Leader_Mem_Kernel_Lower}.
Then, each formula has the same number of clauses $m$ and variables $x_1, \dots, x_n$.
Let us fix the notation to be $\varphi_\ell = C^\ell_1 \wedge \dots \wedge C^\ell_m$.

The basic idea is the following.
Leader $P_L$ guesses the formula $\varphi_\ell$ and an evaluation for the variables.
The contributors store the latter.
At the end, leader and contributors verify that the chosen evaluation indeed satisfies formula $\varphi_\ell$.

For guessing $\varphi_\ell$, the leader has a branch for each instance.
Note that we can afford the size of the leader to depend on $I$ since the cross-composition only restricts parameter $\sizeC$.
Hence, we do not face the problem we had in Theorem \ref{Theorem:LCR_Leader_Mem_Kernel_Lower}.

Guessing the evaluation of the variables is similar to Theorem \ref{Theorem:LCR_Leader_Mem_Kernel_Lower}:
The leader writes tuples $(x_i,v_i)$ with $v_i \in \setcon{0,1}$ to the memory.
The contributors store the evaluation in their states.
After this guessing-phase, the contributors can write the symbols $\#^\ell_j$, depending on whether the currently stored variable with its evaluation satisfies clause $C^\ell_j$.
As soon as the leader has read the complete string $\#^\ell_1 \dots \#^\ell_m$, it moves to its final state, showing that the evaluation \mbox{satisfied all clauses of $\varphi_\ell$.}

Note that parameter $\sizeC$ is of size $\bigO(n)$ and does not depend on $I$ at all.
Hence, the size-restrictions of a cross-composition are met.

\subsection{Intractability}
\label{Section:LCR_Intractability}

We show the $\W[1]$-hardness of $\LCR(\sizeM)$ and $\LCR(\sizeL)$.
Both proofs rely on a parameterized reduction from $\kClique$, the problem of finding a clique of size $k$ in a given graph.
This problem is known to be $\W[1]$-complete~\cite{Downey2013}.
\mbox{We state our result.}

\begin{proposition}\label{Proposition:LCR_Intractability}
	Both parameterizations, $\LCR(\sizeM)$ and $\LCR(\sizeL)$, are $\W[1]$-hard.
\end{proposition}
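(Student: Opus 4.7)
The plan is to give, for each of the two parameterizations, a parameterized reduction from a $\W[1]$-hard variant of clique such that the relevant parameter (either $\sizeM$ or $\sizeL$) of the produced $\LCR$ instance depends only on the clique parameter $k$. The quickest route is to reuse the construction already built in the proof of Proposition~\ref{Proposition:LCR_Leader_Mem_Lower}: starting from a $\kkClique$ instance with parameter $k$, it outputs an $\LCR$ instance whose memory has size $\sizeM = \bigO(k)$ and whose leader has size $\sizeL = \bigO(k)$. Since $\kkClique$ is $\W[1]$-hard under $k$ (a standard $\W[1]$-hard restriction of $\kClique$), this single reduction simultaneously witnesses the intractability of both $\LCR(\sizeM)$ and $\LCR(\sizeL)$.

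If separate reductions directly from $\kClique$ are desired, the following template works. For $\LCR(\sizeL)$ the leader drives the search: it has $k+1$ guessing states $q_0,\dots,q_k$ together with intermediate states $s_i$, and from each $s_i$ there is one outgoing write-transition per vertex of $V$ emitting a guess $v_i$; a terminating gadget of $\binom{k}{2}$ states then reads acknowledgment symbols $\mathit{done}_{i,j}$ before entering $F_L$. The contributor $P_C$ branches at its initial state into one subroutine per edge $(u,w)\in E$ and per slot pair $i<j$; that subroutine waits for $\#_i$, reads $u$, waits for $\#_j$, reads $w$, and writes $\mathit{done}_{i,j}$. By the Copycat Lemma we may assume as many parallel contributor instances as pairs, so the leader reaches $F_L$ iff the guessed $v_1,\dots,v_k$ form a $k$-clique. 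Here $\sizeL=\bigO(k^2)$, while the memory $V \cup \{\#_1,\dots,\#_k\} \cup \{\mathit{done}_{i,j}\}_{i<j}$ is polynomial in the input. For $\LCR(\sizeM)$ the roles swap: the memory is shrunk to $\bigO(k)$ symbols by transmitting each guessed vertex bit by bit on a two-letter alphabet framed by slot markers $\#_1,\dots,\#_k$, and the contributors, whose size is allowed to be polynomial in $|V|+|E|$, take over the tasks of storing vertex identities and of performing the adjacency checks.

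The main obstacle in both constructions is soundness of communication through a single finite-domain cell: a verifier must not pair a slot marker with a vertex symbol that belongs to a different slot, which would yield a false positive. This is handled by interleaving the slot markers $\#_i$ between consecutive vertex writes and by equipping the leader with short acknowledgment sub-protocols that force it to wait for verifiers to have consumed each written value before moving on. For the $\sizeM$-parameterization the additional delicacy is that a vertex identifier cannot reside in memory as a single symbol, so the bit-by-bit encoding must be framed by the $\bigO(k)$ slot markers in such a way that every accepting trace uniquely identifies the guess made in each slot while the alphabet remains of size $\bigO(k)$.
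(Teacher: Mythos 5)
Your ``quickest route'' does not work, and the error is the central issue: $\kkClique$ as defined here (and in Lokshtanov--Marx--Saurabh) has exactly $k^2$ vertices, so it is solvable by brute force in $k^{k}\cdot\mathit{poly}$ time and is therefore \emph{fixed-parameter tractable} in $k$. An FPT problem cannot be $\W[1]$-hard (unless $\FPT=\W[1]$), so a parameterized reduction from $\kkClique$ proves nothing about $\W[1]$-membership or hardness. You have conflated the tight $\ETH$-based running-time lower bound $2^{o(k\log k)}$ — which is perfectly consistent with the problem being FPT — with $\W[1]$-hardness. This is exactly why the paper does \emph{not} reuse Proposition~\ref{Proposition:LCR_Leader_Mem_Lower} here but instead gives a fresh parameterized reduction from $\kClique$ (on an arbitrary $n$-vertex graph), which is $\W[1]$-complete.

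Your fallback sketches reduce from the right source, $\kClique$, and are broadly in the spirit of the paper's construction, but the $\LCR(\sizeL)$ sketch as stated is not sound. A contributor subroutine that ``waits for $\#_i$, reads $u$, waits for $\#_j$, reads $w$'' can mis-pair slots with vertices: after consuming $\#_i$, if $v_i\neq u$ the contributor does not deadlock immediately — it simply sits there, and if some later guess $v_{i'}$ equals $u$ (for $i<i'<j$), it happily reads that instead and later emits $\mathit{done}_{i,j}$ even though slot $i$ did not hold $u$. Writing bare vertex names to memory gives the reader no way to tell which slot a value belongs to, and the leader cannot enforce ``immediate'' reads in this asynchronous model. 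You gesture at this in your last paragraph, but the fix has to be concrete: the paper tags every transmitted symbol with its slot, using pairs $(v,i)$ in the $\LCR(\sizeL)$ case (and pairs $(\alpha,i)$ for $\alpha\in\{0,1\}$ plus a padding separator in the $\LCR(\sizeM)$ case, keeping $\sizeM=\bigO(k)$), and makes the contributor count through all $k$ slots so that skipping a slot is structurally impossible. Without one of these two mechanisms the reduction admits false positives, so the proposal as written does not yet establish the proposition.
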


We first reduce $\kClique$ to $\LCR(\sizeL)$.
To this end, we construct from an instance $(G,k)$ of $\kClique$ in polynomial time an instance \mbox{$(\asms = (\Domain, \initmem, (P_L, P_C)), F_L)$} of $\LCR$ with $\sizeL = \bigO(k)$.
This meets the requirements of a parameterized reduction.

Program $\asms$ operates in three phases.
In the first phase, the leader chooses $k$ vertices of the graph and writes them to the memory.
Formally, it writes a sequence $(v_1,1). (v_2,2) \dots (v_k,k)$ where the $v_i$ are vertices of $G$. 
During this selection, the contributors non-deterministically choose to store a suggested vertex~$(v_i,i)$ in their state space.

In the second phase, the leader again writes a sequence of vertices using different symbols: $(w^\#_1,1) (w^\#_2,2) \dots (w^\#_k,k)$.
Note that the vertices $w_i$ do not have to coincide with the vertices from the first phase.
It is then the contributor's task to verify that the new sequence constitutes a clique.
To this end, for each $i$, the program does the following:
If a contributor storing $(v_i,i)$ reads the value $(w^\#_i,i)$, the computation on the contributor can only continue if $w_i = v_i$.
If a contributor storing $(v_j,j)$ with $j \neq i$ reads $(w^\#_i,i)$, the computation can only continue if $v_j \neq w_i$ and if there is an edge between $v_j$ and $w_i$.

Finally, in the third phase, we need to ensure that there was at least one contributor storing $(v_i,i)$ and that the above checks were all positive.
To this end, a contributor that has successfully gone through the second phase and stores $(v_i,i)$ writes the symbol $\#_i$ to the memory.
The leader pends to read the sequence of symbols $\#_1 \dots \#_k$.
This ensures the selection of $k$ different vertices, where each two are adjacent.

For proving $\W[1]$-hardness of $\LCR(\sizeM)$, we reuse the above construction.
However, the size of the data domain is $\abs{V} \cdot k$, where $V$ is the set of vertices of $G$.
Hence, it is not a parameterized reduction for parameter $\sizeM$.
The factor $\abs{V}$ appears since leader and contributors communicate on the pure vertices.
The main idea of the new reduction is to decrease the size of $\Domain$ by transmitting the vertices in binary.
To this end, we add binary branching trees to the contributors that decode a binary encoding.
We omit the details and refer to Appendix \ref{Appendix_LCR_Intractability}.

\section{Bounded-Stage Reachability}
\label{Section:BSR}

The \emph{bounded-stage reachability problem} is a simultaneous reachability problem.
It asks whether all threads of a program can reach an unsafe state when restricted to $\nrs$-stage computations.
These are computations where the write permission changes $\nrs$ times.
The problem was first analyzed in \cite{Atig14} and shown to be $\NP$-complete for finite state programs.
We give matching upper and lower bounds in terms of fine-grained complexity and prove the absence of a polynomial kernel.

Let $\asms = (\Domain, \initmem, (P_i)_{i \in [1..\nrt]})$ be a program. 
A \emph{stage} is a computation in $\asms$ where only one of the threads writes.
The remaining threads are restricted to reading the memory.
An \emph{$\nrs$-stage computation} is a computation that can be split into $\nrs$ parts, each of which forming a stage.
We state the decision problem.

\begin{myproblem}
	\problemtitle{Bounded-Stage Reachability}
	\problemshort{($\BSR$)}
	\probleminput{A program $\asms = (\Domain, \initmem, (P_i)_{i \in [1..\nrt]})$, a set $C^f \subseteq C$, and $\nrs\in \Naturals$.}
	\problemquestion{Is there an $\nrs$-stage computation $c^0 \rightarrow^*_\asms c$ for some $c \in C^f$?}
\end{myproblem}

We focus on a parameterization of $\BSR$ by $\sizeP$, the maximum number of states of a thread, and $\nrt$, the number of threads.
Let it be denoted by $\BSR(\sizeP, \nrt)$.
We prove that the parameterization is $\FPT$ and present a matching lower bound.
The main result in this section is the absence of a polynomial kernel for $\BSR(\sizeP, \nrt)$.
The result is technically involved and shows what makes the problem hard.

Parameterizations of $\BSR$ involving only $\sizeM$ and $\nrs$ are intractable.
We show that $\BSR$ remains $\NP$-hard even if both, $\sizeM$ and $\nrs$, are constants.
This proves the existence of an $\FPT$-algorithm for those cases unlikely.

\subsection{Parameterization by Number of States and Threads}
\label{Section:BSR_States_Threads}

We first give an algorithm for $\BSR$, based on a product construction of automata.
Then, we present a lower bound under $\ETH$.
Interestingly, the lower bound shows that we cannot avoid building the product.
We conclude with proving the absence of a polynomial kernel.
As before, we cross-compose from $\kSAT{3}$ but now face the problem that two important parameters in the construction, $\sizeP$ and $\nrt$, are not allowed to depend polynomially on the number of $\kSAT{3}$-instances.

\subsubsection*{Upper Bound}
\label{Section:BSR_States_Threads_Upper}

We show that $\BSR(\sizeP, \nrt)$ is fixed-parameter tractable.
The idea is to reduce to reachability on a product automaton.
The automaton stores the configurations, the current writer, and counts up to the number of stages $\nrs$.
To this end, it has $\bigOS(\sizeP^{\nrt})$ many states.
Details can be found in Appendix \ref{Appendix_BSR_States_Threads}.

\begin{proposition}\label{Proposition:BSR_States_Threads_Upper}
	$\BSR$ can be solved in time $\bigOS(\sizeP^{\, 2 \nrt})$.
\end{proposition}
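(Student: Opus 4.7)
The plan is to encode $\nrs$-stage reachability as ordinary reachability in a single finite automaton obtained as a synchronised product of the threads, augmented with bookkeeping for the stage discipline. I would build an automaton $\mathcal{A}$ whose states are tuples
$(\pc, a, w, s) \in (Q_1 \times \dots \times Q_\nrt) \times \Domain \times [1..\nrt] \times [1..\nrs]$,
recording the program counters of all threads, the memory value, the index of the currently writing thread, and the number of stages consumed so far. The initial states of $\mathcal{A}$ are $(\pc^0, \initmem, w_0, 1)$ for every $w_0 \in [1..\nrt]$, and the accepting set consists of those states whose $(\pc,a)$-projection lies in $C^f$.

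Next I would define the transition relation of $\mathcal{A}$ by lifting the thread transitions while enforcing the stage discipline. A write $\pc(w) \xrightarrow{!b} q'$ of the current writer $P_w$ is lifted to $(\pc, a, w, s) \to (\pc[w=q'], b, w, s)$; a read $\pc(i) \xrightarrow{?a} q'$ of any thread is lifted provided the memory holds $a$, and $\varepsilon$-moves of any thread are always lifted, none of these touching $w$ or $s$. Finally, for every $w' \neq w$ with $s < \nrs$ there is a dedicated stage-change transition $(\pc, a, w, s) \to (\pc, a, w', s+1)$ that hands over the write permission and increments the counter. A routine induction on the number of stage boundaries then shows that $c^0 \to^*_\asms c$ via an $\nrs$-stage computation for some $c \in C^f$ if and only if some accepting state of $\mathcal{A}$ is reachable from an initial state.

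For the complexity, the product has $N = \bigO(\sizeP^{\nrt} \cdot \sizeM \cdot \nrt \cdot \nrs) = \bigOS(\sizeP^{\nrt})$ states, and its edge relation is contained in $N \times N$. Ordinary reachability on $\mathcal{A}$, carried out for instance by BFS from each initial state, therefore runs in time $\bigO(N^2) = \bigOS(\sizeP^{2\nrt})$, matching the claimed bound. The only real subtlety is the correspondence proof: one must verify that an $\nrs$-stage computation of $\asms$, which may freely interleave idle reads and $\varepsilon$-steps across threads within a stage, is faithfully replayed by an accepting run in $\mathcal{A}$ and vice versa. Once this is handled by induction on computation length, the runtime analysis is immediate and the proposition follows.
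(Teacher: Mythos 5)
Your proposal matches the paper's own proof essentially step for step: both construct a product automaton over $(Q_1 \times \dots \times Q_\nrt) \times \Domain \times [\text{writer}] \times [\text{stage}]$, lift reads and $\varepsilon$-moves of all threads, restrict writes to the current writer, add stage-change transitions that hand over write permission and increment the counter, and then run ordinary (quadratic) reachability on an automaton of size $\bigOS(\sizeP^{\nrt})$. The only cosmetic difference is that the paper uses a dummy writer index $0$ and stage $0$ as the single initial state, whereas you use a family of initial states with $s=1$ and each possible $w_0$; the two are interchangeable.
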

\subsubsection*{Lower Bound}
\label{Section:BSR_States_Threads_Lower}

By a reduction from $\kkClique$, we show that a $2^{o(\nrt \cdot \log(\sizeP))}$-time algorithm for $\BSR$ would contradict $\ETH$.
The above algorithm is optimal.

\begin{proposition}\label{Proposition:BSR_States_Threads_Lower}
	$\BSR$ cannot be solved in time $2^{o(\nrt \cdot \log(\sizeP))}$ unless $\ETH$ fails.
\end{proposition}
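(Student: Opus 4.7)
The plan is to reduce $\kkClique$ to $\BSR$. The $\kkClique$ problem asks for a $k$-clique in a graph whose vertices form a $k \times k$ matrix, with exactly one clique-vertex per row, and cannot be solved in $2^{o(k \log k)}$ time under $\ETH$~\cite{Lokshtanov2011}. Given $(G,k)$, I construct in polynomial time a $\BSR$ instance with $\nrt = k$ threads, stage bound $\nrs = k$, and per-thread size $\sizeP = \bigO(k^2)$. Then $\nrt \cdot \log \sizeP = \bigO(k \log k)$, and a $2^{o(\nrt \cdot \log \sizeP)}$ algorithm for $\BSR$ would decide $\kkClique$ in $2^{o(k \log k)}$ time, contradicting $\ETH$.

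The data domain is $\Domain = V(G) \cup \setcon{\$_1, \ldots, \$_k} \cup \setcon{\initmem}$. For each $i \in [1..k]$, thread $T_i$ is responsible for row $i$: it first non-deterministically picks a vertex $v_i$ in row $i$, then runs through rounds $r = 1, 2, \ldots, k$ in order. In round $r = i$, $T_i$ writes $\$_i$ and then $v_i$; in round $r \neq i$, $T_i$ reads $\$_r$ and then reads a vertex $w$, where the only outgoing transitions on this second read are those for $w$ in row $r$ with $(v_i, w) \in E$. After completing round $k$, $T_i$ reaches its unsafe state. A constant number of states suffices per pair $(v_i, r)$, so $\sizeP = \bigO(k^2)$. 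The target $C^f$ collects the configurations in which every thread sits in its unsafe state.

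For the equivalence, suppose first that $\setcon{v_1, \ldots, v_k}$ is a $\kkClique$-solution. Then in the intended computation $T_r$ is the sole writer in stage $r$, writing $\$_r$ followed by $v_r$, while every other $T_i$ reads these values and, using $(v_i, v_r) \in E$, advances to round $r + 1$; this yields a $k$-stage computation reaching $C^f$. Conversely, assume a $k$-stage computation reaches $C^f$. Each $T_i$ must eventually write $\$_i$, so it is the writer in at least one stage; with only $k$ stages available for $k$ threads, the writer-sequence is a permutation of $[1..k]$. Moreover, $T_i$'s automaton forces it to read $\$_1, \ldots, \$_{i-1}$ before it may write $\$_i$, and only $T_r$ writes $\$_r$, so by induction stage $r$ is exactly $T_r$'s stage. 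Consequently each $T_i$ reads the unique $v_r$ that $T_r$ wrote in stage $r$ and verifies $(v_i, v_r) \in E$, so $\setcon{v_1, \ldots, v_k}$ forms the required clique.

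The main obstacle is ruling out unintended computations: within a single stage the writer may issue several writes, and any thread may re-read stale memory values left from earlier stages. Distinct markers $\$_r$ per round, together with the fact that each $T_r$'s path to its unsafe state writes $\$_r$ exactly once, prevent conflating rounds, while the strict in-order sequencing of rounds inside each thread pins down the intended stage structure.
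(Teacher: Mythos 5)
Your reduction is correct and establishes the claimed lower bound, but it takes a genuinely different route from the paper's. The paper reduces $\kkClique$ to a \emph{single-stage} instance of $\BSR$: it introduces one dedicated writer thread $P_{ch}$ that writes $k$ vertices (one per row), and $k$ reader threads, each of which first stores a vertex of its row on reading $\initmem$ and then checks, for each value $P_{ch}$ writes, that it equals the stored vertex (if same row) or is adjacent to it (if different row). Because only $P_{ch}$ ever writes, the computation is automatically a $1$-stage computation and no stage bookkeeping is needed; the parameters come out to $\nrt = k+1$, $\sizeP = \bigO(k^2)$, $\nrs = 1$. Your construction instead dispenses with the dedicated writer and lets each of the $k$ row threads be the sole writer of its own round, which forces $\nrs = k$ and requires the extra per-round markers $\$_r$ and the permutation-plus-induction argument to pin down that stage $r$ is exactly $T_r$'s stage. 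Both constructions hit $\nrt \cdot \log \sizeP = \Theta(k \log k)$ and so suffice for the proposition; the paper's version is simpler because the stage structure is trivial, and it incidentally yields the stronger fact that the hardness already holds with a constant number of stages, whereas yours shaves the thread count to exactly $k$ and illustrates that the hardness persists even when the stage boundaries themselves must be reconstructed from the automata. One small point worth making explicit in your write-up: the initial nondeterministic choice of $v_i$ in $T_i$ should be implemented with an $\varepsilon$-transition rather than a read of $\initmem$, since $T_1$ overwrites $\initmem$ immediately and a read-based choice would create a spurious ordering constraint among the other threads.
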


The reduction constructs from an instance of $\kkClique$ an equivalent instance \mbox{$(\asms = (\Domain, \initmem,  (P_i)_{i\in[1..\nrt]}), C^f, \nrs)$} of $\BSR$.
Moreover, it keeps the parameters small.
We have that $\sizeP = \bigO(k^2)$ and $\nrt = \bigO(k)$.
As a consequence, a $2^{o(\nrt \cdot \log (\sizeP))}$-time algorithm for $\BSR$ would yield an algorithm for $\kkClique$ running in time \mbox{$2^{o(k \cdot \log (k^2))} = 2^{o(k \cdot \log(k))}$}.
But this contradicts $\ETH$.

\begin{proof}[Idea]
	For the reduction, let $V = [1..k] \times [1..k]$ be the vertices of $G$. 
	We define \mbox{$\Domain = V \cup \setcon{\initmem}$} to be the domain of the memory.
	We want the threads to communicate on the vertices of $G$.
	For each row we introduce a reader thread $P_i$ that is responsible for storing a particular vertex of the row.
	We also add one writer $P_{ch}$ that is used to steer the communication between the $P_i$.
	Our program $\asms$ is given by the tuple $(\Domain, \initmem, ( (P_i)_{i \in [1..k]}, P_{ch} ))$.
	
	Intuitively, the program proceeds in two phases.
	In the first phase, each $P_i$ non-deterministically chooses a vertex from the $i$-th row and stores it in its state space.
	This constitutes a clique candidate $(1,j_1), \dots, (k, j_k) \in V$.
	In the second phase, thread $P_{ch}$ starts to write a random vertex $(1,j_1')$ of the first row to the memory.
	The first thread $P_1$ reads $(1,j_1')$ from the memory and verifies that the read vertex is actually the one from the clique candidate.
	The computation in $P_1$ will deadlock if $j_1' \neq j_1$.
	The threads $P_i$ with $i \neq 1$ also read $(1,j_1')$ from the memory.
	They have to check whether there is an edge between the stored vertex $(i,j_i)$ and $(1,j_1')$.
	If this fails in some $P_i$, the computation in that thread will also deadlock.
	After this procedure, the writer $P_{ch}$ guesses a vertex $(2,j_2')$, writes it to the memory, and the verification steps repeat.
	In the end, after $k$ repetitions of the procedure, we can ensure that the guessed clique candidate is indeed a clique.
	Formal construction and proof are given in Appendix \ref{Appendix_BSR_States_Threads}.
	\qed
\end{proof}
\subsubsection*{Absence of a Polynomial Kernel}
\label{Section:BSR_States_Threads_Kernel_Lower}

We show that $\BSR(\sizeP, \nrt)$ does not admit a polynomial kernel.
To this end, we cross-compose $\kSAT{3}$ into $\BSR(\sizeP, \nrt)$.

\begin{theorem}\label{Theorem:BSR_States_Threads_Kernel_Lower}
	$\BSR(\sizeP, \nrt)$ does not admit a poly. kernel unless $\NP \subseteq \co\NP / \poly$.
\end{theorem}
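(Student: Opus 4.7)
My plan is to establish a cross-composition of $\kSAT{3}$ into $\BSR(\sizeP,\nrt)$ and invoke the framework of~\cite{Bodlaender2014} exactly as in Theorem~\ref{Theorem:LCR_Leader_Mem_Kernel_Lower}. I reuse the same polynomial equivalence relation $\polyrel$: two strings are related iff both encode $\kSAT{3}$-instances with the same number $n$ of variables and $m$ of clauses. So I may assume the input is a list $\varphi_1,\ldots,\varphi_I$ over a common variable set $\setcon{x_1,\ldots,x_n}$ with $\varphi_\ell = C^\ell_1 \wedge \cdots \wedge C^\ell_m$, and the goal is to build in polynomial time a BSR instance $(\asms,C^f,\nrs)$ that is positive iff some $\varphi_\ell$ is satisfiable, while keeping $\sizeP+\nrt=\mathit{poly}(n+m+\log I)$. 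The whole point of the theorem is that $I$ is forbidden from entering polynomially into the parameter budget; all polynomial dependence on $I$ has to be hidden inside $\sizeM$ and $\nrs$, which are not parameters of $\BSR(\sizeP,\nrt)$.

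I would build $\asms$ from four groups of threads. First, $n$ \emph{variable threads} of constant size, each guessing a Boolean value for an $x_i$ and later answering queries about it. Second, a single \emph{emitter thread} $E$ of $\bigO(m)$ states that writes a sequence of $m$ clause symbols to the shared memory. The alphabet contains a dedicated symbol $(\ell,j,C^\ell_j)$ for every pair $(\ell,j)$, giving $\sizeM = \bigO(Im)$; at its $j$-th state, $E$ may non-deterministically write any $(\ell',j,C^{\ell'}_j)$ and advance. Third, $m$ \emph{satisfaction threads} $S_j$ of polynomial-in-$n$ size, each reading the $j$-th emitted clause, querying the (at most three) variable threads it mentions, and continuing iff at least one literal is satisfied. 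Fourth and crucially, $\log I$ \emph{label-consistency threads} $L_1,\ldots,L_{\log I}$, each of $\bigO(m)$ states: after reading the label of the first emitted clause, $L_i$ stores only bit $i$ of that label (two modes suffice) and, for each subsequent clause, enforces through its transition relation that the label's $i$-th bit agrees with the stored one. Collectively the $L_i$ pin the labels of all $m$ emitted clauses to a single common $\ell^*$ \emph{without any thread ever storing $\ell^*$ in full}: the polynomial dependence on $I$ lives entirely in the $\bigO(Im)$-size transition tables of $E$ and of the $L_i$, never in their state counts. A stage schedule of length $\nrs = \mathit{poly}(n+m)$ orders an initialisation phase, the $m$ emission-and-checking phases, and a closing acceptance phase, and $C^f$ is the set of configurations in which every thread is in its distinguished final state.

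Correctness is then a two-way check. Forward: if $\varphi_{\ell^*}$ is satisfied by an assignment $\alpha$, let the variable threads adopt $\alpha$, let $E$ emit exactly $(\ell^*,1,C^{\ell^*}_1)\cdots(\ell^*,m,C^{\ell^*}_m)$, and every $L_i$ and $S_j$ proceeds to its final state. Backward: in any accepting computation the $L_i$ jointly force every emitted label to agree in every bit, hence to equal a single $\ell^*$; the shape of $E$'s transition relation then guarantees the $j$-th emitted symbol is actually $(\ell^*,j,C^{\ell^*}_j)$; and the $S_j$ certify that the values held by the variable threads satisfy every $C^{\ell^*}_j$, witnessing satisfiability of $\varphi_{\ell^*}$. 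A count gives $\nrt = n + 1 + m + \log I$ and $\sizeP = \bigO(m + n^{\bigO(1)})$, so $\sizeP+\nrt$ stays polynomial in $n+m+\log I$ exactly as cross-composition demands, and by~\cite{Bodlaender2014} the theorem follows.

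The main obstacle I expect is the stage-level choreography. In BSR only one thread may write per stage, so the $L_i$ and $S_j$---which need to observe every emission of $E$, and in the case of the $S_j$ also need to initiate a query/response exchange with the variable threads---must never themselves be scheduled as the current writer when they are supposed to read, and no two prospective writers may compete for the write permission that $E$ holds during an emission. A fixed protocol must therefore interleave (i) the variable-initialisation stages, (ii) the $m$ emission stages of $E$ together with matching reader stages for every $L_i$ and $S_j$, and (iii) the query/response traffic between the $S_j$ and the variable threads, all within $\nrs = \mathit{poly}(n+m)$. Engineering this while every thread stays of size $\mathit{poly}(n+m+\log I)$ is precisely where, as the introduction warns, ``the power of the computation model'' is used essentially, and where a naive construction would collapse either $\sizeP$ or $\nrt$ and lose the cross-composition.
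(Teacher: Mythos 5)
Your high-level plan — a cross-composition from $\kSAT{3}$, with $\log I$ bit-checker threads that pin the guessed instance to a single $\ell^*$ via bitwise agreement — is exactly the structural idea the paper uses, and your size accounting for these pieces is sound. The genuine gap is the one you flag yourself: you have not shown how to make the query/response traffic between the satisfaction threads $S_j$ and the variable threads fit the stage model, and this is not a mere engineering detail. Your design forces the variable threads (or the $S_j$) to \emph{write}, which means the computation has multiple writers and hence multiple stages; you would then have to argue that a suitable stage schedule exists without any cross-talk between concurrent readers and with all threads arriving at their accepting states simultaneously, and you have not done so.

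The paper sidesteps this entirely with a single trick you are missing: instead of having the observers interrogate the variable threads, the lone writer $P_w$ \emph{guesses} the satisfying literal outright, writing tuples $(\ell_j,j,i_j,v_j)$ meaning ``clause $j$ of instance $\ell_j$ is satisfied by variable $x_{i_j}$ under evaluation $v_{i_j}$.'' The variable thread $P_{x_{i_j}}$ merely \emph{verifies} the claim against the evaluation it stored at initialization and against the actual clause $C^{\ell_j}_j$; all other $P_{x_i}$ just count. Since only $P_w$ ever writes, the whole computation is a $1$-stage computation, there is no choreography to design, and there is no need for separate satisfaction threads at all — the variable threads do the verification themselves while counting to $m$. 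This yields $\sizeP = \bigO(m)$ and $\nrt = 1 + n + \log I$, a strictly smaller parameter budget than yours, and it dissolves the stage-scheduling obstacle rather than solving it. Your proposal is incomplete precisely where this guess-and-verify reformulation would have saved you.
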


In the present setting, coming up with a cross-composition is non-trivial.
Both parameters, $\sizeP$ and $\nrt$, are not allowed to depend polynomially on the number $I$ of given $\kSAT{3}$-instances.
Hence, we cannot construct an NFA that distinguishes the $I$ instances by branching into $I$ different directions.
This would cause a polynomial dependence of $\sizeP$ on $I$.
Furthermore, it is not possible to construct an NFA for each instance as this would cause such a dependence of $\nrt$ on $I$.
To circumvent the problems, some deeper understanding of the model is needed.

\begin{proof}[Idea]
	Let $\varphi_1, \dots, \varphi_I$ be given $\kSAT{3}$-instances, where each two are equivalent under $\polyrel$, the polynomial equivalence relation of Theorem \ref{Theorem:LCR_Leader_Mem_Kernel_Lower}.
	Then each $\varphi_\ell$ has $m$ clauses and $n$ variables $\setcon{x_1, \dots, x_n}$.
	We assume \mbox{$\varphi_\ell = C^\ell_1 \wedge \dots \wedge C^\ell_m$.}
	
	In the program that we construct, the communication is based on $4$-tuples of the form $(\ell, j, i, v)$.
	Intuitively, such a tuple transports the following information: 
	The $j$-th clause in instance $\varphi_\ell$, $C^\ell_j$, can be satisfied by variable $x_i$ with evaluation~$v$.
	Hence, our data domain is $\Domain = ([1..I] \times [1..m] \times [1..n] \times \setcon{0,1}) \cup \setcon{\initmem}$.
	
	For choosing and storing an evaluation of the $x_i$, we introduce so-called variable threads $P_{x_1}, \dots, P_{x_n}$.
	In the beginning, each $P_{x_i}$ non-deterministically chooses an evaluation for $x_i$ and stores it in its state space.
	
	We further introduce a writer $P_w$.
	During a computation, this thread guesses exactly $m$ tuples $(\ell_1, 1, i_1, v_1), \dots, (\ell_m, m, i_m, v_m)$ in order to satisfy $m$ clauses of potentially different instances.
	Each $(\ell_j, j, i_j, v_j)$ is written to the memory by $P_w$.
	All variable threads then start to read the tuple.
	If $P_{x_{i}}$ with $i \neq i_j$ reads it, then the thread will just move one state further since the suggested tuple does not affect the variable $x_i$. 
	If $P_{x_i}$ with $i = i_j$ reads the tuple, the thread will only continue its computation if $v_j$ coincides with the value that $P_{x_i}$ guessed for $x_i$ and, moreover, $x_i$ with evaluation $v_j$ satisfies clause $C^{\ell_j}_{j}$.
	
	Now suppose the writer did exactly $m$ steps while each variable thread did exactly $m+1$ steps.
	This proves the satisfiability of $m$ clauses by the chosen evaluation.
	But these clauses can be part of different instances: 
	It is not ensured that the clauses were chosen from one formula $\varphi_\ell$. 
	The major difficulty of the cross-composition lies in how to ensure exactly this.
	
	We overcome the difficulty by introducing so-called bit checkers $P_b$, where $b \in [1..\log(I)]$.
	Each $P_b$ is responsible for the $b$-th bit of $\bin(\ell)$, the binary representation of $\ell$, where $\varphi_\ell$ is the instance we want to satisfy.
	When $P_w$ writes a tuple $(\ell_1, 1, i_1, v_1)$ for the first time, each $P_b$ reads it and stores either $0$ or $1$, according to the $b$-th bit of $\bin(\ell_1)$.
	After $P_w$ has written a second tuple $(\ell_2, 2, i_2, v_2)$, the bit checker $P_b$ tests whether the $b$-th bit of $\bin(\ell_1)$ and $\bin(\ell_2)$ coincide, otherwise it will deadlock.
	This will be repeated any time $P_w$ writes a new tuple to the memory.
	
	Assume the computation does not deadlock in any of the $P_b$.
	Then we can ensure that the $b$-th bit of $\bin(\ell_j)$ with $j \in [1..m]$ never changed during the computation.
	This means that $\bin(\ell_1) = \dots = \bin(\ell_m)$.
	Hence, the writer $P_w$ has chosen clauses of just one instance $\varphi_{\ell}$.
	Moreover, the current evaluation satisfies the formula.
	Since the parameters are bounded, $\sizeP \in \bigO(m)$ and \mbox{$\nrt \in \bigO(n + \log(I))$}, the construction constitutes a proper cross-composition.
	For a formal construction and proof, we refer to Appendix \ref{Appendix_BSR_States_Threads}.
	\qed
\end{proof}

Variable threads and writer thread are needed for testing satisfiability of clauses.
The need for bit checkers comes from ensuring that all clauses stem from the same formula.
We illustrate the notion with an example.

\begin{example}
	Let four formulas $\varphi_1, \varphi_2, \varphi_3, \varphi_4$ with two clauses each be given.
	We show how the bit checkers are constructed.
	To this end, we first encode the index of the instances as binary numbers using two bits.
	The encoding is shown in Figure~\ref{Figure:BitChecker} on the right hand side.
	Note the offset by one in the encoding.
	
	We focus on the bit checker $P_{b_1}$ responsible for the first bit.
	It is illustrated in Figure \ref{Figure:BitChecker} on the left hand side.
	Note that the label $\ell = 1, \ell = 3$ refers to transitions of the form $?(\ell,j,i,v)$ with $\ell$ either $1$ or $3$ and arbitrary values for $i$,$j$, and $v$.
	On reading the first of these tuples, $P_{b_1}$ stores the first bit of $\ell$ in its state space.
	The blue marked states store that $b_1 = 0$, the red states store $b_1 = 1$.
	Then, the bit checker can only continue on reading tuples $(\ell,j,i,v)$ where the first bit of $\ell$ matches the stored bit.
	In the case of $b_1 = 0$, this means that $P_{b_1}$ can only read tuples $(\ell,j,i,v)$ with $\ell$ either $1$ or $3$.
	
	Assume the writer thread has output two tuples $(\ell_1,1,i_1,v_1)$ and $(\ell_2,2,i_2,v_2)$ and the bit checker $P_{b_1}$ has reached a \emph{last} state. 
	Since the computation did not deadlock on $P_{b_1}$, we know that the first bits of $\ell_1$ and $\ell_2$ coincide.
	If the bit checker for the second bit does not deadlock as well, we get that $\ell_1 = \ell_2$.
	Hence, the writer has chosen two clauses from one instance $\varphi_{\ell_1}$.
\end{example}

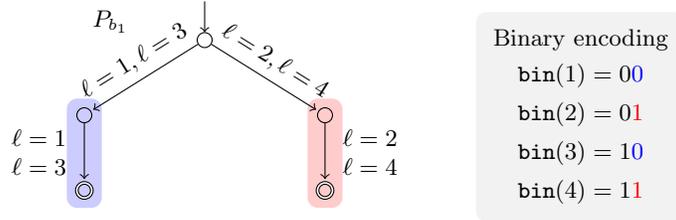
\begin{figure}
	\begin{center}
		\begin{tikzpicture}
			\node[state, initial above, initial text =, inner sep = 2pt, minimum size = 0pt] (p0) {};
			\coordinate[below of = p0] (mid);
			\node[left of = mid, node distance = 1.6cm, state, inner sep = 2pt, minimum size = 0pt] (p1) {};
			\node[right of = mid, node distance = 1.6cm, state, inner sep = 2pt, minimum size = 0pt] (p2) {};
			\node[below of = p1, state, inner sep = 2pt, minimum size = 0pt, accepting] (p3) {};
			\node[below of = p2, state, inner sep = 2pt, minimum size = 0pt, accepting] (p4) {};
			
			\path[->, shorten >= 1pt] (p0) edge 
				node[sloped, above] {$\ell = 1, \ell = 3$}
			(p1);
			\path[->, shorten >= 1pt] (p0) edge 
				node[sloped, above] {$\ell = 2, \ell = 4$}
			(p2);
			\path[->, shorten >= 1pt] (p1) edge 
				node[align = left, xshift = -0.6cm] {$\ell = 1$ \\ $\ell = 3$}
			(p3);
			\path[->, shorten >= 1pt] (p2) edge 
				node[align = right, xshift = 0.6cm] {$\ell = 2$ \\ $\ell = 4$}
			(p4);
			
			\node[left of = p0, node distance = 1.25cm, yshift = 0.25cm] {$P_{b_1}$};
			
			\node[fit = (p1)(p3)] (b0) {};
			\node[fit = (p2)(p4)] (b1) {};
			\begin{pgfonlayer}{background}
				\path[fill = blue!20, rounded corners] (b0.south west) rectangle (b0.north east);
				\path[fill = red!20, rounded corners] (b1.south west) rectangle (b1.north east);
			\end{pgfonlayer}
			
			\node[right of = p0, node distance = 5cm] (encoding) {Binary encoding};
			\node[below of = encoding, node distance = 1.25cm, align = left] (bin) {$\bin(1) = 0\textcolor{blue}{0}$ \\[0.4em] $\bin(2) = 0\textcolor{red}{1}$ \\[0.4em] $\bin(3) = 1\textcolor{blue}{0}$ \\[0.4em] $\bin(4) = 1\textcolor{red}{1}$};
			
			\node[fit = (encoding)(bin)] (enc) {};
			\begin{pgfonlayer}{background}
				\path[fill = lightgray!20, rounded corners] (enc.south west) rectangle (enc.north east);
			\end{pgfonlayer}
			
		\end{tikzpicture}
		\caption{A binary encoding (right) of the numbers $1$ up to $4$ using two bits.
			First bits are either marked blue, if they are $0$, or red if they are $1$.
			The bit checker $P_{b_1}$ (left) focuses on the first bit. 
			The label $\ell = 1, \ell = 3$ means that $P_{b_1}$ has transitions on $?(\ell,j,i,v)$ for $\ell = 1,3$ and arbitrary values for $i,j$, and $v$.
			The blue marked states store that the first bit $b_1$ is $0$.
			Red marked states store that $b_1$ is $1$.
		}
		\label{Figure:BitChecker}
	\end{center}
\end{figure}
\subsection{Intractability}
\label{Section:BSR_Intractability}

We show that parameterizations of $\BSR$ involving only $\nrs$ and $\sizeM$ are intractable.
To this end, we prove that $\BSR$ remains $\NP$-hard even if both parameters are constant.
This is surprising as the number of stages $\nrs$ seems to be a powerful parameter.
Introducing such a bound in simultaneous reachability lets the complexity drop from $\PSPACE$ to $\NP$.
But it is not enough \mbox{to guarantee an $\FPT$-algorithm.}

\begin{proposition}\label{Proposition:BSR_Intractability}
	$\BSR$ is $\NP$-hard even if both $\nrs$ and $\sizeM$ are constant.
\end{proposition}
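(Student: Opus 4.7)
The plan is to establish NP-hardness by a polynomial-time reduction from $\kSAT{3}$ that produces $\BSR$-instances with $\sizeM = 4$ and $\nrs = 1$. Given a formula $\varphi = C_1 \wedge \cdots \wedge C_m$ over variables $x_1, \ldots, x_n$, I would construct a program $\asms$ over the data domain $\Domain = \setcon{\initmem, 0, 1, \#}$. It contains a single writer thread $P_w$ and one verifier thread $P_{C_j}$ per clause; since only $P_w$ issues writes, every computation is a single stage.

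The writer $P_w$ guesses an assignment by non-deterministically producing the word $v_1 \cdot \# \cdot v_2 \cdot \# \cdots v_n \cdot \#$ using $2n+1$ states. Each verifier $P_{C_j}$ is a finite automaton whose states are pairs $(k, f) \in [0..2n] \times \setcon{0, 1}$, tracking the position in the expected read pattern $b_1, \#, b_2, \#, \ldots, b_n, \#$ together with a satisfied flag $f$. At an even position $2i$, $P_{C_j}$ reads a bit $b$ and moves to $(2i+1, f')$ where $f' = 1$ if $f = 1$ or if setting $x_{i+1} = b$ makes some literal of $C_j$ true; at odd positions it must read $\#$. The set $C^f$ consists of those configurations in which $P_w$ is in its final state and every $P_{C_j}$ has reached $(2n, 1)$.

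The main obstacle is ensuring that each verifier's position counter is synchronised with the writer's intended positions despite asynchronous shared-memory communication. I would resolve this via a pigeonhole argument enabled by the $\#$ separators. Since $P_w$ performs exactly $2n$ writes alternating bit and $\#$, and every accepting verifier performs exactly $2n$ reads with the same alternation, if the $k$-th verifier read observes the writer's $\pi(k)$-th write, then $\pi$ is non-decreasing and its values alternate parity between consecutive indices, hence $\pi(k) \geq k$ for all $k$. Combined with $\pi(2n) \leq 2n$, this forces $\pi(k) = k$ throughout, so each verifier actually reads $v_1, \#, v_2, \#, \ldots, v_n, \#$ in order and correctly interprets each bit as the value of the corresponding variable.

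Equivalence then follows immediately. If $\varphi$ has a satisfying assignment, picking it and scheduling $P_w$ to pause after each write until every verifier has consumed it yields a $1$-stage computation reaching $C^f$. Conversely, any accepting computation exhibits an assignment that makes some literal of every clause true. Since the reduction is polynomial with $\sizeM = 4$ and $\nrs = 1$, NP-hardness with both parameters constant follows.
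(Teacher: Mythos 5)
Your reduction is correct, but it is a genuinely different construction from the one in the paper. The paper also reduces from $\kSAT{3}$ with $\sizeM = 4$ and $\nrs = 1$, but it inverts the roles of writer and readers: there, the single writer $P_v$ iterates over the $m$ clauses and, for each clause, nondeterministically picks and writes the binary encoding $\enc(\ell) = v \,\#\, \bin_\#(i)$ of a literal $\ell$ intended to satisfy it, while the readers are $n$ variable threads $P_1, \dots, P_n$, each of which commits to an evaluation of $x_i$ up front and then checks consistency of every literal that mentions $x_i$. You instead let the writer emit the assignment itself, bit by bit in variable order, and use $m$ clause-checking readers; this avoids binary encoding of variable indices entirely because each bit's meaning is determined by its position, so your verifiers are a bit simpler ($\bigO(n)$ states rather than $\bigO(m\log n)$ as in the paper). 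Both constructions hinge on the same underlying trick: interleaving the payload bits with the separator $\#$ so that stuttering is impossible, which forces any reader that completes all its reads to stay in lockstep with the writer. The paper only asserts this informally ("the $\#$ avoids repeated reading"); your explicit pigeonhole argument — the read-to-write assignment $\pi$ is non-decreasing, alternates parity, hence is strictly increasing, hence $\pi(k) \geq k$, and $\pi(2n) \leq 2n$ pins it down to the identity — makes the synchronization lemma precise, which is a welcome improvement. Both yield polynomially many threads ($m+1$ versus $n+1$) and poly-size threads, so both establish the proposition.
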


The proposition implies intractability:
Assume there is an $\FPT$-algorithm $A$ for $\BSR$ running in time $f(\nrs,\sizeM) \cdot poly(\abs{x})$, where $x$ denotes the input.
Then $\BSR'$, the variant of $\BSR$ where $\nrs$ and $\sizeM$ are constant, can also be solved by $A$.
In this case, the runtime of $A$ is $\bigO(poly(\abs{x}))$ since $f(\nrs,\sizeM)$ is a constant on every instance of $\BSR'$.
But this contradicts the $\NP$-hardness of $\BSR'$.

\begin{proof}[Idea]
	We give a reduction from $\kSAT{3}$ to $\BSR$ that keeps both parameters constant.
	Let $\varphi$ be a $\kSAT{3}$-instance with $m$ clauses and variables $x_1, \dots, x_n$.
	We construct a program $\asms = (\Domain, \initmem, P_1, \dots, P_n, P_v)$ with $\sizeM = 4$ different memory symbols that can only run $1$-stage computations.
	
	The program cannot communicate on literals directly, as this would cause a blow-up in parameter $\sizeM$.
	Instead, variables and evaluations are encoded in binary in the following way.
	Let $\ell$ be a literal in $\varphi$.
	It consists of a variable $x_i$ and an evaluation $v \in \setcon{0,1}$.
	The padded binary encoding $\bin_\#(i) \in (\setcon{0,1} . \#)^{\log(n)+1}$ of $i$ is the usual binary encoding where each bit is separated by a $\#$.
	The string \mbox{$\enc(\ell) = v \# \bin_\#(i)$} encodes that variable $x_i$ has evaluation $v$.
	We need the padding symbol $\#$ to prevent the threads in $\asms$ from reading the same symbol more than once.
	Program $\asms$ communicates by passing messages of the form $\enc(\ell)$.
	To this end, we need the data domain $\Domain = \setcon{a^0,\#,0,1}$.
	
	The program contains threads $P_i, i \in [1..n]$, called variable threads.
	Initially, these threads choose an evaluation for the variables and store it:
	Each $P_i$ can branch on reading $\initmem$ and choose whether it assigns $0$ or $1$ to $x_i$.
	Then, a verifier thread $P_v$ starts to iterate over the clauses.
	For each clause $C$, it picks a literal $\ell \in C$ that should evaluate to true and writes its encoding $\enc(\ell)$ to the memory.
	Each of the $P_i$ reads $\enc(\ell)$.
	Note that reading and writing $\enc(\ell)$ needs a sequence of transitions.
	In the construction, we ensure that all the needed states and transitions are provided.
	It is the task of each $P_i$ to check whether the chosen literal $\ell$ is conform with the chosen evaluation for $x_i$.
	We distinguish two cases.
	
	(1) If $\ell$ involves a variable $x_j$ with $j \neq i$, variable thread $P_i$ just continues its computation by reading the whole string  $\enc(\ell)$.
	
	(2) If $\ell$ involves $x_i$, $P_i$ has to ensure that the stored evaluation coincides with the one sent by the verifier.
	To this end, $P_i$ can only continue its computation if the first bit in $\enc(\ell)$ shows the correct evaluation.
	Formally, there is only an outgoing path of transitions on $\enc(x_i)$ if $P_i$ stored $1$ as evaluation \mbox{and on $\enc(\neg x_i)$ if it stored $0$.}
	
	Note that each time $P_v$ picks a literal $\ell$, all $P_i$ read $\enc(\ell)$, even if the literal involves a different variable.
	This means that the $P_i$ count how many literals have been seen already.
	This is important for correctness:
	The threads will only terminate if they have read a word of fixed length and did not miss a single symbol.
	There is no loss in the communication between $P_v$ and the $P_i$.
	
	Now assume $P_v$ iterated through all $m$ clauses and none of the variable threads got stuck.
	Then, each of them read exactly $m$ encodings without running into a deadlock.
	Hence, the picked literals were all conform with the evaluation chosen by the $P_i$.
	This means that a satisfying assignment for $\varphi$ is found.
	
	During a computation of $\asms$, the verifier $P_v$ is the only thread that has write permission.
	Hence, each computation of $\asms$ consists of a single stage.
	For a formal construction, we refer to Appendix \ref{Appendix_BSR_Intractability}.
	\qed
\end{proof}

\section{Conclusion}
\label{Section:Conclusion}

We have studied several parameterizations of $\LCR$ and $\BSR$, two safety verification problems for shared memory concurrent programs.
In $\LCR$, a designated leader thread interacts with a number of equal contributor threads.
The task is to decide whether the leader can reach an unsafe state.
The problem $\BSR$ is a generalization of bounded context switching.
A computation gets split into stages, periods where writing is restricted to one thread.
Then, $\BSR$ asks whether all threads can reach a final state simultaneously during an $\nrs$-stage computation.

For $\LCR$, we identified the size of the data domain $\sizeM$, the size of the leader $\sizeL$ and the size of the contributors $\sizeC$ as parameters.
Our first algorithm showed that $\LCR(\sizeM, \sizeL)$ is $\FPT$.
Then we modified the algorithm to obtain a verification procedure valuable for practical instances.
The main insight was that due to a factorization along strongly connected components, the impact of $\sizeL$ can be reduced to a polynomial factor in the time complexity.
We also proved the absence of a polynomial kernel for $\LCR(\sizeM, \sizeL)$ and presented an $\ETH$-based lower bound which shows that the upper bound is a root-factor away from being optimal.

For $\LCR(\sizeC)$ we presented a dynamic programming, running in $\bigOS(2^\sizeC)$ time.
The algorithm is based on slice-wise reachability.
This reduces a reachability problem on a large graph to reachability problems on subgraphs (slices) that are solvable in polynomial time.
Moreover, we gave a tight lower bound based on $\SetCov$ and proved the absence of a polynomial kernel.

Parameterizations different from $\LCR(\sizeM,\sizeL)$ and $\LCR(\sizeC)$ were shown to be intractable.
We gave reductions from $\kClique$ and proved $\W[1]$-hardness.

The parameters of interest for $\BSR$ are the maximum size of a thread $\sizeP$ and the number of threads $\nrt$.
We have shown that a parameterization by both parameters is $\FPT$ and gave a matching lower bound.
The main contribution was to prove it unlikely that a polynomial kernel exists for $\BSR(\sizeP, \nrt)$.
The proof relies on a technically involved cross-composition that avoids a polynomial dependence of the parameters on the number of given $\kSAT{3}$-instances.

Parameterizations involving other parameters like $\nrs$ or $\sizeM$ were proven to be intractable for $\BSR$.
We gave an $\NP$-hardness proof where $\nrs$ and $\sizeM$ are constant.

\subsubsection*{Extension of the Model}
\label{Section:Extension}

In this work, the considered model for programs allows the memory to consist of a single cell.
We discuss whether the presented results carry over when the number of memory cells increases.
Having multiple memory cells is referred to as supporting \emph{global variables}.
Extending the definition of programs in Section \ref{Section:Prelim} to global variables is straightforward.

For the problem $\LCR$, allowing global variables is a rather powerful mechanism.
Let $\LCR_\Var$ denote the problem $\LCR$ where the input is a program featuring global variables.
The interesting parameters for the problem are $\sizeM$, $\sizeL$, $\sizeC$, and $\nrvars$, the number of global variables.
It turns out that $\LCR_\Var$ is $\PSPACE$-hard, even when $\sizeC$ is constant.
One can reduce the intersection emptiness problem for finite automata to $\LCR_\Var$.
The reduction makes use only of the leader, contributors are not needed.

A program $\asms$ with global variables can always be reduced to a program $\asms'$ with a single memory cell \cite{Muscholl2016}.
Roughly, the reduction constructs the leader of $\asms'$ in such a way that it can store the memory contents of $\asms$ and manage contributor accesses to the memory.
This means the new leader needs exponentially many states since there are  $\sizeM^\nrvars$ many possible memory valuations.
The domain and the contributor of $\asms'$ are of polynomial size.
In fact, we can then apply the algorithm from Section~\ref{Section:LCR_Leader_Mem} to the program $\asms'$.
The runtime depends exponentially only on the parameters $\sizeM$, $\sizeL$, and $\nrvars$.
This shows that $\LCR_\Var(\sizeM, \sizeL, \nrvars)$ is fixed-parameter tractable.
It is an interesting question whether this algorithm can be improved.
Moreover, it is open whether there are other parameterizations of $\LCR_\Var$ that have an $\FPT$-algorithm.
A closer investigation is considered future work.

For $\BSR$, allowing global variables also leads to $\PSPACE$-hardness.
The problem $\BSR_\Var$, defined similarly to $\LCR_\Var$, is $\PSPACE$-hard already for a constant number of threads.
In fact, the proof is similar to the hardness of $\LCR_\Var$ where only one thread is needed.
To obtain an algorithm for the problem, we modify the construction from Proposition \ref{Proposition:BSR_States_Threads_Upper}. 
The resulting product automaton then also maintains the values of the global variables.
This shows membership in $\PSPACE$.
But the size of the product now also depends exponentially on $\sizeM$ and $\nrvars$.
The interesting question is whether we can find an algorithm that avoids an exponential dependence on one of the parameters $\sizeP, \nrt, \sizeM$ or $\nrvars$.
It is a matter of future work to examine the precise complexity of the different parameterizations.

\bibliographystyle{plain}
\bibliography{content/cite}

\newpage
\appendix

\section{Proofs for Section \ref{Section:LCR_Leader_Mem}}
\label{Appendix_LCR_Leader_Mem}

We give the missing constructions and proofs for Section \ref{Section:LCR_Leader_Mem}.

\subsection*{Proof of Lemma \ref{Lemma:LCR_Leader_Mem_Upper_Correctness}}

\begin{proof}
	We will first show that for each computation leading to an unsafe state, there is a corresponding valid witness candidate.
	To this end, assume there is a $t \in \Naturals$ and a computation  $\pi = c^0 \rightarrow^*_{\asms^t} c$ with $ c \in C^f $.
	The computation $\pi$ acts on configurations but we want to work with transitions of the leader and contributor instead.
	To this end, let $\sigma$ be the sequence of transitions appearing in $\pi$.
	Without loss of generality, we assume that the last transition in $\sigma$ is due to the leader.
	In the following we show how to construct a valid witness candidate out of the sequence $\sigma$.
	It is useful to assume that each transition in $\sigma$ is uniquely identifiable.
	We use $\Pos{\tau}$ to access the position of a certain transition $\tau$ in $\sigma$.
	Hence, we have $\sigma[\Pos{\tau}] = \tau$.
	
	The first step to construct the witness candidate is collecting the first writes from $\sigma$.
	Identifying these is simple.
	One only needs to iterate over $\sigma$ and mark those write transitions of the contributors that write a symbol for the first time.
	Then, the transitions of the contributors that are not marked are removed from $\sigma$.
	Moreover, each marked transition is replaced by the symbol that it writes.
	Formally, if a marked transition is of the form $(q,!a,q')$, it is replaced by $\bar{a} \in \bar{\Domain}$.
	The resulting sequence is of the form
	\begin{align*}
		\sigma_1 \bar{c_1} \sigma_2 \bar{c_2} \dots \sigma_n \bar{c_n} \sigma_{n+1},
	\end{align*}
	where the $\bar{c_i}$ are the symbols written by the first writes and $\sigma_i$ is the sequence of transitions performed by the leader between the first writes $\bar{c_i}$ and $\bar{c_{i+1}}$.
	Note that we have $\bar{c_i} \neq \bar{c_j}$ for $i \neq j$ and $n \leq \abs{\Domain}$ since first writes can only be written once and there are at most $\abs{\Domain}$ many of them.
	
	In order to define a witness candidate in $\expr = ((Q_L.\Domainbot)^{\leq \sizeL} . \barS )^{\leq \sizeM} . Q_L$, we need to cut out loops in the $\sigma_i$ and map the resulting sequences to a word.
	We define a procedure \emph{Shrinkmap} that performs these two operations.
	As input, it takes a tuple $(\alpha,c)$ where $\alpha$ is a sequence of transitions of the leader and $c$ is a natural number.
	The procedure computes a tuple $(v,\varphi)$ where the word $v \in (Q_L.\Domainbot)^{\leq \sizeL}$ is obtained by cutting out the loops in $\alpha$ and mapping writes of a symbol $a$ to $a$ and reads of any symbol to $\bot$. 
	The function
	\begin{align*}
		\varphi : \Setcon{\tau}{\tau ~\text{a transition in}~ \alpha} \rightarrow [c ..\abs{v} + c]
	\end{align*}
	maps the transitions of the given sequence $\alpha$ into the word $v$.
	It is used to recover the sequence $\alpha$ from $v$.
	Moreover, the constant $c$ is needed to right shift the map $\varphi$.
	This gets important when we append different words obtained from applying \emph{Shrinkmap}.
	The procedure is explained in Algorithm \ref{Algorithm:Shrinkmap}.
	
	\begin{algorithm}
		\caption{\emph{Shrinkmap}}
		\label{Algorithm:Shrinkmap}
		
		\textbf{Input:} $(\alpha = \tau_1 \dots \tau_k, c)$ where $\alpha$ is a sequence of leader transitions and $c$ is a constant.\\
		\textbf{Output:} $(v,\varphi)$ with $v \in (Q_L.\Domainbot)^{\leq \sizeL}$ and $\varphi : \setcon{\tau_1, \dots, \tau_k} \rightarrow [c..\abs{v} + c]$.
		
		\begin{algorithmic}
			\STATE $i = 1$;
			\STATE $v = \varepsilon$;
			\WHILE {$i \leq k$}
			
			\STATE let $\tau_i = (q,\op, p)$;
			\IF {$\exists j : \tau_j = (q,\op',p')$}
			\STATE $\forall \ell \in [i..j-1]$, set $\varphi(\tau_\ell) = c + \abs{v} + 1$; $~\backslash\backslash$ Cutting out detected loop
			\STATE $i = j$;
			
			\ELSE
			\IF{$\op = \, !b$ for some $b \in \Domain$}
			\STATE $v = v.q.b$; $~\backslash\backslash$ $\op$ is a write of symbol $b$
			\ELSE
			\STATE $v = v.q.\bot$; $~\backslash\backslash$ $\op$ is a read or $\varepsilon$
			\ENDIF
			\STATE{$ \varphi(\tau_i) = c + \abs{v}$};
			\STATE{$i = i + 1$};
			\ENDIF
			
			\ENDWHILE
			\RETURN{(v,$\varphi$)};
		\end{algorithmic}
	\end{algorithm}
	
	We consecutively apply \emph{Shrinkmap}.
	We begin with the input $(\sigma_1,0)$ and obtain the tuple $(w_1,\varphi_1)$.
	In the $i$-th step, we run the procedure on the input $(\sigma_i,\sum_{j\in[1..i-1]} \abs{w_j})$ and get the output $(w_i,\varphi_i)$.
	We do not apply \emph{Shrinkmap} to the last sequence $\sigma_{n+1}$.
	Let this sequence be given by $\sigma_{n+1} = \tau_1 \dots \tau_t$ with transition $\tau_1 = (q, op, q')$.
	Then, the witness candidate is defined by
	\begin{align*}
		w = w_1 . \bar{c_1} . w_2 . \bar{c_2} \dots w_n . \bar{c_n} . q \in \expr.
	\end{align*}
	Moreover, we define the map $\varphi$ to be the concatenation of the $\varphi_i$.
	Formally, 
	\begin{align*}
		\varphi : \Setcon{\tau}{\tau ~\text{a transition in}~ \sigma_1 \dots \sigma_n} \rightarrow [1 ..\abs{w}]
	\end{align*}
	with $\varphi(\tau) = \varphi_i(\tau)$ if $\tau$ is a transition in $\sigma_i$.
	
	We show that the witness $w$ is valid.
	Requirement (1) is clearly satisfied since the symbols $\bar{c_i}$ written by the first writes are pairwise different.
	The second requirement is also fulfilled since we started with a proper run of the leader leading to an unsafe state $q_f \in F_L$.
	Formally, let $\prjauto{w}{Q_L \cup \Domainbot} = q_0 a_0 q_1 a_1 \dots q_m a_m q$.
	Since $\sigma_1 \dots \sigma_{n}$ is a run of the leader starting in $q_0$ and ending in $q$, we get that $q_0$ is indeed the initial state of $P_L$.
	Moreover, the transition sequence $\sigma_{n+1}$ leads from $q$ to the state $q_f$ and reading in this sequence is restricted to the symbols $\bar{c_i}$ that were provided by the first writes.
	This means there is a word $u \in (\Writes{\Domain} \cup \Reads{\barS(w)})^*$ with $q \xrightarrow{v}_L q_f$.
	
	Let $i \in [1..m]$ and consider $a_i$.
	If $a_i \in \Domain$, we know that there is a transition $(q_i,!a_i,q_{i+1})$.
	This follows from the application of \emph{Shrinkmap}.
	Similarly, if \mbox{$a_i = \bot$}, we get a transition of the form $(q_i,\varepsilon,q_{i+1})$ or $(q_i,?a,q_{i+1})$.
	In the latter case, the read symbol $a$ is provided by an earlier first write.
	This is due to the fact that the read transition appears in the computation $\sigma_1 \dots \sigma_{n}$ of the leader.
	Formally, $a \in \barS(w,\pos(a_i))$.
	
	It is left to show that Requirement (3) is satisfied.
	We show that the reads of contributors that are responsible for first writes can be embedded into the witness candidate $w$.
	To this end, consider the $i$-th first write $\bar{c_i}$ and the corresponding prefix $v.\bar{c_i}$ of $w$.
	Since $\pi$ is a computation of the system, we know there is a contributor providing $\bar{c_i}$.
	Formally, there is a computation $\rho$ on this contributor of the form $\rho = q^0_C \xrightarrow{u!c_i}_C p$.
	Let $u' = \prjauto{u}{\Reads{\Domain}}$ be the reads of $u$ and $\tau^C_1 \dots \tau^C_z$ be the read transitions along $\rho$.
	Note that $\abs{u'} = z$.
	Our goal is to define a monotonic function $\mu : [1..z] \rightarrow [1..\abs{v}]$ that maps the reads of $\rho$ into $v$.
	
	We first identify those transitions among the $\tau^C_i$ that read a value written by the leader.
	Let these be $\tau^C_{i_1}, \dots \tau^C_{i_s}$.
	Then, there are writes of the leader in $\pi$ that serve these reads.
	Let $\tau^L_{i_j}$ denote the transition of the leader that writes the symbol read in $\tau^C_{i_j}$.
	This is the transition of the leader (writing the correct symbol) that immediately precedes $\tau^C_{i_j}$.
	We set $\mu(i_j) = \varphi(\tau^L_{i_j})$.
	
	Note that this already covers two cases of Requirement (3).
	If the read is served from a write of the leader that appears in $w$, the map $\mu$ directly points to that write.
	If the corresponding write stems from a loop, the map $\mu$ points to the state in $w$ where the loop starts.
	This is due to the application of \emph{Shrinkmap}.
	When loops are cut out, the procedure ensures that $\varphi$ is changed accordingly.
	
	Let $\tau^C_{j_1}, \dots, \tau^C_{j_r}$ be the read transitions among the $\tau^C_i$ that read symbols not provided by the leader.
	We consider $\tau^C_{j_i}$.
	Let the transition read the symbol $a$.
	Then, we need to ensure that $\mu$ maps $j_i$ to a position in $w$ such that \mbox{$a \in \barS(w,\mu(j_i))$}.
	Moreover, we need to keep $\mu$ monotonic.
	The idea is to map $j_i$ either to the position of the write transition of the leader preceding $\tau^C_{j_i}$ or to the position of the last first write before $\tau^C_{j_i}$, depending on which of the two positions is larger.
	Let $\tau^L$ be the write transition of the leader that precedes $\tau^C_{j_i}$ in $\pi$.
	Moreover, let $\bar{c_h}$ be the last first write before $\tau^C_{j_i}$.
	If $\Pos{\tau^L} > \abs{w_1 . \bar{c_1} \dots w_h . \bar{c_h}}$, we set $\mu(j_i) = \varphi(\tau^L)$.
	Otherwise, we set $\mu(j_i) = \abs{w_1 . \bar{c_1} \dots w_h . \bar{c_h}}$.
	The resulting map $\mu$ is indeed monotonic and satisfies Requirement (3).
	
	For the other direction, we assume the existence of a valid witness
	\begin{align*}
		w = w_1 . \bar{c_1} . w_2 . \bar{c_2} \dots w_n. \bar{c_n} . q \in \expr.
	\end{align*}
	Our goal is to show that there is a $t \in \Naturals$ and a computation $c^0 \rightarrow^*_{\asms^t} c$ leading to a configuration $c \in C^f$.
	
	Since $w$ is valid according to Definition \ref{Definition:Validity}, we get by the first requirement that the $\bar{c_i}$ are pairwise different.
	This shows that the $\bar{c_i}$ are unique and are thus candidates for a sequence of first writes.
	By Requirement (2), we obtain a computation of the leader from the $w_i$.
	Formally, there are $\gamma_1, \dots, \gamma_n$ and $\gamma_{n+1}$ such that $q^0_L \xrightarrow{\gamma = \gamma_1 \dots \gamma_n} q$ and $q \xrightarrow{\gamma_{n+1}} q_f$ for some $q_f \in Q_L$.
	Moreover, the reads in $\gamma_i$ are restricted to $\setcon{\bar{c_1}, \dots, \bar{c}_{i-1}}$.
	
	From Requirement (3) we get for each $\bar{c_i}$ a computation of the contributor of the form $q^0_C \xrightarrow{\alpha_i !c_i} p_i$.
	We let $u_i = \prjauto{\alpha_i}{\Reads{\Domain}}$ be the reads that occur along the computation.
	Then, we also obtain a function $\mu_i$ that maps the positions of $u_i$ into the witness $w$.
	
	Before we construct the computation $c^0 \rightarrow^*_{\asms^t} c$, we need to determine the number of contributors $t$ that are involved.
	Consider a first write $\bar{c_i}$.
	Each time, $c_i$ is read, we need a contributor to provide it.
	Hence, we first give a bound $t(i)$ on how often $c_i$ needs to be provided.
	Summing up all the $t(i)$ then bounds the number of involved contributors.
	Let 
	\begin{align*}
		t(i) = \abs{w} + \abs{\gamma_{n+1}} + \sum_{j \in [1..n]} \sizeL \cdot \abs{\alpha_j}.
	\end{align*}
	Intuitively, the leader $P_L$ can read $c_i$ at most $\abs{w} + \abs{\gamma_{n+1}}$ many times when executing a loop free computation along the witness.
	The loops are taken care of separately.
	During a loop in the leader, it can read $c_i$ at most $\sizeL$ many times.
	Moreover, loops appear at most $\abs{\alpha_j}$ many times for each $j$.
	The latter is true since a contributor currently performing the computation $q^0_C \xrightarrow{\alpha_j !c_j} p_j$ for a $j$, may need the leader to run a complete loop for each step in $\alpha_j$.
	We set the total number of involved contributors to be $t = \sum_{j \in [1..n]} t(i)$.
	
	We introduce some notions needed for the construction of the computation.
	For each $i \in [1..n]$, variable $x_i$ is used to point to a position of the word $u_i$.
	Moreover, variable $x$ points to a position in the witness $w$.
	Initially, these variables are set to zero.
	The computation will involve $t$ contributors captured in the set $\funnyp$.
	Each of these provides a certain symbol $c_i$.
	We partition $\funnyp$ into sets
	\begin{align*}
	\funnyp(i) = \Setcon{P_C \in \funnyp}{P_C ~\text{provides}~ c_i}
	\end{align*} 
	of contributors that provide $c_i$ during the computation.
	Note that $\abs{\funnyp(i)} = t(i)$.
	Given a configuration $c$ of $\asms^t$, we use $c \xrightarrow{\tau}_{\funnyp(i)} c'$ to denote that all contributors in $\funnyp(i)$ execute a transition $\tau$.
	This may happen in any order.
	The definition extends to sequences of transitions.
	Moreover, we write $c \zigmoves{!c_i}_{\funnyp(i)} c'$ if exactly one contributor in $\funnyp(i)$ writes the value $c_i$.
	The corresponding contributor is then removes from $\funnyp(i)$ since it has provided $c_i$ and therefore fulfilled it's duty.
	
	We construct the computation inductively, in such a way that it maintains the following invariants.
	Roughly, these describe that there are always enough contributors in the set $\funnyp(i)$ and those can execute the computation $q^0_C \xrightarrow{\alpha_i} p'_i$ to reach the write transition of $c_i$.
	
	(1) If $w[x] = \bar{c_i}$, we need that all contributors in $\funnyp(i)$ have already executed $q^0_C \xrightarrow{\alpha_i} p'_i$ so that they can provide $c_i$ whenever it is needed.
	To this end, all pending reads in $\alpha_i$ need to be served during the computation.
	We can ensure the latter by the invariant: If $x_i \neq \abs{u_i}$ then $\mu_i(x_i) \geq x$.
	It means that whenever there are still pending reads in $\alpha_i$, the currently pending read at position $x_i$ can still be served since the current position in the witness is not further than $\mu_i(x_i)$.
	
	(2) The number of contributors in $\funnyp(i)$ needs to be large enough in order to provide $c_i$ during the ongoing computation.
	This is ensured by the invariant 
	\begin{align*}
		\abs{\funnyp(i)} \geq k + \sum_{j \in [1..n]} \sizeL \cdot k_j,
	\end{align*}
	where $k = \abs{w} + \abs{\gamma_{n+1}} - x$ and $k_j = \abs{\alpha_j} - x_j$ for $j \in [1..n]$.
	
	(3) We synchronize the contributors in the sets $\funnyp(i)$, $i \in [1..n]$.
	To this end, we demand that after each step of the construction, all contributors from a particular set $\funnyp(i)$ are currently in the same state.
	
	We elaborate on the inductive construction of the computation.
	To this end, assume that a computation $\rho = c^0 \rightarrow^*_{\asms^t} c$ was already constructed and that the variables $x, x_1, \dots, x_n$ admit values such that the invariants (1), (2), and (3) hold.
	We show how to extend $\rho$ to a computation $\rho' = c^0\rightarrow^*_{\asms^t} c \rightarrow^*_{\asms^t} c'$.
	Moreover, $\rho'$ satisfies (1), (2), and (3) along new values $x', x'_1, \dots, x'_n$ for the variables with $x' = x+1$ and $x'_i \geq x_i$ for $i \in [1..n]$.
	Note that the induction basis is simple. 
	The computation $c^0$ along with $x = x_1 = \dots = x_n = 0$ already satisfies the invariants (1), (2), and (3).
	We perform the induction step by distinguishing the following four cases:
	
	\textbf{Case 1: $w[x] = \bot$}.
	Then, the corresponding transition $\tau$ in the computation $\gamma$ of the leader is either an $\varepsilon$-transition or a read of an earlier first write.
	In the premier case, we extend the computation $\rho$ by the $\varepsilon$-transition $\tau$ and increment $x$ by $1$.
	Then, clearly invariants (2) and (3) are still satisfied.
	Moreover, Invariant (1) holds since no map $\mu_i$ points onto a position $x$ with $w[x] = \bot$.
	
	If transition $\tau$ reads the symbol $c_i$ of an earlier first write, we add two transitions to $\rho$.
	First, we add a transition $c \zigmoves{!c_i}_{\funnyp(i)} \hat{c}$ to write $c_i$ to the memory.
	Note that we have a contributor in $\funnyp(i)$ that can perform the transition due to invariants (1) and (2).
	Then, we add the read $\tau$ of the leader, resulting in
	\begin{align*}
		\rho' = c^0 \rightarrow^*_{\asms^t} c \zigmoves{!c_i}_{\funnyp(i)} \hat{c} \xrightarrow{?c_i} c',
	\end{align*}
	and increment $x$ by $1$.
	Invariant (1) is still satisfied.
	Invariant (2) also holds since one contributor is removed from $\funnyp(i)$ and $x$ is incremented by $1$.
	And since no other contributor moved, Invariant (3) is preserved, as well.
	
	\textbf{Case 2: $w[x] = a \in \Domain$}.
	This means that the corresponding transition $\tau$ in the computation $\gamma$ of the leader writes $a$ to the shared memory.
	In this case, we first append $\tau$ to $\rho$ and obtain
	\begin{align*}
		c^0 \rightarrow^*_{\asms^t} c \xrightarrow{!a} \hat{c}.
	\end{align*}
	Now we serve all contributors that need to read the value $a$ in order to reach their first write.
	Let $i \in [1..n]$ and let $P$ be a contributor in $\funnyp(i)$ with $x_i \neq \abs{u_i}$ and $\mu_i(x_i) = x$.
	This means that $P$ needs to read $a$ in order to go on with its computation.
	Hence, we extend the computation by $\hat{c} \xrightarrow{?a}_{\funnyp(i)} \hat{c}'$.
	This ensures that all contributors in $\funnyp(i)$ do the required transition and read $a$.
	Note that since Invariant (3) holds, all these contributors are in the required state to perform the transition.
	After that, we increment $x_i$ by $1$.
	When we appended the required transitions for each $i \in [1..n]$, we increment $x$ by $1$.
	
	Invariant (1) is satisfied by the new values since the maps $\mu_i$ are monotonic.
	Invariant (2) is preserved since we did not remove any of the contributors from the sets $\funnyp(i)$.
	And Invariant (3) also holds since all the contributors in $\funnyp(i)$ do the same transition or do not move at all.
	
	\textbf{Case 3: $w[x] = \bar{c_i}$}.
	We have that $x_i = \abs{u_i}$ since $\mu_i$ maps the positions of $u_i$ to positions of $w$ that occur before $\bar{c_i}$.
	Hence, by invariants (1) and (3) we get that all contributors in $\funnyp(i)$ are in the same state from which they can write $c_i$.
	The transitions that we need to add to $\rho$ in this case stem from those contributors in $\funnyp(j)$ with $j > i$ that need to read $c_i$ in order to reach their first write $c_j$.
	We serve these reads with a single contributor from $\funnyp(i)$.
	Hence, we first add a corresponding write transition, resulting in:
	\begin{align*}
		c^0 \rightarrow^*_{\asms^t} c \zigmoves{!c_i}_{\funnyp(i)} \hat{c}.
	\end{align*}
	Then, if $x_j \neq \abs{u_j}$ and $\mu_j(x_j) = x$, we add the read transitions $c \xrightarrow{?c_i}_{\funnyp(j)} \hat{c}$ to the computation and increase $x_j$ by $1$.
	After adding the transitions for each $j$, we increment $x$ by $1$.
	
	Now, Invariant (1) is satisfied since the $\mu_i$ are monotonic.
	Invariant (2) holds, since we remove only one contributor from $\funnyp(i)$ and increase $x$ (and potentially some $x_j$) by $1$.
	Invariant (3) is fulfilled since the contributors from $\funnyp(j)$ that move, all do the same transition.
	And the one contributor from $\funnyp(i)$ is removed after moving to a different state.
	
	\textbf{Case 4: $w[x] = q \in Q_L$}.
	Let $i \in [1..n]$ and suppose $x_i \neq \abs{u_i}$ $\mu_i(x_i) = x$.
	Hence, the contributors in $\funnyp(i)$ need to read either a first write $c_j \in \barS(w,x)$ that appeared before $x$ or a value that is written in a simple loop of the leader, $u_i[x_i] \in \Loop(q,\barS(w,x))$ while reading is restricted to earlier first writes.
	For the premier case, we can append the same transitions to $\rho$ as in Case 3 above.
	We focus on the latter.
	
	Assume that $u_i[x_i] = a$ is the value that the contributors in $\funnyp(i)$ need to read.
	Moreover, according to Requirement (3) of validity, the symbol is written in a loop $q \xrightarrow{\beta . !a . \beta'}_L q$ of the leader.
	The reads in $\beta$ and $\beta'$ are only from earlier first writes $\barS(w,x)$.
	Since the loop is simple, the leader can read at most $\sizeL$ first writes along it.
	Let $c_{j_1} \dots c_{j_\ell}$ be the sequence of first writes, the leader reads along the loop.
	Note that there might be repetition among the $c_{j_k}$.
	Since $x_i < \abs{u_i} \leq \abs{\alpha_i}$, there are at least $\sizeL$ many contributors in each $\funnyp(j)$ according to Invariant (2).
	Hence, we can provide enough contributors to execute the loop.
	
	We add the following transitions to $\rho$.
	The leader executes along $\beta_i$ until it needs a first write $c_{j_k}$.
	Then, we let a contributor perform the transition $\zigmoves{!c_{j_k}}_{\funnyp(j)}$, followed by the leader reading $c_{j_k}$.
	When the leader reaches the transition for writing $a$, it performs the transition, followed by the contributors in $\funnyp(i)$ reading the value: $\xrightarrow{?a}_{\funnyp(i)}$.
	In the same manner, $\beta'_i$ is processed.
	After adding the loop and contributor transitions to $\rho$, we increment $x_i$ by $1$ since we have served the read request of $\funnyp(i)$.
	Once we did this for each such $i$, we increment $x$ by $1$.
	
	Invariant (1) is satisfied since we increased the corresponding $x_i$ by $1$ and $\mu_i$ is monotonic.
	Invariant (2) holds since we use at most $\sizeL$ many contributors from $\funnyp(j)$ in a loop that writes a symbol required by $\funnyp(i)$.
	After the loop, $x_i$ is increased by $1$, preserving the inequality for $\funnyp(j)$ from Invariant (2).
	Moreover, Invariant (3) also holds since the contributors that move, all perform the same transition and contributors writing first writes are removed from $\funnyp(j)$.
	
	From the induction we get a computation $\pi' : c^0 \rightarrow^*_{\asms^t} c'$ which satisfies the invariants and where the leader arrives in state $q$.
	Now we add the transitions of $q \xrightarrow{\gamma_{n+1}} q_f$ to $\pi'$.
	Reading in $\gamma_{n+1}$ is restricted to first writes. 
	For each first write $\bar{c_i}$, by Invariant (3), we have that $\abs{S(i)} \geq \abs{\gamma_{n+1}}$ since $x \leq \abs{w}$.
	This means, that each time $c_i$ is required, we can add a contributor transition $\zigmoves{!c_i}_{\funnyp(i)}$ followed by the corresponding read of the leader.
	This way we construct a computation $\pi: c^0 \rightarrow^*_{\asms^t} c$ with $c \in C^f$.
	\qed
\end{proof}

\subsection*{Proof of Lemma \ref{Lemma:LCR_Leader_Mem_Upper_Validity}}

\begin{proof}
	Note that our complexity estimation are conservative.
	We do not assume that states of an automaton are stored in special lists which allow for faster iteration.
	
	It is clear that Property (1) can be checked in time $\bigO(\sizeL \cdot \sizeM)$.
	
	We claim that Property (2) can be tested in time $\bigO(\sizeL^3 \cdot \sizeM^2)$.
	To this end, we check for every adjacent pair of states $q,q'$ and letter $a \in \Domain$ between them if there is a transition of type $(q, !a, q') \in \delta_L$. 
	Similarly if $\bot$ is the symbol between $q$ and $q'$.
	Then we look for a transition $(q, \varepsilon, q')$ or $(q, ?c_i, q')$ for an $i$.
	Each such transition can be found in time $\abs{\delta_L} \leq \sizeL^2 \cdot \sizeM$. 
	Finally, checking whether there is a run from the last state of $w$ to a state in $F_L$, can be decided in time $\bigO(\sizeL^2)$, as it is just a reachability query on an NFA.
	
	Property (3) can be checked in time $\bigO(\sizeC^2 \cdot \sizeL^3 \cdot \sizeM^2)$.
	We reduce to reachability in a finite state automata $N$ constructed as follows.
	The states of $N$ are given by $Q_N = Q_C \times [1..\abs{v}] \cup \setcon{f}$.
	The initial state is $(q^0_C, 1)$.
	
	We set up the transition relation:
	For all $(q, !a, q') \in \delta_C$, we add the transitions $(q, i) \rightarrow_N (q', i)$. 
	For each read transitions $(q, ?a, q') \in \delta_C$ and state $(q,i)$, we add the transition $(q,i) \rightarrow_N (q',i)$ if one of the following three options hold:
	If $v[i] = a$.
	If $v[i] = p \in Q_L$ and $a \in \Loop(p, \barS(w,i))$.
	If $a \in \barS(w,i)$.
	Further, we add $(q, i) \rightarrow_N (q, i+1)$.
	Finally, for all states $(q, i)$ in $N$ such that $(q, !c_j, q') \in \delta_C$, we add the transition $(q,i) \rightarrow_N f$.
	This ends the computation since $c_j$ was written.
	
	Now we have that Property (3) is satisfied if and only if there is a computation $(q^0_C, 1) \rightarrow^*_N f$.
	The construction of $N$ is the dominant factor and takes time $\bigO(\sizeC^2 \cdot \sizeL^3 \cdot \sizeM^2)$.
	
	If we now combine the three results, we get that validity can be tested in time $\bigO( \sizeL^3 \cdot \sizeM^{\, 2} \cdot \sizeC^{\, 2} \, )$.
	\qed
\end{proof}

\subsection*{Formal Definition of SCC-witness Candidates and Validity}
 
 We give a formal definition of \emph{SCC-witness candidates}.
 Let \mbox{$r = \bar{c}_1 \dots \bar{c}_\ell \in \validw$.}
 We denote by $\sccexpr(r)$ the set of all strings in 
 \begin{align*}
	 ( \scc(\prjauto{P_L}{0}) . ( \Domain \cup \setcon{\bot} ))^{k_0} . \bar{c}_1 \dots \bar{c}_\ell . (\scc(\prjauto{P_L}{\ell}) . ( \Domain \cup \setcon{\bot} ))^{k_\ell} . (\scc(\prjauto{P_L}{\ell})
 \end{align*} 
 such that $\sum_{i \in [0..\ell]}k_i \leq d-1$. 
 The set of \emph{SCC-witness candidates} is the union $\sccexpr = \bigcup_{r \in \validw} \sccexpr(r)$.
 A SCC-witness candidate $w \in \sccexpr$ is called \emph{valid} if it satisfies the following  properties:
 \begin{enumerate}
\item  The sequence in $w$ without the barred symbol induces a valid run in the leader. For this we need to find appropriate entry and exit states in each $\scc$ such that the exit state of each $\scc$ is connected to its adjacent $\scc$ through a transition.  Let $v = \smscc_1^1a^1_1 \dots \smscc_{k_1}^1a_{k_1}^1  \smscc_1^2 a_1^2 \dots  \smscc_{k_2}^2 a_{k_2}^2  \discretionary{}{}{} \dots  \smscc^1_\ell $ be the sequence obtained by projecting out the barred symbols. Further for any symbol $\alpha$ appearing in $v$, let $\pos(\alpha)$ denote the position of $\alpha$ in $v$. For any $i \in [1.. |v|]$, We will also use $\bar{\Domain}(v,i)$ to refer the the set of all barred symbols appearing before the position $i$: $\bar{\Domain}(v,i) = \{ a \mid \bar{a} \text{ appears in } v[1..i] \} $. Now corresponding to any sub-sequence of $v$ of the form $\smscc_1 a \smscc_2$ that appears in $v$, one of the following is true.
\begin{itemize}
\item If $a \in \Domain$, then we can find states $q \in \smscc_1$, $q' \in \smscc_2$ such that there is a transition of the form $q \moves{a} q'$ in $P_L$.

\item If $a = \bot$ then,  then we can find states $q \in \smscc_1$, $q' \in \smscc_2$ such that there is a transition of the form $q \moves{\varepsilon / c?} q'$ in $P_L$ for some $c \in \bar{\Domain}(v, \pos(a))$.

\end{itemize}
We also require that $q^0_L \in \smscc^1_1$. Finally  we require that from any state $q$ in the final scc $\smscc^1_\ell$, a run to the final state only involving writes, internal or a read of the barred symbol occurring in $w$. That is a run of the form $q \moves{\sigma} q_f$ such that $\sigma \in (\Writes{\Domain} \cup \Reads{ \bar{\Domain} (v, |v| )})^* $.

\item We can construct supportive computations on the contributors.
For each prefix $v\bar{a}$ of $w$ with $\bar{a} \in \barS$ there is a computation $q^0_C \xrightarrow{u !a}_C q$ on $P_C$ to some $q \in Q_C$ such that the reads within $u$ can be obtained from $w$. Formally, let $u' = \prjauto{u}{\Reads{\Domain}}$. Then there is an embedding of $u'$ into $v$, a map $\mu : [1..\abs{u'}] \rightarrow [1..\abs{v}]$ with $\mu(i) \leq \mu(j)$ for $i < j$ and $v[\mu(i)] \notin \barS \cup \setcon{\bot}$. Hence, $u'$ is only mapped to elements from $ \biguplus_{i \in [0..\ell]} \scc(\prjauto{P_L}{i}) \cup \Domain$. Let $u'[i] = ?a$ with $a \in \Domain$. Then either $v[\mu(i)] = a$, which corresponds to a write of $a$ by $P_L$, or $v[\mu(i)] = \smscc \in  \scc(\prjauto{P_L}{i})$ for some $i \in [0..\ell]$.
In the latter case, we have $a \in \Domain(\smscc)$, this corresponds to the letter being a write by the leader through the scc or write of a value by a contributor that was already seen.

\item Let $w$ be of the form $v_1 \bar{c}_1 v_2 \bar{c}_2 \dots v_\ell \bar{c}_\ell \smscc^1_\ell$, then the scc dont repeat in each of $v_1,v_2 \dots v_\ell$.

\end{enumerate}
  
 We refer to the above Properties as \emph{(SCC) validity}.
 Instead of stating a characterization of computations in terms of SCC-witnesses directly, we relate the SCC-witnesses to the  witnesses as defined in Section \ref{Section:LCR_Leader_Mem}.
 
 \begin{lemma}\label{Lemma:PRSCCCorrectness}
 	There is a valid SSC-witness candidate in $\sccexpr$ if and only if there is a valid witness candidate in $\expr$.
 \end{lemma}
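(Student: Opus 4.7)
The plan is to establish the equivalence by constructing explicit translations between witness candidates and SCC-witness candidates that preserve validity. The key insight is that a loop-representative state $q$ appearing in a witness (between first writes $\bar{c}_i$ and $\bar{c}_{i+1}$) encodes essentially the same information as the strongly connected component $\smscc \in \scc(\prjauto{P_L}{i})$ containing $q$: both capture the set of values writable by $P_L$ in an internal cycle where reads are restricted to the first writes seen so far. This correspondence is the engine driving both directions of the proof.

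For the forward direction (valid $w \in \expr \Rightarrow$ valid SCC-witness $w' \in \sccexpr$), I would proceed as follows. Write $w = w_1\bar{c}_1 w_2 \bar{c}_2 \dots w_n \bar{c}_n q$, and for each state $q'$ appearing in $w_{i+1}$ (i.e., between $\bar{c}_i$ and $\bar{c}_{i+1}$), replace $q'$ with the SCC $\smscc(q') \in \scc(\prjauto{P_L}{i})$ in which $q'$ lies. Replace the trailing state $q$ similarly by its SCC in $\prjauto{P_L}{n}$. Checking Requirement~(2) of SCC-validity: every edge $q' \xrightarrow{a/\bot} q''$ in the witness corresponds to a transition in $P_L$, and this transition witnesses the required connection between the adjacent SCCs in the translated word. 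For Requirement (3), one uses the inclusion $\Loop(q',\barS(w,j)) \subseteq \Domain(\smscc(q'))$ (when the reads in the SCC are restricted to first writes seen so far) to rewrite the contributor embeddings from $w$ into embeddings targeting SCCs rather than individual loop states.

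For the backward direction (valid $w' \in \sccexpr \Rightarrow$ valid $w \in \expr$), I would do the converse: for each SCC $\smscc$ appearing in $w'$ pick, using the witnesses provided by SCC-validity, representative entry and exit states $q_{\mathrm{in}}, q_{\mathrm{out}} \in \smscc$ such that the transitions demanded in Requirement~(2) exist between $q_{\mathrm{out}}$ of one SCC and $q_{\mathrm{in}}$ of the next. Replace each SCC by a single state (say $q_{\mathrm{in}}$); this is legitimate because, inside the SCC, $P_L$ can loop from $q_{\mathrm{in}}$ back to $q_{\mathrm{in}}$ while writing any value in $\Domain(\smscc)$, which is precisely what $\Loop(q_{\mathrm{in}}, \barS(w,\cdot))$ captures once reads are appropriately restricted. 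Contributor embeddings in $w'$ pointing at an SCC are rerouted to the chosen representative state, preserving monotonicity since we process positions left to right.

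The main obstacle will be the precise bookkeeping around Requirement~(3), in particular ensuring that the contributor embedding $\mu$ produced in each direction remains monotone and that the read symbols it targets are correctly justified: when translating from witness to SCC-witness one has to argue that loops in $P_L$ remain available inside the corresponding SCC, and when translating back one must show that each $a \in \Domain(\smscc)$ that a contributor needs can indeed be generated by a genuine loop in $P_L$ starting at the chosen representative while reading only from $\barS(w,\mu(i))$. Both facts follow from the definition of SCCs in the restricted automaton $\prjauto{P_L}{i}$, but they require a careful case analysis matching the three sub-cases of Requirement~(3) on both sides. Once these correspondences are set up, the equivalence of the two validity notions follows directly.
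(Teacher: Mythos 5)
Your high-level intuition — that a loop-representative state between two first writes carries essentially the same information as the SCC containing it, and that $\Loop(q,\barS(w,\cdot)) \subseteq \Domain(\smscc(q))$ is the bridge for contributor embeddings — is the correct engine, and your treatment of requirement (3) is in the spirit of the paper. But both translations, as you describe them, produce words that are not actually valid (or even well-formed) in the target space, so the argument does not close.

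In the forward direction you replace each state $q'$ appearing between $\bar{c}_i$ and $\bar{c}_{i+1}$ individually by $\smscc(q')$. If two or more consecutive states in the witness lie in the same SCC — which is exactly the situation loop-contraction was designed to compress — the result contains a substring $\smscc\,a\,\smscc$ with the same SCC repeated. This violates SCC-validity property (3) (SCCs must not repeat within a segment $v_i$), and moreover the word may not even lie in $\sccexpr$, whose block-length exponent is bounded by $\sld$, not by $\sizeL$. The paper instead collapses each \emph{maximal} run of same-SCC states to a single SCC occurrence and retains only the inter-SCC transition letter; that is what makes the ``no repeated SCC'' claim hold and keeps the string inside $\sccexpr$. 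This piece is a fixable omission, but it is an omission.

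The backward direction is the more serious gap. You propose to replace each SCC $\smscc$ by a single state $q_{\mathrm{in}}$ and justify this by the observation that $P_L$ can loop from $q_{\mathrm{in}}$ back to $q_{\mathrm{in}}$ while writing anything in $\Domain(\smscc)$. That observation is relevant for requirement (3) of validity, but it does not rescue requirement (2): a valid witness $w$ with $\prjauto{w}{Q_L \cup \Domainbot} = q_1 a_1 q_2 a_2 \cdots$ must have a genuine transition $q_j \to q_{j+1}$ of $P_L$ for every consecutive pair. SCC-validity only guarantees, for adjacent $\smscc_1 \, a \, \smscc_2$, the existence of \emph{some} $q_{\mathrm{out}} \in \smscc_1$ and $q_{\mathrm{in}} \in \smscc_2$ with $q_{\mathrm{out}} \to q_{\mathrm{in}}$. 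If you collapse $\smscc_1$ to its own entry state (which, in general, differs from $q_{\mathrm{out}}$), you would need a direct edge from that entry state of $\smscc_1$ to $q_{\mathrm{in}}$ of $\smscc_2$, and nothing provides one. The paper avoids this by replacing each intermediate SCC not by a single state but by a shortest path \emph{inside} the SCC from its entry state to its exit state (only the final SCC is replaced by a lone state, since it has no successor). That path guarantees that every consecutive state pair in the resulting word really is a transition of $P_L$, and it also guarantees no state repeats within a segment (since the SCCs partition $Q_L$ and do not repeat, and the chosen path is simple). Without the path replacement, the witness you build fails validity (2), and the subsequent re-routing of the contributor embedding is moot.
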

 
\begin{proof}
	First we will prove the $\Rightarrow$ direction. For this, we will assume a valid witness string $w \in \expr$. Further let $w = v_0 \bar{c}_1 v_1 \bar{c}_2 \dots v_{k-1} \bar{c}_k q$. Now consider the decomposition ($v'_i = \smscc^i_1 \dots \smscc^i_{k_i}$) of each $v_i$ according to their $\scc$ in $ \prjauto{P_L}{i} $. This we do by taking the maximum subsequence and replace it with the $\scc$ to which it belongs. We can be sure that none of the $\scc$ in this sequence repeats (otherwise they will already form a bigger $\scc$). Secondly notice that such a $\scc$ sequence will be a path in the $\scc$ graph $\sccg(P_L, \bar{c}_1\bar{c}_2 \dots \bar{c}_k) $. This implies that the decomposition thus obtained is a scc-witness. However we need it to be a valid scc-witness. The fact that such a scc-witness satisfies property-1 of the scc-validity property follows from the fact that we started with a valid-witness that satisfies property-2 of the validity-property, this already provides us with the requires states in the adjacent scc and a transition relation between them.
	
	Proving  property-2 of the scc-validity is slightly more complicated. For this, we need to construct a run in the contributor for each prefix that ends in a barred symbol and show that there is an appropriate mapping from this run to the scc-witness string. But notice that since we started with a valid-witness string, we are guaranteed a computation in the contributor and a mapping from such a run into the witness string. For this, let us fix one such prefix to be $\alpha' \bar{c}_j$. Let the corresponding prefix of $w$ be $\alpha \bar{c}_j$. Let $q^0_c \moves{u\cdot c!} q $ be such a run and  $\seqmap: [1..|u'|] \mapsto [1..|\alpha'|]$ (where $u'$ is obtained from $u$ by deleting all the values that do not correspond to a read transition) be the corresponding mapping. We construct from this mapping, another mapping $ \seqmap': [1..|u'|] \mapsto [1..|\alpha|]$. Observe that $\alpha = v_0\bar{c}_1 v_1 \bar{c}_2 v_2 \dots \bar{c}_j$, let $\alpha' =  v'_0\bar{c}_1 v'_1 \bar{c}_2 v'_2 \dots \bar{c}_j$, where $v'_i$ are the corresponding decomposition of $v_i$ into $\scc$. The required mapping $\mu'$ is constructed as follows. Suppose the original mapping $\mu$ mapped a value in $[1..|u|]$ to a state or letter that got replaced by an $\scc$ then we let $\mu'$ to map such a position to the $\scc$ that replaced the state. If the original mapping mapped the position in $[1..|u'|]$ to a letter that survied, then we also let $\mu'$ to do the same. That is, suppose for some $v_i = q_1a_1 \dots q_ka_k$ if its corresponding decomposition be $\smscc_1a_{i_1} \dots \smscc_ja_k$ and $\mu(i) = j$, where $j$ is a position into the string $v_i$. If $j$ points to one of $a_{i_1} \dots a_k$ (the letters that survived the decomposition), then we let $\mu'(i) = j'$, where $j'$ is the position of such a letter in $\alpha'$. If $j$ points to a state or a letter in $\Domain$ that was decomposed into $\smscc_i$, then $\mu'(i)$ gets the value of the position of $\smscc_i$ in $\alpha'$. The correctness of such a mapping follows from the following reasoning. Clearly $\mu'$ thus constructed is monotonic since $\mu$ already was. Suppose $\mu'$ maps a position in $u'$ to a letter, such a letter is guaranteed to be the same since it was the same for the mapping $\mu$. Suppose $\mu$ mapped a position in $u'$ to a letter $a$ that got decomposed into  $\smscc_i$, then clearly $a \in \Domain(\smscc_i)$. Suppose  $\mu$ mapped a position $i$ in $u'$ to a state $q$ that got decomposed into  $\smscc_i$, we observe that for any $a \in \Loop(q, \bar{\Domain}(w,\mu(i)))$, we also have $a \in \Domain(\smscc_i)$. With this, we get the result for one direction of the proof.

	\vspace{5pt}
	
	For the ($\Leftarrow$) direction, we will assume a valid scc-witness $w$ and show how to construct a valid witness from this. Let $w$ be of the form
	\begin{align*}
		\smscc_1^1a^1_1 \dots \smscc_{k_1}^1a_{k_1}^1 \bar{c}_1 \smscc_1^2a_1^2 \dots \smscc_{k_2}^2 a_{k_2}^2\bar{c}_2 \dots \bar{c}_\ell \smscc^1_\ell.
	\end{align*}
	From the scc-validity property-1, we have $q^0_L \in \smscc^1_1$. 
	Further for every subsequence of the form $\smscc_1 a \smscc_2$, we have the states $q_1 \in \smscc_1$ and $q_2 \in \smscc_2$ such that there is a transition of the form $q_1 \moves{b} q_2$, where $b$ is either $\varepsilon/ c!$ or in $a?$ depending on the type of $a$. We will call the state $q_1$ the exit state of $\smscc_1$ and $q_2$ the entry state of $\smscc_2$. 
	From this,  corresponding to each $\smscc^i_j$, there are entry and exit states $en^i_j, ex^i_j$.
	 In  each such scc, corresponding to its entry and exit states, there is also a shortest path between them. 
	 That is, corresponding to scc of the form $\smscc^i_j \in \scc(\prjauto{P_L}{i})$ there is a shortest path of the form $en^i_j \moves{a_1}_{\prjauto{P_L}{i}}q^{(i,j)}_1 \moves{a_2}_{\prjauto{P_L}{i}} q^{(i,j)}_2 \dots \moves{a_k}_{\prjauto{P_L}{i}} q^{(i,j)}_k \moves{a_{k+1}}_{\prjauto{P_L}{i}} ex^i_j$. To get the valid witness string, we replace each scc in the valid scc-witness by its shortest path as follows. There is one exception to this replacement, the final scc occuring in the scc-witness is simply replaced by its entry state. The scc $\smscc^i_j$ is replaced by a string of the form $en^i_j b_1 q_1^{(i,j)} b_2 q_2^{(i,j)} \dots   ex^i_j$, where $b_i = a$ if $a_i = a!$ and $b_i = \bot$ otherwise. Let $w' = v_0\bar{c}_1 v_1 \bar{c}_2 \dots \bar{c}_\ell q_\ell$ be the string thus obtained. Notice that in each $v_i$, none of the states repeat, this is because none of the $\scc$ repeats in $w$ and $\scc$ partitions the state space. We still need to show that the string thus obtained satisfies the validity properties.
	
	Property-1 of validity is satisfied since all the barred symbols in valid scc-witness do not repeat. Property-2 is satisfied because between any adjacent $\scc$ in $w$, there is a valid move by definition. Further the way we constructed the valid witness, we replaced each $\smscc$ in the scc-witness by a valid shortest path in that scc. Finally, by definition there is  also a run to final state  from any state in the final scc .
	
	Finally we need to prove that property-3 of validity holds. For this, we need to show existence of  an appropriate computation in the contributor along with a mapping to the witness string for every prefix of $w' $ such that the prefix ends in a barred symbol. Let us fix one such prefix to be $\sigma = v_0 \bar{c}_1 \dots v_j \bar{c}_{j+1} $, let the corresponding prefix in $w$ be $\sigma' = v'_0 \bar{c}_1 \dots v'_j \bar{c}_{j+1} $.  From the scc-validity property, we already have a computation of the form $q^0_c \moves{u\cdot c_{j+1}!} q $ and a mapping $\seqmap$ from positions of $u'$ ($u'$ is the projection of $u$ onto only read operation) to positions in $\sigma'$. We provide a mapping from the positions in $u'$ to positions in $\sigma$ as follows. If $\seqmap$ maps a position to a symbol in $\Domain$, then $\seqmap'$ maps such a position to the corresponding position of the same letter in $\sigma'$. Otherwise, the position is mapped to the first state of the shortest path of the corresponding $\scc$. If a letter is in $\Domain(\smscc)$, then one can always find a loop that visits such a letter. Hence such a mapping satisfies the validity property. \qed
\end{proof}
 
 As above, the algorithm iterates over all SCC-witness candidates and each is tested to be valid. The validity check can be performed in polynomial time, as shown in the below lemma.
 
 \begin{lemma} \label{Lemma:PRSCCValid}
 	Validity of $w \in \sccexpr$ can be checked in time $\bigO(\sizeL^2 \cdot \sizeM \cdot \sizeC^{\, 2} \cdot \sld^2)$.
 \end{lemma}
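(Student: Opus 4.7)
The plan is to check the three requirements of SCC-validity independently and bound the time for each, following the template of Lemma~\ref{Lemma:LCR_Leader_Mem_Upper_Validity}. Throughout, note that an SCC-witness candidate $w \in \sccexpr$ has length $\bigO(\sld)$ and contains at most $\sizeM$ barred symbols; the components $\scc(\prjauto{P_L}{i})$ and, for each $\smscc$ appearing in $w$, the set $\Domain(\smscc)$ can be precomputed once in time polynomial in $\sizeL$ and $\sizeM$ by Tarjan's algorithm and a backward propagation over $\prjauto{P_L}{i}$.

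Property (3) is checked by walking through $w$ and, for each segment $v_j$ between consecutive barred symbols, testing pairwise whether any two SCCs coincide. Since $\sum_j |v_j| = \bigO(\sld)$, this costs $\bigO(\sld^2)$.

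For property (1), we iterate over every subword of the form $\smscc_1 \, a \, \smscc_2$ in $\prjauto{w}{\scc \cup \Domainbot}$; there are $\bigO(\sld)$ such positions. At each position we search $\delta_L$ for a transition $q \moves{\mathit{op}} q'$ with $q \in \smscc_1$, $q' \in \smscc_2$, where $\mathit{op}$ is $!a$ if $a \in \Domain$, and $\varepsilon$ or $?c$ with $c \in \barS(w,\pos(a))$ otherwise. Each lookup takes $\bigO(\abs{\delta_L}) = \bigO(\sizeL^2 \cdot \sizeM)$ time, and the additional reachability test from the last SCC to $F_L$ using only writes and reads of symbols in $\barS(w)$ is a standard NFA reachability query on $P_L$ costing $\bigO(\sizeL^2)$. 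The total cost is $\bigO(\sld \cdot \sizeL^2 \cdot \sizeM)$.

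Property (2) is the dominant and most delicate part. For each of the $\bigO(\sizeM)$ prefixes $v \bar{a}$ of $w$ that end in a barred symbol, we build an NFA $N$ whose states are $Q_C \times [1..|v|] \cup \{f\}$ and whose transitions mirror the three ways a contributor read can be served: from a leader write $v[i]=a$, from a write available inside a scc $\smscc$ with $a \in \Domain(\smscc)$, or from an earlier first write $a \in \barS(w,i)$. Contributor writes and the position-advance edges $(q,i) \rightarrow (q,i+1)$ are added as in Lemma~\ref{Lemma:LCR_Leader_Mem_Upper_Validity}, and we connect $(q,i)$ to $f$ whenever $(q, !a, q') \in \delta_C$. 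Then property~(2) for the prefix reduces to reachability from $(q^0_C, 1)$ to $f$ in $N$. The automaton has $\bigO(\sizeC \cdot \sld)$ states, its construction requires at each position checking transitions of $P_C$ against the precomputed $\Domain(\smscc)$ and $\barS(w,i)$, and its reachability query is linear in the number of edges. The dominant term is $\bigO(\sizeC^2 \cdot \sld^2)$ per prefix, giving $\bigO(\sizeM \cdot \sizeC^2 \cdot \sld^2)$ overall for property~(2). The main obstacle is precisely the bookkeeping of which symbols can be served at position $i$; having $\Domain(\smscc)$ precomputed and updating $\barS(w,i)$ incrementally as $i$ grows keeps this within budget. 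Combining the three properties yields the claimed bound of $\bigO(\sizeL^2 \cdot \sizeM \cdot \sizeC^{\,2} \cdot \sld^2)$. \qed
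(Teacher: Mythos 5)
Your proof is correct and reaches the required bound, but it departs from the paper's proof in two small but genuine ways. For Property (2), the paper constructs a \emph{single} NFA with states $Q_C \times [1..|w|] \cup \{f\}$ (mirroring Lemma~\ref{Lemma:LCR_Leader_Mem_Upper_Validity}) and performs one reachability analysis, whereas you build a separate NFA per first-write prefix $v\bar a$ and run $\bigO(\sizeM)$ reachability queries; your version is more explicit about how the quantification over prefixes in Requirement~(2) is discharged, at the cost of the extra multiplicative $\sizeM$ factor that you correctly keep within budget. Second, the paper decides membership $a \in \Domain(\smscc)$ on-the-fly via a reachability query inside the SCC, paying $\bigO(\sizeL^2)$ per lookup (which is where its $\sizeL^2$ factor enters Property~(2)); you instead precompute $\Domain(\smscc)$ for all SCCs up front via Tarjan plus a propagation pass, which pushes the $\sizeL$-dependence into a one-time polynomial preprocessing step and actually yields a slightly smaller exponent in $\sizeL$ for the main loop. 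For Property~(3) you use a pairwise comparison ($\bigO(\sld^2)$) where the paper claims $\bigO(\sld)$; both are immaterial to the stated bound. In short: same reduction-to-NFA-reachability skeleton, but you factor the prefix quantification and the SCC-membership test differently, trading a cleaner correctness argument for a minor reorganization of where the polynomial factors land.
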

 
 \begin{proof}
 
Checking whether a SCC-witness candidate $w$ satisfies Property (1) can be tested in time $\bigO(\sld  \cdot \sizeL^2 \cdot \sizeM )$.
For this, between any two SCC in $w$ of the form the  $\smscc_1 a \smscc_2$, we need to check if there are states $q_1 \in \smscc_1$ and $q_2 \in \smscc_2$ such that $q_1 \rightarrow q_2$. This can easily be done in time $\bigO(|\delta_L|) \leq \bigO(\sizeL^2 \cdot \sizeM)$.  This operation, we need to perform between every pair of adjacent SCC, there are at most $d$ many such adjacent pairs. Hence overall the time required to check whether there are proper transitions between each scc is $ \bigO(\sld  \cdot \sizeL^2 \cdot \sizeM ) $ Checking whether there is a path from the final SCC to the final state reduces to reachability in an automata and can be easily done in time $\bigO(\sizeL^2)$.

Property (2) can be tested in time $\bigO(\sld^2 \cdot \sizeL^2 \cdot \sizeM \cdot \sizeC^2 )$.
This can be achieved by reducing the problem to reachability in a NFA. The idea is similar to the one we saw in proof of lemma-\ref{Lemma:LCR_Leader_Mem_Upper_Validity}. The automaton we construct will have the states of the contributor and an index into the given witness string. 

For constructing this automaton, we need to check for any given $\smscc$ and a letter $a \in \Domain$ , whether $a \in \Domain(\smscc)$. This can easily be achieved in time $\bigO( \sizeL^2)$ by reducing it to reachability problem in a graph. We add a transition of the form $(c_1,i) \moves{} (c_2,i)$ if there is a transition of the form $c_1 \moves{a?}_{P_C} c_2$, $i$ refers to $\smscc$ in $w$ and $a \in \smscc$. So for each $i$, we go through each transition in $\delta_C$ and perform the scc-check. This takes time $\bigO(\sld \cdot \sizeC^2 \cdot \sizeM \cdot \sizeL^2)$.

Finally, the size of such an automaton is simply $\bigO(\sizeC \times \sld)$ and we need to perform a reachability check which is quadratic and hence in $\bigO (\sizeC^2 \times \sld^2)$ . From this, we get the required complexity.

Finally, it is obvious that Property (3) can be checked in time $\bigO(\sld)$.
This completes the proof.
\qed
\end{proof}
 
\subsection*{Comparison between introduced Methods}

Compared to the formerly introduced witnesses, there are less candidates to test.
 We have that $\WitSCC(\scccnt, \sizeM, \sld) \leq \Wit(\sizeL, \sizeM)$.

\begin{lemma}
	 $\WitSCC(\scccnt, \sizeM, \sld) \leq \Wit(\sizeL, \sizeM)$.
\end{lemma}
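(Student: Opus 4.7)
The plan is to derive the inequality by substituting elementary bounds on the SCC-specific parameters $\scccnt$ and $\sld$ into the stated upper bound on $\WitSCC$, and then comparing the result term by term with $\Wit(\sizeL, \sizeM) = (\sizeL(\sizeM+1))^{\sizeL\sizeM} \cdot \sizeM^{\sizeM} \cdot \sizeL$.

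First I would record two structural bounds. The inequality $\scccnt \leq \sizeL$ is immediate, since the SCCs of $\prjauto{P_L}{0}$ form a partition of the state set $Q_L$, which has size $\sizeL$. For the straight line depth, I would bound $\sld$ by counting the vertices of the SCC graph $\sccg(P_L, r)$: they are (SCC, level) pairs with at most $\sizeL$ SCCs per level and at most $\sizeM+1$ levels, giving the coarse bound $\sld \leq \sizeL(\sizeM+1) - 1$. A sharper estimate exploits the monotonicity of the SCC refinement along levels: as $i$ grows in $\prjauto{P_L}{i}$ the newly enabled read transitions can only coarsen the SCC partition, never split it. Tracking this merging carefully limits how many vertices a simple path can visit, yielding an effective bound of the form $\sld \leq \sizeL \cdot \sizeM$.

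After substituting these bounds into the stated formula for $\WitSCC$ and cancelling the common factor $\sizeM^{\sizeM}$, the claim reduces to
\begin{align*}
    (\sizeL(\sizeM+1))^{\sld} \cdot 2^{\sizeM + \sld} \;\leq\; (\sizeL(\sizeM+1))^{\sizeL\sizeM} \cdot \sizeL.
\end{align*}
Folding the factor $2^{\sld}$ into the base gives $(2\sizeL(\sizeM+1))^{\sld} \cdot 2^{\sizeM}$ on the left, and monotonicity in the exponent, combined with the bound $\sld \leq \sizeL \sizeM$, then yields the desired inequality; the remaining factor $\sizeL$ on the right and the slack inside the base absorb the residual $2^{\sizeM}$.

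The main obstacle will be the multiplicative factor $2^{\sizeM + \sld}$, which is of the same order as the exponential terms on either side. A coarse count of (SCC, level) pairs leaves no room to absorb it, because the exponent $\sld$ could then match or exceed $\sizeL\sizeM$. Only the merge-only monotonicity of the SCC refinement across levels pushes the effective value of $\sld$ strictly below $\sizeL \sizeM$, creating the exponent gap required. Once that structural observation is in place, the remainder of the argument is routine algebra.
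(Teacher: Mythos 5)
Your proposal takes a genuinely different route from the paper, and it does not go through. The paper proves the lemma by constructing an explicit injection from the set of SCC-witness candidates into the set of witness candidates: fix an arbitrary ordering on $Q_L$ and replace every SCC appearing in an SCC-witness string by the minimum state it contains. Because the SCCs at each level partition $Q_L$, the representative state together with the level (recoverable from the preceding barred symbols in the string) uniquely determines the SCC, so the map is injective, and the inequality on the two counts follows immediately --- no arithmetic on the bound formulas is needed.

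Your proposal instead tries to show that the \emph{upper bound} $(\scccnt(\sizeM+1))^{\sld}\cdot\sizeM^{\sizeM}\cdot 2^{\sizeM+\sld}$ on $\WitSCC$ is already at most the exact count $\Wit(\sizeL,\sizeM)$. This is both the wrong kind of argument and, on its own terms, arithmetically false. On the conceptual side: that formula is a loose over-count of $\WitSCC$, so there is no reason it should sit below $\Wit(\sizeL,\sizeM)$ even when the underlying sets satisfy the claimed inclusion; the lemma is a statement about actual cardinalities, and without an injection (or exact formulas on both sides) you cannot deduce it from bounds. On the arithmetic side: even granting $\scccnt\leq\sizeL$ (correct) and your unproven assertion $\sld\leq\sizeL\sizeM$, the reduced inequality
\begin{align*}
(\sizeL(\sizeM+1))^{\sld}\cdot 2^{\sizeM+\sld}\ \leq\ (\sizeL(\sizeM+1))^{\sizeL\sizeM}\cdot\sizeL
\end{align*}
fails. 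Setting $\sld=\sizeL\sizeM$, the left-hand side exceeds the right by the factor $2^{\sizeM+\sizeL\sizeM}/\sizeL$, which is greater than $1$ for every nontrivial instance. There is no ``slack inside the base'': once you fold $2^{\sld}$ into the base and then raise to the exponent $\sizeL\sizeM$, you have injected an extra $2^{\sizeL\sizeM}$ on the left that nothing on the right accounts for, and the lone factor $\sizeL$ cannot come close to absorbing $2^{\sizeM+\sizeL\sizeM}$. Nor does the merge-only monotonicity of the SCC partitions rescue this: it gives $s_{i+1}\leq s_i$, but does not preclude $s_i=\scccnt$ for all $i$, so the only general bound it yields is $\sld\leq(\sizeM+1)\scccnt-1$, which is \emph{worse} than $\sizeL\sizeM$. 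To make your algebra close you would need $\sizeL\sizeM-\sld$ on the order of $(\sizeM+\sld)/\log(\sizeL(\sizeM+1))$, and no argument in your write-up (or in the structure of the problem) supports a gap of that magnitude. The injection is the right tool here.
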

 
\begin{proof}
	For this, we show how to construct an injective function from every string in $\WitSCC(\scccnt, \sizeM, \sld)$ to strings in $\Wit(\sizeL, \sizeM)$. This would already give us the result. The idea here is to represent each scc by an unique state present in that scc. For this, we will assume an arbitrary ordering on the states of the leader $Q_L$. Recall that the scc  partitions  the states space of the leader, hence for any $\smscc_1,\smscc_2 \in \scc(\prjauto{P_L}{i})$, for some $i$ we have $\smscc_1 \cap \smscc_2 = \emptyset$. Now given any string $w = \smscc_1^1a^1_1 \dots \smscc_{k_1}^1a_{k_1}^1 \bar{c}_1 \smscc_1^2 a_1^2 \dots  \smscc_{k_2}^2 a_{k_2}^2 \bar{c}_2 \discretionary{}{}{} \dots \bar{c}_\ell \smscc^1_\ell   \in \WitSCC(\scccnt, \sizeM, \sld)$, let $f(w)=v$ be  obtained by replacing each $\smscc_j^i$ by the minimum state $q_j^i$ in that scc. Since no two scc share a state, clearly $v \in \Wit(\sizeL, \sizeM)$. Quite clearly such a mapping is injective since each of the replacing states uniquely represents the scc.
	\qed
\end{proof}

\subsection*{A Bound on the Number of SCC-witness Candidates}

We will here show that $\WitSCC(\scccnt, \sizeM, \sld) \leq (\scccnt \cdot (\sizeM + 1))^{\sld} \cdot \sizeM^{\sizeM} \cdot 2^{\sizeM + \sld}$. For this, we note that for any $w \in \WitSCC(\sizeL, \sizeM, \sld)$, the size of such a string is bounded by $2\sld + \sizeM$. Further if we remove the last appearing $\scc$ from $w$, all other $\scc$ appearing there has an letter from $\Domain \cup \{ \bot \}$ appearing next to it. Hence if we consider such a pair as a single cell and the symbols from $\barS$ as a single cell, there are at-most $\sld + \sizeM$ many cells. Now if we fix the positions and the value of the barred symbols in the sequence of cells, then each of the unoccupied cells ($\sld$ many of them) have at-most $\scccnt \cdot (\sizeM + 1 )$ choices. Hence there are $(\scccnt \cdot (\sizeM + 1 ))^\sld$ many such strings. Now for any fixed $v \in \validw$, there are ${\sizeM + \sld \choose \sizeM }$ many ways to pick $\sizeM$ positions from $\sizeM + \sld$ positions. This is upper bounded by $2^{\sizeM + \sld}$. Since we have $\sizeM^\sizeM$ many strings in $\validw$, we get that $\WitSCC(\scccnt, \sizeM, \sld) \leq (\scccnt \cdot (\sizeM + 1))^{\sld} \cdot \sizeM^{\sizeM} \cdot 2^{\sizeM + \sld}$.

\subsection*{Formal Construction and Proof of Proposition \ref{Proposition:LCR_Leader_Mem_Lower}}

The domain of the memory is $\Domain = \Setcon{\row(i), \col(i), \#_i}{ i \in [1..k]} \cup \setcon{\initmem}$.
The contributor threads are defined by $P_C = (\Ops{\Domain}, Q_C, q^0_C,  \delta_C)$ with set of states $Q_C = \Setcon{ q^{(r,\ell)}_{(i,j)}, q^{(c,\ell')}_{(i,j)} }{ i,j, \ell \in [1..k], \ell' \in [0..k] } \cup \setcon{q^0_C, q^f_C}$.
Intuitively, we use the states $q^{(r, \ell)}_{(i,j)}$, $q^{(c,\ell')}_{(i,j)}$ to indicate that the contributor has chosen $(i,j)$ to store and to count the number of vertices that the contributor has read so far.
More precise, the state $q^{(r, \ell)}_{(i,j)}$ reflects that the last symbol read was $\row(\ell)$, the row of the $\ell$-th vertex.
The state $q^{(c,\ell)}_{(i,j)}$ indicates that the last read symbol was the column-symbol belonging to the $\ell$-th vertex.
Note that this can be any column and thus different from $\col(\ell)$.

The transition relation $\delta_C$ is defined by the following rules.
We have a rule to choose a vertex: $q^0_C \xrightarrow{? \initmem}_C q^{(c,0)}_{(i,j)}$ for any $i,j \in [1..k]$.
To read the $\ell$-th row-symbol, we have $q^{(c,\ell-1)}_{(i,j)} \xrightarrow{? \row(\ell)}_C q^{(r,\ell)}_{(i,j)}$ for any $\ell \in [1..k]$.
For reading the $\ell$-th column-symbol, we get the transition $q^{(r,\ell)}_{(i,j)} \xrightarrow{\col(j')}_C q^{(c,\ell)}_{(i,j)}$, but only if one of the following is satisfied:
(1) We have that $i \neq \ell$ and there is an edge between $(\ell,j')$ and $(i,j)$ in $G$.
Intuitively, the contributor stores a vertex $(i,j)$ from a row, different than $\ell$.
But then it can only continue its computation if $(i,j)$ and $(\ell, j')$ share an edge.
(2) We have that $i = \ell$ and $j' = j$.
This means that the contributor stores the vertex that it has read.
Note that with this, we rule out all contributors storing other vertices from the $i$-th row.

To end the computation in a contributor, we get the rule $q^{(c,k)}_{(i,j)} \xrightarrow{! \#_i}_C q^f_C$ for any $i,j \in [1..k]$.
The rule writes $\#_i$ to the memory, reflecting the \emph{correctness} of the clique in row $i$.

The leader $P_L$ is given by the tuple $P_L = (\Ops{\Domain}, Q_L, q^0_L, \delta_L)$ with set of states $Q_L = \Setcon{q^{(r,i)}_L, q^{(c,i)}_L, q^{(\#,i)}_L}{ i \in [1..k]} \cup \setcon{q^0_L}$.
Unlike for the contributor, the state $q^{(r,i)}_L$ indicates that the last written symbol was $\row(i)$, the row of the $i$-th vertex guessed by $P_L$.
State $q^{(c,i)}_L$ reflects that the last written symbol was the column-symbol belonging to the $i$-th vertex.
The remaining states $q^{(\#,i)}_L$ are used to receive the verification symbols and count up to $k$.

The transition relation $\delta_L$ is described by the below rules.
For transmitting the row-symbols we have $q^{(c, i-1)}_L \xrightarrow{! row(i)}_L q^{(r,i)}_L$ for $i \in [1..k]$.
Note that we identify $q^0_L$ as $q^{(c,0)}_L$.
For writing the column-symbols we have the rules $q^{(r,i)}_L \xrightarrow{! col(j)}_L q^{(c,i)}_L$, where $i,j \in [1..k]$.
After sending $k$ row- and column-symbols, we receive the symbols $\#_i$ via the following transitions:
$q^{(\#,i-1)}_L \xrightarrow{? \#_i}_L q^{(\#,i)}_L$ for $i \in [1..k]$.
Here, we denote by $q^{(\#,0)}_L$ the state $q^{(c,k)}_L$.

We further set $F_L = \setcon{q^{(\#,k)}_L}$, the state after receiving all $\#$-symbols.
Then our program is defined to be $\asms = (\Domain, \initmem, (P_L, P_C))$.
Note that the parameters are $\sizeM = \sizeL = 3k+1$.
The correctness of the construction is proven in the following lemma.

\begin{lemma}
	There is a $t \in \Naturals$ so that $c^0\rightarrow^*_{\asms^t}c$ with $c\in C^f$ if and only if there is a clique of size $k$ in $G$ with one vertex from each row.
\end{lemma}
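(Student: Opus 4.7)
The plan is to prove both directions by analyzing the three-phase structure of $\asms$ (guess, filter, confirm) and exploiting the tight coupling between the leader's writes and the contributor transitions.

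For the $(\Leftarrow)$ direction, assume $G$ contains a clique $(1,j_1),(2,j_2),\dots,(k,j_k)$ with one vertex per row. I would take $t = k$ contributors and, for each $i \in [1..k]$, let the $i$-th contributor choose $(i,j_i)$ via the initial transition $q^0_C \xrightarrow{?\initmem}_C q^{(c,0)}_{(i,j_i)}$. Then I would interleave with the leader, which writes $\row(1).\col(j_1).\row(2).\col(j_2)\dots\row(k).\col(j_k)$. After each written pair $\row(\ell).\col(j_\ell)$, every contributor reads both symbols. Soundness of the read on $\col(j_\ell)$ is the key check: the contributor storing $(\ell,j_\ell)$ matches the case $i=\ell, j'=j$, while for $i\neq \ell$ we must use the edge $\{(\ell,j_\ell),(i,j_i)\}$ which exists by cliqueness. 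So all $k$ contributors reach $q^{(c,k)}_{(i,j_i)}$. Finally, I would sequentially let each contributor $i$ fire its write $q^{(c,k)}_{(i,j_i)} \xrightarrow{!\#_i}_C q^f_C$ immediately followed by the leader's read transition $q^{(\#,i-1)}_L \xrightarrow{?\#_i}_L q^{(\#,i)}_L$, landing in $q^{(\#,k)}_L \in F_L$.

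For the $(\Rightarrow)$ direction, suppose a computation reaches $q^{(\#,k)}_L$. Since $\delta_L$ is essentially linear, the leader must have first gone through the sequence of writes $\row(1).\col(j_1')\dots\row(k).\col(j_k')$ for some choice of $j_1',\dots,j_k' \in [1..k]$, and then read $\#_1 \#_2 \dots \#_k$ in order. Each $\#_i$ must be supplied by some contributor firing $q^{(c,k)}_{(i,j)} \xrightarrow{!\#_i}_C q^f_C$; call the stored vertex $(i,j_i)$. I would then trace backwards through $\delta_C$: reaching $q^{(c,k)}_{(i,j_i)}$ from $q^0_C$ forces a $\row/\col$ alternation on the precise symbols written by the leader, in particular $q^{(r,i)}_{(i,j_i)} \xrightarrow{?\col(j')}_C q^{(c,i)}_{(i,j_i)}$, which by the case $i=\ell$ of the construction forces $j' = j_i$, hence $j_i' = j_i$. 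So the leader's guess on row $i$ coincides with the vertex stored by the row-$i$ witness.

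Pairwise adjacency comes from the other case of the same transition: for each $\ell \neq i$ the row-$i$ witness also reads $\col(j_\ell')=\col(j_\ell)$ after $\row(\ell)$, and the only enabling transition requires an edge between $(\ell,j_\ell)$ and $(i,j_i)$ in $G$. Collecting these conditions over all ordered pairs $i\neq \ell$ yields that $\{(1,j_1),\dots,(k,j_k)\}$ is a clique with exactly one vertex per row. The main obstacle I anticipate is purely bookkeeping: being careful that the linear structure of $\delta_L$ really forces both the order of writes and the order of $\#$-reads, and that in the contributor automaton no \enquote{shortcut} exists that would let a contributor skip a read of a $\row$ or $\col$ symbol and thereby evade the adjacency check. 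Inspecting $\delta_C$ shows every state has a unique outgoing transition label pattern, so no such shortcuts exist, and the argument goes through.
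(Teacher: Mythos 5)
Your proof follows the same route as the paper's: for $(\Leftarrow)$ you build the computation with $t=k$ contributors, each storing its row's clique vertex, and for $(\Rightarrow)$ you exploit the linear structure of $\delta_L$ to extract the leader's row/column guesses and the $\#$-reads, then trace each $\#_i$ back to a contributor storing $(i,j_i)$, using condition (2) of the $\col$-transition to force the leader's guess to match the stored vertex and condition (1) to force pairwise edges. The bookkeeping concerns you flag (no shortcuts in $\delta_C$, linear order of leader writes and reads) are exactly the points the paper's proof also relies on, and they hold as you observe.
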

\begin{proof}
	First assume that $G$ contains the desired clique and let $(1,j_1), \dots, (k,j_k)$ be its vertices.
	For $t = k$ we construct a computation from $c^0$ to a configuration in $C^f$.
	We have $k$ copies of the contributor $P_C$ in the system $\asms^t$.
	We shall denote them by $P_1, \dots, P_k$.
	
	The computation starts with $P_i$ choosing the vertex $(i,j_i)$ to store, it performs the step $q^0_{C_i} \xrightarrow{?\initmem}_i q^{(c,0)}_{(i,j_i)}$.
	Hence, we get as a computation on $\asms^t$: 
	\begin{align*}
		c^0 \rightarrow^*_{\asms^t} (q^0_L, q^{(c,0)}_{(1,j_1)}, \dots, q^{(c,0)}_{(k,j_k)}, \initmem) = c_0.
	\end{align*}
	Then $P_L$ writes the symbol $\row(1)$ and each contributor reads it.
	We get 
	\begin{align*}
		c_0 \rightarrow^*_{\asms^t} (q^{(r,1)}_L, q^{(r,1)}_{(1,j_1)}, \dots, q^{(r,1)}_{(k,j_k)}, \row(1)) = c_{(r,1)}.
	\end{align*}
	After transmitting the first row to all contributors, $P_L$ then communicates the first column by writing $\col(j_1)$ to the memory.
	Again, each contributor reads it.
	Note that $P_1$ can read $\col(j_1)$ since it stores exactly the vertex $(1,j_1)$.
	A contributor $P_i$ with $i \neq k$ can also read $\col(j_1)$ since $P_i$ stores $(i,j_i)$, a vertex which shares an edge with $(1,j_1)$ due to the clique-assumption.
	Hence, we get the computation $c_{(r,1)} \rightarrow^*_{\asms^t} (q^{(c,1)}_L, q^{(c,1)}_{(1,j_1)}, \dots, q^{(c,1)}_{(k,j_k)}, \col(j_1)) = c_1$.
	
	Similarly, we can construct a computation leading to a configuration $c_2$.
	By iterating this process, we get $c^0 \rightarrow^*_{\asms^t} c_k = (q^{(c,k)}_L, q^{(c,k)}_{(1,j_1)}, \dots, q^{(c,k)}_{(k,j_k)}, \col(j_k))$.
	Then each contributor $P_i$ can write the symbol $\#_i$.
	This is done in ascending order:
	First, $P_1$ writes $\#_1$ and $P_L$ reads it.
	Then, it is $P_2$'s turn and it writes $\#_2$.
	Again, the leader reads the symbol.
	After $k$ rounds, we reach the configuration $(q^{(\#,k)}_L, q^f_{C_1}, \dots, q^f_{C_k}, \#_k)$ which lies in $C^f$.
	
	Now let a $t \in \Naturals$ together with a computation $\rho = c^0 \rightarrow^*_{\asms^t} c$ with $c$ from $C^f$ be given.
	Let $\rho_L$ be the subcomputation of $\rho$ carried out by $P_L$.
	Technically, the projection of $\rho$ to $P_L$.
	Then $\rho_L$ has the form $\rho_L = \rho^1_L . \rho^2_L$ with
	\begin{align*}
	\rho^1_L &= q^0_L \xrightarrow{!\row(1)}_L q^{(r,1)}_L \xrightarrow{!\col(j_1)}_L q^{(c,1)}_L \xrightarrow{!\row(2)}_L \dots \xrightarrow{!\col(j_k)}_L q^{(c,k)}_L \text{ and} \\
	\rho^2_L &= q^{(c,k)}_L \xrightarrow{?\#_1}_L q^{(\#,1)}_L \xrightarrow{?\#_2}_L \dots \xrightarrow{?\#_k}_L q^{(\#,k)}_L.
	\end{align*}
	We show that the vertices $(1,j_1), \dots, (k,j_k)$ form a clique in $G$.
	
	Since in $\rho^2_L$, the leader is able to read the symbols $\#_1$ up to $\#_k$, there must be at least $k$ contributors writing them.
	Due to the structure of $P_C$, it is not possible to write different $\#_i$ symbols.
	Hence, we get one contributor for one symbol and thus, $t \geq k$.
	
	Let $P_{C_i}$ be a contributor writing $\#_i$.
	Then $P_{C_i}$ stores the vertex $(i,j_i)$, it performs the initial move $q^0_{C_i} \xrightarrow{?\initmem}_{i} q^{(c,0)}_{(i,j_i)}$.
	Assume $P_{C_i}$ stores the vertex $(i',j')$.
	Since the thread writes $\#_i$ in the end, we get $i' = i$ due to the structure of $P_{C_i}$.
	During the computation $\rho$, the thread performs the step $q^{(r,i)}_{(i,j')} \xrightarrow{?\col(j_i)}_{i} q^{(c,i)}_{(i,j')}$ since $P_L$ writes the symbol $\col(j_i)$ to the memory and the computation on $P_{C_i}$ does not deadlock.
	Note that we use the following: 
	The leader writes $\row(i)$ before $\col(j_i)$.
	This ensures that $\col(j_i)$ is indeed the column of the $i$-th transmitted vertex and the above transition is correct.
	However, the contributor $P_{C_i}$ can only do the transition if $j' = j_i$.
	Thus, we get that $(i',j') = (i,j_i)$.
	
	Let $P_{(i,j_i)}$ denote a contributor that writes $\#_i$ during $\rho$.
	Since the contributor $P_{(i,j_i)}$ stores the vertex $(i,j_i)$, the leader $P_L$ has written $\row(i)$ and $\col(j_i)$ to the memory.
	Now let $P_{(i',j_{i'})}$ be another contributor with $i' \neq i$.
	Then also this thread needs to perform the step $q^{(r,i)}_{(i', j_{i'})} \xrightarrow{?\col(j_i)}_i q^{(c,i)}_{(i',j_{i'})}$ since the computation does not end on $P_{(i',j_{i'})}$ at this point.
	But by definition, the transition can only be carried out if there is an edge between $(i,j_i)$ and $(i',j_{i'})$.
	Hence, each two vertices of $(1,j_1), \dots, (k,j_k)$ share an edge.
	\qed
\end{proof}

\subsection*{Formal Construction and Proof of Theorem \ref{Theorem:LCR_Leader_Mem_Kernel_Lower}}

We define the polynomial equivalence relation $\polyrel$ in more detail.
Assume some encoding of Boolean formulas over some finite alphabet $\Gamma$.
Let $\validforms \subseteq \Gamma^*$ be the encodings of proper $\kSAT{3}$-instances.
We say that two encodings $\varphi, \psi \in \Gamma^*$ are equivalent under $\polyrel$ if either $\varphi, \psi \in \validforms$ and the formulas have the same number of clauses and variables or if $\varphi, \psi$ are both not in $\validforms$.
Then $\polyrel$ is a polynomial equivalence relation.

We define $\Domain$ to be the union of the sets \mbox{$\Setcon{(u,\ell)}{u \in \setcon{0,1}, \ell \in [1..\log(I)]}$}, $\Setcon{(x_i, v)}{i \in [1..n], v \in \setcon{0,1}}$, $\Setcon{\#_j}{ j \in [1..m]}$, and $\setcon{\initmem}$.
Thus, we have that $\sizeM = \bigO(\log(I) + n + m)$, which does not exceed the \mbox{bounds of a cross-composition.}

The leader is defined by the tuple $P_L = (\Ops{\Domain}, Q_L, q^{(b,0)}_L, \delta_L)$ with set of states $Q_L = \Setcon{q^{(b,\ell)}_L, q^{(x,i)}_L, q^{(\#,j)}_L}{\ell \in [0..\log(I)], i \in [1..n], j \in [1..m]}$.
Hence, we have $\sizeL = \bigO(\log(I) + n + m)$.
The transition relation is defined by the following rules:
For transmitting the bits in the first phase, we have for each $\ell \in [1..\log(I)]$ the transition $q^{(b, \ell-1)}_L \xrightarrow{!(u,\ell)}_L q^{(b,\ell)}_L$ with $u \in \setcon{0,1}$.
For choosing the evaluation of variables in the second phase, we have for $i \in [1..n]$ the rule \mbox{$q^{(x,i-1)}_L \xrightarrow{!(x_i, v)}_L q^{(x,i)}_L$} with $v \in \setcon{0,1}$.
We denote the state $q^{(b,\log(I))}_L$ by $q^{(x,0)}_L$ to connect the phases.
In the third phase, $P_L$ wants to read the symbols $\#_j$.
Thus, we get for any $j \in [1..m]$ the transition $q^{(\#,j-1)}_L \xrightarrow{? \#_j}_L q^{(\#,j)}_L$.
Here we denote by $q^{(\#,0)}_L$ the state $q^{(x,n)}_L$.
We set $F_L = \setcon{q^{(\#,m)}}$.

The contributor $P_C$ is defined by $P_C = (\Ops{\Domain}, Q_C, q^\varepsilon_C, \delta_C)$.
The set of states $Q_C$ is the union of 
\begin{align*}
	&\Setcon{q^w}{ w \in \setcon{0,1}^{\leq \log(I)}}, \\
	&\Setcon{q^{(ch,\ell)}}{\ell \in [1..I]}, ~\text{and} \\
	&\Setcon{q^\ell_{(x_i,v)}}{\ell \in [1..I], i \in [1..n], v \in \setcon{0,1}}.
\end{align*}
Intuitively, the states $q^w$ with \mbox{$w \in \setcon{0,1}^{\leq \log(I)}$} form the nodes of the tree of the first phase.
The remaining states are needed to store the chosen instance, a variable, and its evaluation.

The transition relation $\delta_C$ contains rules for the three phases.
In the first phase, $P_C$ reads the bits chosen by the leader.
According to the value of the bit, it branches to the next state:
$q^w \xrightarrow{?(u,\abs{w}+1)}_C q^{w.u}$ for $u \in \setcon{0,1}$ and \mbox{$w \in \setcon{0,1}^{\leq \log(I)-1}$}.
Then we get an $\varepsilon$-transition from those leaves of the tree that encode a proper index, lying in $[1..I]$.
We have the rule $q^w \xrightarrow{\varepsilon}_C q^{(ch,\ell)}$ if $w = \bin(\ell)$ and $\ell \in [1..I]$.
For the second phase, we need transitions to store a variable and its evaluation: $q^{(ch,\ell)} \xrightarrow{?(x_i,v)}_C q^\ell_{(x_i,v)}$ for $\ell \in [1..I], i \in [1..n]$, and $v \in \setcon{0,1}$.
In the third phase, the contributor loops.
We have $q^\ell_{(x_i,v)} \xrightarrow{!\#_j}_C q^\ell_{(x_i,v)}$ if $x_i$ evaluated to $v$ satisfies clause $C^\ell_j$.

The program $\asms$ is defined to be $\asms = (\Domain, \initmem, (P_L, P_C))$ and the $\LCR$-instance of interest is thus $(\asms, F_L)$.
The correctness of the construction is shown in the following lemma.
Hence, all requirements of a cross-composition are met.

\begin{lemma}
	There is a $t \in \Naturals$ so that $c^0\rightarrow^*_{\asms^t}c$ with $c\in C^f$ if and only if there is an $\ell \in [1..I]$ such that $\varphi_\ell$ is satisfiable.
\end{lemma}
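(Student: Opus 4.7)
The plan is to prove both directions by tracking the three-phase structure of the program $\asms$.

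For the ($\Leftarrow$) direction, assume $\varphi_\ell$ is satisfied by an assignment $v_1, \dots, v_n \in \setcon{0,1}$. I exhibit an explicit computation using $t = m$ copies of $P_C$. In phase one, the leader writes the tuples $(u_1, 1), \dots, (u_{\log(I)}, \log(I))$ with $\bin(\ell) = u_1 \dots u_{\log(I)}$; after each write all $m$ contributors read the current bit, advance one level down their binary tree, and eventually fire the $\varepsilon$-transition into $q^{(ch,\ell)}$. In phase two, for each clause $C^\ell_j$ I pick an index $i_j$ such that $x_{i_j}$ with value $v_{i_j}$ satisfies $C^\ell_j$; the leader writes $(x_1, v_1), \dots, (x_n, v_n)$ in order and the $j$-th contributor reads exactly $(x_{i_j}, v_{i_j})$, arriving in $q^\ell_{(x_{i_j}, v_{i_j})}$. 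In phase three, the contributors write $\#_1, \dots, \#_m$ one after another, each read by $P_L$, driving the leader into $q^{(\#,m)} \in F_L$.

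For the ($\Rightarrow$) direction, take a computation $c^0 \rightarrow^*_{\asms^t} c$ with $c \in C^f$. Since the leader ends in $q^{(\#,m)}$, its projected run traverses the three phases of $\delta_L$ in order, so in particular it writes some bit sequence $u_1 \dots u_{\log(I)}$ during phase one. I then establish three observations: (i) each bit position $i$ is written by $P_L$ exactly once, because $\delta_L$ admits no transition that decreases the phase index; (ii) for every $j \in [1..m]$ the leader read $\#_j$, so some contributor wrote it and therefore reached a state $q^\ell_{(x_i, v)}$; this requires the $\varepsilon$-transition from the leaf $q^{\bin(\ell)}$ to be available, forcing $u_1 \dots u_{\log(I)} = \bin(\ell)$ with $\ell \in [1..I]$ and making all $m$ such contributors agree on the same $\ell$; (iii) each variable $x_i$ is written at most once with a single value $v_i$, so all contributors that stored $x_i$ share this $v_i$. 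Combining the three observations yields a well-defined global assignment $v_1, \dots, v_n$ under which every clause $C^\ell_j$ is satisfied by some literal on $x_{i_j}$ with value $v_{i_j}$, hence $\varphi_\ell$ is satisfiable.

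The main obstacle is making observation (ii) precise: although the single memory cell is overwritten repeatedly and different contributors may consume the bit stream at different moments, I must argue that every contributor that eventually writes some $\#_j$ has observed exactly the same string $\bin(\ell)$. This relies on the structural fact that the leader writes each bit position in a single irreversible step and that the contributor's binary tree, together with the tagged tuples $(u, i)$, forces the contributor to consume the bits strictly in ascending order; thus no contributor can join midstream and read a different value of a lower-indexed bit, and all surviving contributors agree on the same $\ell$.
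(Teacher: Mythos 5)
Your proof is correct and follows essentially the same three-phase strategy as the paper's; the only cosmetic difference is in the ($\Leftarrow$) direction, where you instantiate $t = m$ contributors (one per clause, each producing its own $\#_j$) rather than the paper's $t = n$ (one per variable, relying on the $\#_j$-self-loops in $q^\ell_{(x_i,v)}$ to let a single contributor emit several markers) — both work because a contributor may idle in $q^{(ch,\ell)}$ and read only the tuple it needs. Your ($\Rightarrow$) direction spells out the write-once structure of the leader and the forced ascending consumption of position-tagged bits somewhat more explicitly, but the underlying argument — every contributor that eventually writes a $\#_j$ read the full bit string $u_1\dots u_{\log(I)}$ in order and therefore decoded the same index $\ell$ — is exactly the one the paper relies on.
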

\begin{proof}
	We first assume that there is an $\ell \in [1..I]$ such that $\varphi_\ell$ is satisfiable.
	Let $v_1, \dots, v_n$ be the evaluation of the variables $x_1, \dots, x_n$ that satisfies $\varphi_\ell$.
	Further, we let $\bin(\ell) = u_1 \dots u_{\log(I)}$ be the binary representation of $\ell$.
	Set $t = n$.
	Then the program $\asms^t$ has $n$ copies of the contributor.
	Denote them by $P_1, \dots, P_n$.
	Intuitively, $P_i$ is responsible for variable $x_i$.
	
	We construct a computation of $\asms^t$ from $c^0$ to a configuration in $C^f$.
	We proceed as in the aforementioned phases.
	In the first phase, $P_L$ starts to guess the first bit of $\bin(\ell)$.
	This is read by all the contributors.
	We get a computation
	\begin{align*}
	c^0 &\xrightarrow{!(u_1,1)}_L (q^{(b,1)}_L, q^\varepsilon, \dots, q^\varepsilon, (u_1,1)) \\
	&\xrightarrow{?(u_1,1)}_1 (q^{(b,1)}_L, q^{u_1}, q^\varepsilon, \dots, q^\varepsilon, (u_1,1)) \\
	\dots &\xrightarrow{?(u_1,1)}_n (q^{(b,1)}_L, q^{u_1}, \dots, q^{u_1}, (u_1,1)) = c^{(b,1)}.
	\end{align*}
	To extend the computation, we let $P_L$ guess the remaining bits, while the contributors read and store.
	Hence, we get
	\begin{align*}
	c^0 \rightarrow^*_{\asms^t} c^{(b,\log(I))} = (q^{(b, \log(I))}_L, q^{\bin(\ell)}, \dots, q^{\bin(\ell)}, (u_{\log(I)}, \log(I))).
	\end{align*}
	
	Before the second phase starts, the contributors perform an $\varepsilon$-transition to the state $q^{(ch,\ell)}$.
	This is possible since $\bin(\ell)$ encodes a proper index in $[1..I]$.
	We get $c^{(b, \log(I))} \rightarrow^*_{\asms^t} (q^{(b, \log(I))}_L, q^{(ch,\ell)}, \dots, q^{(ch,\ell)}, (u_{\log(I)}, \log(I))) = c^{(x,0)}$.
	
	In the second phase $P_L$ chooses the correct evaluation for the variables, it writes $(x_i,v_i)$ for each variable $x_i$.
	Contributor $P_i$ reads $(x_i,v_i)$ and stores it.
	Hence, we get the computation
	\begin{align*}
	c^{(x,0)} &\xrightarrow{!(x_1,v_1)}_{L} (q^{(x,1)}_L, q^{(ch,\ell)}, \dots, q^{(ch, \ell)}, (x_1, v_1)) \\
	&\xrightarrow{?(x_1,v_1)}_1 (q^{(x,1)}_L, q^\ell_{(x_1,v_1)}, q^{(ch,\ell)}, \dots, q^{(ch, \ell)}, (x_1, v_1)) \\
	&\xrightarrow{!(x_2,v_2)}_L (q^{(x,2)}_L, q^\ell_{(x_1,v_1)}, q^{(ch,\ell)}, \dots, q^{(ch, \ell)}, (x_2, v_2)) \\
	&\xrightarrow{?(x_2,v_2)}_2 (q^{(x,2)}_L, q^\ell_{(x_1,v_1)}, q^\ell_{(x_2,v_2)}, q^{(ch,\ell)}, \dots, q^{(ch, \ell)}, (x_2, v_2)) \\
	\dots &\xrightarrow{?(x_n,v_n)}_n (q^{(x,n)}_L, q^\ell_{(x_1,v_1)}, \dots, q^\ell_{(x_n,v_n)}, (x_n, v_n)) = c^{(x,n)}.
	\end{align*}
	
	In the last phase, the contributors write the symbols $\#_j$.
	Since $\varphi_\ell$ is satisfied by the evaluation $v_1, \dots, v_n$, there is a variable $i_1 \in [1..n]$ such that $x_{i_1}$ evaluated to $v_{i_1}$ satisfies clause $C^\ell_1$.
	Hence, due to the transition relation $\delta_C$, we can let $P_{i_1}$ write the symbol $\#_1$.
	After that, the leader reads it and moves to the next state.
	This amounts to the computation
	\begin{align*}
	c^{(x,n)} &\xrightarrow{!\#_1}_{i_1} (q^{(x,n)}_L, q^\ell_{(x_1,v_1)}, \dots, q^\ell_{(x_n,v_n)}, \#_1) \\
	&\xrightarrow{?\#_1}_L (q^{(\#,1)}_L, q^\ell_{(x_1,v_1)}, \dots, q^\ell_{(x_n,v_n)}, \#_1) = c^{(\#,1)}.
	\end{align*}
	Similarly, we can extend the computation to reach the configuration 
	\begin{align*}
		c^{(\#,m)} = (q^{(\#,m)}_L, q^\ell_{(x_1,v_1)}, \dots, q^\ell_{(x_n,v_n)}, \#_m)
	\end{align*}
	which lies in $C^f$.
	This proves the first direction.
	
	Now we assume the existence of a $t \in \Naturals$ such that there is a computation $\rho$ from $c^0$ to a configuration in $C^f$.
	Let $\rho_L$ be the subcomputation of $\rho$ carried out by the leader $P_L$.
	Then $\rho_L$ can be split into $\rho_L = \rho^1_L.\rho^2_L.\rho_L^3$ such that
	\begin{align*}
	\rho^1_L &= q^{(b,0)}_L \xrightarrow{!(u_1,1)}_L q^{(b,1)}_L \xrightarrow{!(u_2,2)}_L \dots \xrightarrow{!(u_{\log(I)}, \log(I))}_L q^{(b,\log(I))}_L, \\
	\rho^2_L &= q^{(b,\log(I))}_L \xrightarrow{!(x_1,v_1)}_L q^{(x,1)}_L \xrightarrow{!(x_2,v_2)}_L \dots \xrightarrow{!(x_n,v_n)}_L q^{(x,n)}_L, \\
	\rho^3_L &= q^{(x,n)}_L \xrightarrow{?\#_1}_L q^{(\#,1)}_L \xrightarrow{?\#_2}_L \dots \xrightarrow{?\#_m}_L q^{(\#,m)}_L.
	\end{align*}
	Let $\ell$ be the natural number such that $\bin(\ell) = u_1 \dots u_{\log(I)}$.
	We show that $\ell \in [1..I]$ and $\varphi_\ell$ is satisfied by evaluating the variables $x_i$ to $v_i$.
	
	In $\rho^3_L$, the leader can read the symbols $\#_1, \dots, \#_m$.
	This means that there is at least one contributor writing them.
	Let $P_C$ be a contributor writing such a symbol.
	Then, after $P_L$ has finished $\rho^1_L$, the contributor $P_C$ is still active and performs the step $q^{\bin(\ell)} \xrightarrow{\varepsilon}_C q^{(ch, \ell)}$.
	This is true since $P_C$ did not miss a bit transmitted by $P_L$ and $P_C$ has to reach a state where it can write the $\#$-symbols.
	Thus, we get that $\ell \in [1..I]$ and $P_C$ stores $\ell$ in its state space.
	
	We denote the number of contributors writing a $\#$-symbol in $\rho$ by $t' \geq 1$.
	Each of these contributors gets labeled by $C(j) = \setcon{\#_{j_1}, \dots, \#_{j_{k_j}}}$, the set of $\#$-symbols it writes during the computation $\rho$.
	Hence, we have the contributors $P_{C(1)}, \dots, P_{C(t')}$ and since each symbol in $\setcon{\#_1, \dots, \#_m}$ is written at least once, we have:
	\begin{align}\label{Equation:PRnopolykernelML}
	\setcon{\#_1, \dots, \#_m} = \bigcup_{j = 1}^{t'} C(j).
	\end{align}
	
	Now we show that each $P_{C(j)}$ with $C(j) = \setcon{\#_{j_1}, \dots, \#_{j_{k_j}}}$ stores a tuple $(x_i,v_i)$ such that $x_i$ evaluated to $v_i$ satisfies the clauses $C^\ell_{j_1}, \dots, C^\ell_{j_{k_j}}$.
	We already know that $P_{C(j)}$ is in state $q^{(ch, \ell)}$ after $P_L$ has executed $\rho^1_L$.
	During $P_L$ executing $\rho^2_L$, the contributor $P_{C(j)}$ has to read a tuple $(x_i,v_i)$ since it has to reach a state where it can write the $\#$-symbols.
	More precise, $P_{C(j)}$ has to perform a transition $q^{(ch,\ell)} \xrightarrow{?(x_i,v_i)}_{C(j)} q^{\ell}_{(x_i,v_i)}$ for some $i$.
	Then the contributor writes the symbols $\#_{j_1}, \dots, \#_{j_{k_j}}$ while looping in the current state.
	But by the definition of the transition relation for the contributors, this means that $x_i$ evaluated to $v_i$ satisfies the clauses $C^\ell_{j_1}, \dots, C^\ell_{j_{k_j}}$.
	
	By Equation (\ref{Equation:PRnopolykernelML}) we can deduce that every clause in $\varphi_\ell$ is satisfied by the chosen evaluation.
	Hence, $\varphi_\ell$ is satisfiable.
	\qed
\end{proof}
\section{Proofs for Section \ref{Section:LCR_Contributor}}

We give the missing constructions and proofs for Section \ref{Section:LCR_Contributor}.

\subsection*{Proof of Lemma \ref{Lemma:LCR_Contributor_Upper_Bound_Reach}}

Before we elaborate on the proof, we introduce a few notations.
Let $t \in \Naturals$ and $c$ a configuration of $\asms^t$.
To access the components of $c$ we use the following projections:
$\pi_L(c)$ returns the state of the leader in $c$, $\pi_\Domain(c)$ the value of the shared memory.
For $p \in Q_C$, we denote by $\countC(c,p)$ the number of contributors that are currently in state $p$.
Finally, we use $\pi_C(c)$ for the set of contributor states that appear in $c$.
Formally, $\pi_C(c) = \Setcon{p \in Q_C}{ \countC(c,p) > 0}$.

The proof of Lemma \ref{Lemma:LCR_Contributor_Upper_Bound_Reach} is a consequence of the following stronger lemma.
It states that for any reachable configuration in the program, there is a node in $\fingraph$ reachable by $\initnode$ such that:
(1) The state of the leader and the memory value are preserved and (2) the possible states of the contributors can only increase.

\begin{lemma}
There is a $t \in \Naturals$ so that $c^0 \rightarrow^*_{\asms^t} c$ if and only if there is a path $\initnode \rightarrow^*_E (q,a,S)$ in $\fingraph$, where $\pi_L(c) = q$, $\pi_\Domain(c) = a$, and $\pi_C(c) \subseteq S$.
\end{lemma}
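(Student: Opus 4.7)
The plan is a structural induction, in opposite directions, on the length of the witnessing object. For brevity write $c \sim (q,a,S)$ as shorthand for $\pi_L(c) = q$, $\pi_\Domain(c) = a$, and $\pi_C(c) \subseteq S$.

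For the $\Rightarrow$ direction, I would induct on the length of the computation $c^0 \rightarrow^*_{\asms^t} c$. The base case is immediate since $c^0 \sim \initnode$. For the inductive step, take the penultimate configuration $c'$ and apply the IH to obtain a path to some $(q',a',S')$ with $c' \sim (q',a',S')$. A leader transition $q' \xrightarrow{\op}_L q$ is mirrored by the edge $(q',a',S') \rightarrow_E (q,\pi_\Domain(c),S')$, and a contributor transition $p \xrightarrow{\op} p'$ (with $p \in \pi_C(c') \subseteq S'$) is mirrored by the saturation edge $(q',a',S') \rightarrow_E (q',\pi_\Domain(c),S' \cup \{p'\})$. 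Both extensions preserve $\sim$.

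For the $\Leftarrow$ direction, I would strengthen the hypothesis to maintain an unbounded reservoir of contributors in every visited state: for every path $\initnode \rightarrow^*_E (q,a,S)$ and every $N \in \Naturals$, there exist $t$ and $c$ with $c^0 \rightarrow^*_{\asms^t} c$, $c \sim (q,a,S)$, and $\countC(c,p) \geq N$ for every $p \in S$. The base case uses $t = N$ contributors all in $q^0_C$. In the step, fix $N$, apply the strengthened IH with parameter $N+1$ to the prefix path reaching $(q',a',S')$, and extend. A leader edge is realised by a single leader transition (reads are served since $\pi_\Domain(c') = a'$). A saturation edge coming from $p \xrightarrow{\op} p'$ with $p \in S'$ is realised via the Copycat Lemma of \cite{Esparza13}: the IH gives at least $N+1$ contributors in $p$, so we let $N$ of them perform the transition in parallel while one remains behind, yielding $\countC(c,p) \geq N$ and $\countC(c,p') \geq N$ simultaneously. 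The lemma itself then follows by instantiating the strengthened IH with $N = 1$.

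The main obstacle is the backward direction. A naive IH claiming only $\pi_C(c) \subseteq S$ does not survive saturation edges, because $\pi_C(c)$ can shrink in the program when a contributor moves away from a state, whereas $S$ is monotone along paths in $\fingraph$; without a reservoir, a later saturation edge whose source is no longer occupied cannot be simulated. The Copycat-based strengthening above fixes this by preserving sufficiently many copies in each visited state at every step; the total $t$ grows with the path length, which is harmless because $t$ is only existentially quantified in the statement.
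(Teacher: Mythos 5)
Your forward direction matches the paper's. For the backward direction you and the paper both strengthen the induction hypothesis and both rely on the Copycat Lemma, but differently: the paper strengthens $\pi_C(c) \subseteq S$ to exact equality $\pi_C(c) = S$ and, at each contributor (saturation) edge, applies Copycat with $k=1$ to the computation reaching $c'$ to duplicate the one contributor in the source state $p'$ that is about to move, so that after the move both $p'$ and the target $p$ remain occupied; the student carries instead a reservoir invariant of $\geq N$ contributors in \emph{every} state of $S$, quantified over all $N$. The reservoir idea is sound and would imply the paper's invariant, but as written your inductive step does not maintain it: if the IH with parameter $N+1$ gives $\geq N+1$ contributors in $p$ and you move $N$ of them to $p'$, only one remains in $p$, so you have established $\countC(c,p) \geq 1$, not $\geq N$ as claimed. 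To repair the reservoir invariant you would need to invoke the IH with parameter $2N$ (move $N$, keep $N$; this makes $t$ grow like $2^{\text{path length}}$, which is harmless since $t$ is existential) or, more economically and as the paper does, drop the reservoir and keep only $\pi_C(c) = S$: the Copycat Lemma already lets you retroactively add one contributor whenever a saturation edge needs a spare, so there is no need to budget copies ahead of time.
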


First assume that a computation $c^0 \rightarrow^*_{\asms^t} c$ for a $t \in \Naturals$ is given.
We proceed by induction on the length of the computation.
In the base case, the length is $0$.
This means that $c = c^0$ is the initial configuration.
But then $\pi_L(c) = q^0_L$, $\pi_\Domain(c) = a^0$, and $\pi_C(c) = \setcon{q^0_C}$.
This characterizes the initial node of $\fingraph$ and there is a path $\initnode \rightarrow^*_E \initnode$ of length $0$ which proves the base case.

Suppose the statement holds for all computations of length at most $\ell$.
Let $c^0 \rightarrow^*_{\asms^t} c$ be a computation of length $\ell + 1$.
Then, it can be split into $c^0 \rightarrow^*_{\asms^t} c' \rightarrow_{\asms^t} c$, where $c^0 \rightarrow^*_{\asms^t} c' $ is a computation of length $\ell$.
By induction, there is a path $\initnode \rightarrow^*_E (q',a',S')$ in $\fingraph$ such that $q' = \pi_L(c')$, $a' = \pi_\Domain(c')$, and $\pi_C(c') \subseteq S'$.
Now we distinguish two cases:

(1) If $c' \rightarrow_{\asms^t} c$ is induced by a transition of the leader, the leaders' state and the memory value get updated, but the contributor states do not.
We have that $\pi_C(c) = \pi_C(c') \subseteq S'$.
Now we set $q = \pi_L(c)$, $a = \pi_\Domain(c)$ and $S' = S$.
Then, on $\fingraph$ we have an edge $(q',a',S') \rightarrow_E (q,a,S)$.

(2) If $ c' \rightarrow_{\asms^t} c$ is induced by a transition of a contributor, we immediately get that $\pi_L(c) = \pi_L(c') = q'$.
Let the transition of the contributor be of the form $p' \xrightarrow{?a'/ \varepsilon} p$.
Then we have that $\pi_C(c) \subseteq \pi_C(c') \cup \setcon{p} \subseteq S' \cup \setcon{p}$ and $\pi_\Domain(c') = \pi_\Domain(c) = a'$.
Note that it can happen that $p'$ is not an element of $\pi_C(c)$ since there might be just one contributor in state $p'$ which switches to state $p$.
We set $q = q'$, $a = a'$, and $S = S' \cup \setcon{p}$.
Then we have an edge $(q',a',S') \rightarrow_E (q,a,S)$ induced by the transition.
Writes of the contributors are similar.
 
This shows the first direction of the lemma.
For the other direction, we apply induction to prove a slightly stronger statement:
For each path $\initnode \rightarrow^*_E (q,a,S)$, there is a $t \in \Naturals$ and a computation $c^0 \rightarrow^*_{\asms^t} c$ such that $\pi_L(c) = q$, $\pi_\Domain(c) = a$, and $\pi_C(c) = S$.
In the proof we rely on the \emph{Copycat Lemma}, presented in \cite{Esparza13}.
Roughly it states that for a computation where a state $p \in Q_C$ is reached by one of the contributors, there is a similar computation where $p$ is reached by an arbitrary number of contributors.
We restate the lemma in our setting.
 
 \begin{lemma}[Copycat Lemma \cite{Esparza13}]
 	Let $t \in \Naturals$ and $ c^0 \rightarrow^*_{\asms^t} c$ a computation. 
 	Moreover, let $p \in Q_C$ such that $\countC(c,p) > 0$.
 	Then for all $k \in \Naturals$, we have a computation of the form $ c^0 \rightarrow^*_{\asms^{t+k}} d$, where configuration $d$ satisfies the following:
 	$\pi_L(d) = \pi_L(c)$, $\pi_\Domain(d) = \pi_\Domain(c)$, $\countC(d,p) = \countC(c,p) + k$ and for all $p' \neq p$ we have  $\countC(d,p') = \countC(c,p')$.
\end{lemma}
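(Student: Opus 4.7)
The plan is to fix one concrete contributor instance in the original computation that ends up in state $p$, and then have $k$ additional copies of $P_C$ shadow all of its moves. Since $\countC(c,p)>0$, we may pick an index $i_0 \in [1..t]$ such that in the program counter of $c$, contributor $i_0$ occupies state $p$. Let $\tau_{i_0,1},\tau_{i_0,2},\dots,\tau_{i_0,m}$ be the subsequence of transitions in $c^0 \rightarrow^*_{\asms^t} c$ performed by $P_{i_0}$, taking $P_{i_0}$ through the state trajectory $q^0_C = p_0 \to p_1 \to \cdots \to p_m = p$. In $\asms^{t+k}$, the first $t$ contributor slots will behave exactly as in the original computation; the $k$ new copies, indexed $t+1,\dots,t+k$, will each independently replay $P_{i_0}$'s trajectory.

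The new computation is obtained by inserting, immediately after each transition $\tau_{i_0,j}$ in the original sequence, a block of $k$ consecutive transitions in which each fresh copy $P_{t+1},\dots,P_{t+k}$, in turn, performs the same operation as $\tau_{i_0,j}$ and thereby moves from $p_{j-1}$ to $p_j$. All other transitions (those of the leader and of the contributors $i \neq i_0$ with $i \leq t$) are kept in the same order as before. Since the leader and the other contributors are never touched by the inserted copycat block, their local behaviour is unaffected; the only nontrivial obligation is to show that every original transition remains enabled, and that each inserted copycat transition is itself enabled at the moment we insert it.

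The key invariant, proved by induction along the constructed sequence, is that immediately after the copycat block following $\tau_{i_0,j}$, the shared memory holds exactly the same value it held after $\tau_{i_0,j}$ in the original computation. This is because the operation carried out by $\tau_{i_0,j}$ is of one of three forms: an $\varepsilon$-transition, which affects nothing; a read $?a$, which requires the memory to contain $a$ and leaves it unchanged, so subsequent copycat reads of $a$ are enabled and equally harmless; or a write $!a$, which sets the memory to $a$, after which each copycat $!a$ overwrites it with the same $a$. In all three cases the memory value is preserved across the inserted block, so the next original transition, which was enabled after $\tau_{i_0,j}$ in the old computation, is still enabled here. The main subtlety is keeping the insertions local to each transition of $P_{i_0}$; doing it this way avoids any reordering of the leader's or other contributors' moves and hence trivially preserves their enabling conditions. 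At the end, the leader's state and the memory in $d$ agree with those in $c$, every original contributor slot $i \leq t$ ends in its original state (so $\countC(d,p') = \countC(c,p')$ for $p' \neq p$ and $\countC(d,p)$ gains exactly the $k$ copies $P_{t+1},\dots,P_{t+k}$, each now in $p_m = p$), which yields the claimed equality $\countC(d,p) = \countC(c,p) + k$. \qed
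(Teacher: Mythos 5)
Your proof is correct; note that the paper itself does not prove this lemma but simply restates it and cites~\cite{Esparza13}, so there is no in-paper proof to compare against. Your argument is the standard copycat argument from that reference: pick a witness contributor $i_0$ that ends in $p$, let $k$ fresh copies replay its transitions immediately after each of $i_0$'s original moves, and observe that the inserted block is both enabled and memory-preserving in all three cases ($\varepsilon$, $?a$, $!a$), so every subsequent original transition remains enabled and the final program counter and memory agree with $c$ except for the $k$ new copies parked in $p$. This is exactly the right level of detail, and the invariant you isolate (memory value unchanged across each copycat block) is the crux of the argument.
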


We turn back to the induction on the length of the given path.
In the base case, the length is $0$.
Then, we have that $(q,a,S) = \initnode$.
This means $q = q^0_L$, $a = a^0$, and $S = \setcon{q^0_C}$.
Considering the initial configuration $c^0$ for an arbitrary $t \in \Naturals$, we get the computation $c^0 \rightarrow^*_{\asms^t} c^0$ of length $0$, with $\pi_L(c^0) = q$, $\pi_\Domain(c^0) = a$, and $\pi_C(c^0) = S$.

Assume the statement holds true for all paths of length at most $\ell$.
Let $\initnode \rightarrow^*_E (q,a,S)$ be a path of length $\ell+1$.
We split the path into a subpath $\initnode \rightarrow^*_E (q',a',S')$ of length $\ell$ and an edge $(q',a',S') \rightarrow_E (q,a,S)$.
Invoking the induction hypothesis, we get a $t \in \Naturals$ and a computation $c^0 \rightarrow^*_{\asms^t} c'$ such that $\pi_L(c') = q'$, $\pi_\Domain(c') = a'$, and $\pi_C(c') = S'$.
We distinguish two cases:

(1) The edge $(q',a',S') \rightarrow_E (q,a,S)$ was induced by a transition of the leader.
Since $\pi_L(c') = q'$ and $\pi_\Domain(c') = a'$, the same transition also induces a step $c' \rightarrow_{\asms^t} c$ with $\pi_L(c) = q$, $\pi_\Domain(c) = a$, and $\pi_C(c) = S = S'$.

(2) The edge $(q',a',S') \rightarrow_E (q,a,S)$ was induced by a transition of a contributor.
Suppose, this transition is of the form $\tau = p' \xrightarrow{?a' / \varepsilon} p$.
The case of a write is similar.
Then we get $S = S' \cup \setcon{p}$ and $p' \in S'$.
Since $\pi_C(c') = S'$, we get that $\countC(c',p') > 0$.
By an application of the Copycat Lemma with $k = 1$, we obtain a computation of the form $c^0 \rightarrow^*_{\asms^{t+1}} d$ such that $\countC(d,p') > 1$ and for all $r \neq p'$ we have $\countC(d,r) = \countC(c',r)$.
Furthermore, we get that $\pi_L(d) = \pi_L(c')$ and $\pi_\Domain(d) = \pi_\Domain(c')$.
Hence, transition $\tau$ induces a move $d \rightarrow_{\asms^{t+1}} c$, where $c$ is a configuration with $\pi_L(c) = \pi_L(d) = q' = q$, $\pi_\Domain(c) = \pi_\Domain(d) = a' = a$ and $\pi_C(c) = \pi_C(d) \cup \setcon{p} = S' \cup \setcon{p} = S$.

\subsection*{Formal Construction and Proof of Proposition \ref{Proposition:LCR_Contributor_Lower_Bound}}

The memory domain is defined by $\Domain = U \cup \Setcon{u^\#}{u \in U} \cup \setcon{a^0}$.
The leader thread is the tuple $P_L = (\Ops{\Domain}, Q_L, q^1, \delta_L)$, where the set of states $Q_L$ is the union of $\Setcon{q^i}{i \in [1..r+1]}$, $\Setcon{q^{(i,j)}_S}{S \in \mathcal{F}, j \in [0..\abs{S}-1] \text{ and } i \in [1..r] }$, and $\Setcon{q_\#^i}{i \in [1..n]}$.
Recall that $n = \abs{U}$.
The states $q^i$ are needed  to choose a set $S \in \mathcal{F}$.
Then, the $q^{(i,j)}_S$ are used to iterate over the elements in $S$.
For the final phase in $P_L$, the states $q_\#^i$ are needed to read all elements $u^\#$ for $u \in U$.

The transition relation $\delta_L$ contains the following rules:
For choosing a set, we have transitions of the form $q^i \xrightarrow{\varepsilon} q^{(i,0)}_S$ for each $S \in \mathcal{F}$ and $i \in [1..r]$.
Iterating through a set $S = \setcon{v_1, \dots, v_{\abs{S}}}$ is done via the transitions $q^{(i,j)}_S \xrightarrow{!v_{j+1}} q^{(i,j+1)}_S$ for $j \in [0, \abs{S}-2]$.
For the last element, we have a transition $q^{(i,\abs{S}-1)}_S \xrightarrow{!v_{\abs{S}}} q^{i+1}$ that enters the new phase.
Fix an order on $U = \setcon{u_1, \dots, u_n}$.
The final check is realized by the transitions $q^{r+1} \xrightarrow{?u^\#_1} q^1_\#$ and $q^i_\# \xrightarrow{?u^\#_{i+1}} q^{i+1}_\#$ for $i \in [1..n-1]$.
The leader only reaches a final state after the last check: $F_L = \setcon{q^n_\#}$.

A contributor is defined by the tuple $P_C = (\Ops{\Domain}, Q_C, p^0,  \delta_C)$ where the set of states is given by $Q_C = \Setcon{p_u}{u \in U} \cup \setcon{p^0}$.
The transition relation contains rules to store elements of $U$ in the state space:
$p^0 \xrightarrow{?u} p_u$, for each $u \in U$.
Once an element is stored, the contributor can write it to the memory:
$p_u \xrightarrow{!u^\#} p_u$.
Correctness is proven in the following lemma.

\begin{lemma}
	There is a $t \in \Naturals$ so that $c^0 \rightarrow^*_{\asms^t} c$ with $c \in C^f$ if and only if there are sets $S_1, \dots S_r \in \mathcal{F}$ such that $U = \bigcup_{i \in [1,r]} S_i$.
\end{lemma}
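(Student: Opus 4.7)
The plan is to argue both directions by a direct simulation, using the structure of $P_L$ and $P_C$ and the fact that contributors in this construction never write a symbol from $U$ (they only write symbols of the form $u^\#$).

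For the $(\Leftarrow)$ direction, suppose $S_1,\dots,S_r \in \mathcal{F}$ cover $U$. I would take $t = n = \abs{U}$ contributors and assign to each element $u \in U$ a distinct contributor $P_u$. The computation proceeds in three phases mirroring the three kinds of states in $P_L$. First, for each $i \in [1..r]$, the leader takes the $\varepsilon$-transition $q^i \xrightarrow{\varepsilon} q^{(i,0)}_{S_i}$ and then writes the elements of $S_i$ in order using the $q^{(i,j)}_{S_i} \xrightarrow{!v_{j+1}} q^{(i,j+1)}_{S_i}$ transitions, arriving in $q^{i+1}$. During this writing, whenever the leader writes some $u \in U$ and $P_u$ is still in $p^0$, let $P_u$ fire $p^0 \xrightarrow{?u} p_u$ (such a write occurs by the covering assumption). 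After the $r$-th phase each $P_u$ is in state $p_u$. In the verification phase, for each $i \in [1..n]$ in order, let $P_{u_i}$ fire $p_{u_i} \xrightarrow{!u_i^\#} p_{u_i}$ and immediately let the leader take $?u_i^\#$. The leader then reaches $q^n_\# \in F_L$, as required.

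For the $(\Rightarrow)$ direction, suppose there is a $t$ and a computation $\rho\colon c^0 \rightarrow^*_{\asms^t} c$ with $c \in C^f$. From the shape of $\delta_L$, the projection of $\rho$ onto $P_L$ must traverse the $r$ guess-phases in order, and in each phase $i$ the leader takes exactly one $\varepsilon$-transition $q^i \xrightarrow{\varepsilon} q^{(i,0)}_{S_i}$ for some $S_i \in \mathcal{F}$; these give the candidate sets. It remains to show $U \subseteq \bigcup_{i=1}^r S_i$. Fix $u \in U$. Since $c \in C^f$, the leader eventually fires the transition $\xrightarrow{?u^\#}$ at some point in $\rho$. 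The preceding configuration must hold $u^\#$ in memory, and in $\asms$ only a contributor transition $p_u \xrightarrow{!u^\#} p_u$ can place $u^\#$ there, since no transition of $P_L$ writes a $\#$-symbol. Hence some contributor instance has passed through $p^0 \xrightarrow{?u} p_u$ earlier in $\rho$, which in turn requires that $u$ was present in memory at that earlier moment. Again by inspection, the only transitions that ever write an unstarred symbol $u \in U$ are the leader's writes $q^{(i,j)}_{S_i} \xrightarrow{!v_{j+1}} q^{(i,j+1)}_{S_i}$ inside some guess-phase $i$, so $u \in S_i$.

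The main obstacle is the simple but essential observation in the backward direction that contributors cannot fabricate elements of $U$ on the memory; once this is in hand, tracing each $u^\#$-read of the leader back to a leader-write of $u$ inside some chosen $S_i$ is immediate. Combining both directions yields the equivalence. \qed
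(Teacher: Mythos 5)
Your proof is correct and follows essentially the same approach as the paper: a direct simulation for the forward direction, and in the backward direction a causal trace from each leader read of $u^\#$ back to a contributor in $p_u$ back to a leader write of $u$ inside some chosen phase. The paper phrases the backward direction slightly more coarsely ("contributors can only store symbols covered by the chosen $S_i$ and can only write what they have stored"), but your explicit observation that contributors never write unstarred symbols is exactly the fact implicitly underlying the paper's statement.
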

\begin{proof}
	Let $S_1, \dots, S_r \in \mathcal{F}$ be a cover of $U$.
	We can construct a computation with $t = n$ contributors.
	The leader first guesses the set $S_1$.
	It writes all elements $u \in S_1$ to the memory and there is one contributor storing each element in its states by reading the corresponding $u$.
	Then, the leader decides for $S_2$ and writes the elements in the set to the memory.
	Now, only the new elements got stored by a contributor.
	Elements that were seen already are ignored.
	We proceed for $r$ phases.
	Then, the contributors store exactly those elements that got covered by $S_1, \dots, S_r$.
	Since these cover $U$, the contributors can write all symbols $u^\#$ to the memory in any order.
	The leader $P_L$ can thus read the required string and reach its final state.
	
	Now assume there is a $t \in \Naturals$ together with a computation $\rho$ on $\asms^t$ from $c^0$ to a configuration $c \in C^f$.
	Consider $\rho_L$, the projection of $\rho$ to the leader $P_L$.
	Then, the computation $\rho_L$ is of the form $\rho_L = \rho^1_L \dots \rho^r_L . \rho^f_L$ with:
	\begin{align*}
		\rho^i_L = q^i \xrightarrow{\varepsilon} q^{(i,0)}_{S_i}
		\xrightarrow{!u^{S_i}_1} q^{(i,1)}_{S_i}
		\xrightarrow{!u^{S_i}_2} \dots
		\xrightarrow{!u^{S_i}_{n_i-1}} q^{(i,n_i-1)}_{S_i}
		\xrightarrow{!u^{S_i}_{n_i}} q^{i+1}, 
	\end{align*}
	where $S_i = \setcon{u^{S_i}_1, \dots, u^{S_i}_{n_i}}$ is a set in $\mathcal{F}$, and
	\begin{align*}
		\rho^f_L = q^{r+1} \xrightarrow{?u^\#_1} q^1_\#
		\xrightarrow{?u^\#_2} \dots
		\xrightarrow{?u^\#_n} q^n_\#.
	\end{align*}
	The candidate for the cover of $U$ is $S_1, \dots, S_r \in \mathcal{F}$.
	These are the sets selected by $P_L$ during its $r$ initial phases.
	
	A contributor can only read and thus move in its state space during the leader is in a phase $\rho^i_L$.
	This means that contributors can only store symbols that got covered by the chosen sets $S_i$.
	Moreover, they can only write what they have stored.
	Since $\rho^f_L$ can be carried out by the leader, the contributors can write all elements $u \in U$ to the memory.
	Phrased differently, all elements $u \in U$ were stored by contributors and hence covered by $S_1, \dots, S_r$.
	\qed
\end{proof}

\subsection*{Formal Construction and Proof of Proposition \ref{Proposition:LCR_Contributor_Kernel_Lower}}

The construction of Proposition \ref{Proposition:LCR_Contributor_Kernel_Lower} is similar to the construction in the following statement.
It presents a lower bound for $\LCR$ based on $\ETH$ and shows that the Algorithm of Section \ref{Section:LCR_Contributor_Upper} has an optimal exponent.

\begin{proposition}\label{Proposition:LCR_Contributor_Lower_ETH}
	Unless $\ETH$ fails, $\LCR$ cannot be solved in time $2^{o(\sizeC)}$.
\end{proposition}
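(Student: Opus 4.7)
The plan is to give a polynomial-time many-one reduction from $\kSAT{3}$ to $\LCR$ such that the resulting contributor has $\sizeC = \bigO(n)$ states, where $n$ is the number of variables of the input formula. Combined with $\ETH$, a hypothetical $2^{o(\sizeC)}$-time algorithm for $\LCR$ would then yield a $2^{o(n)}$-time algorithm for $\kSAT{3}$.

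The construction is essentially the single-instance version of the cross-composition from Proposition~\ref{Proposition:LCR_Contributor_Kernel_Lower}, stripped of the $\kSAT{3}$-instance-selection phase. Given $\varphi = C_1 \wedge \dots \wedge C_m$ over variables $x_1, \dots, x_n$, I put $\Domain = \Setcon{(x_i,v)}{i \in [1..n], v \in \setcon{0,1}} \cup \Setcon{\#_j}{j \in [1..m]} \cup \setcon{\initmem}$. The leader $P_L$ first writes a sequence of $n$ tuples $(x_1,v_1) \dots (x_n,v_n)$ guessing an evaluation, then reads the string $\#_1 \dots \#_m$ and enters its unique unsafe state; so $P_L$ has $\bigO(n+m)$ states. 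The contributor $P_C$ has one initial state $q^0_C$ plus one state $q_{(x_i,v)}$ for each variable-value pair; from $q^0_C$ it reads some tuple $(x_i,v)$ and moves to $q_{(x_i,v)}$, which carries a self-loop $q_{(x_i,v)} \xrightarrow{!\#_j} q_{(x_i,v)}$ for every clause index $j$ such that the literal corresponding to $(x_i,v)$ satisfies $C_j$. Crucially, these self-loops do not add states, so $\sizeC = 2n+1 = \bigO(n)$.

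Correctness is an immediate adaptation of the argument proving the cross-composition for $\LCR(\sizeC)$: a satisfying assignment yields a computation where one contributor instance is spawned per variable, each one stores the guessed value, and together they supply the symbols $\#_1, \dots, \#_m$ needed by $P_L$ to reach its final state; conversely, any successful computation forces the leader to write a well-formed evaluation and yields, for every clause $C_j$, some contributor whose stored literal satisfies $C_j$. The reduction is clearly polynomial-time.

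Suppose now there is an algorithm solving $\LCR$ in time $2^{o(\sizeC)} \cdot \abs{\asms}^{\bigO(1)}$. Applying it to the constructed instance gives a decision procedure for $\varphi$ running in time $2^{o(n)} \cdot (n+m)^{\bigO(1)}$, which contradicts $\ETH$. The only thing to be careful about is ensuring that the dependence on clauses and variables stays out of $\sizeC$; this is why the per-clause information is encoded into transition labels (via the $\#_j$-self-loops) rather than into additional contributor states, and this is the only mildly non-trivial point of the reduction. \qed
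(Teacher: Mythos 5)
Your construction is exactly the one the paper uses for this proposition: the same domain $\Domain$, the same two-phase leader that writes an evaluation and then reads $\#_1\dots\#_m$, the same contributor with $2n+1$ states storing a single variable-value pair and self-looping on the $\#_j$-writes it can justify, giving $\sizeC = \bigO(n)$. (The paper presents the cross-composition for Proposition~\ref{Proposition:LCR_Contributor_Kernel_Lower} as an extension of this $\ETH$-reduction rather than the reverse, but that is only a difference in exposition, not in substance.)
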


For the reduction, let $\varphi$ be a given $\kSAT{3}$-instance.
We assume $\varphi$ to have the variables $x_1, \dots, x_n$ and clauses $C_1, \dots, C_m$.
The construction of an $\LCR$-instance relies on the following idea which is similar to Proposition \ref{Proposition:LCR_Contributor_Kernel_Lower}.
The leader $P_L$ will guess an evaluation for each variable, starting with $x_1$.
To this end, it will write a tuple of the form $(x_1,v_1)$, with $v_1 \in \setcon{0,1}$, to the memory.
A contributor will read the tuple and stores it in its state space.
This is repeated for each variable.
After the guessing-phase, the contributors can write the symbols $\#_j$, depending on whether the currently stored variable with its evaluation satisfies clause $C_j$.
As soon as the leader has read the complete string $\#_1 \dots \#_m$, it moves to its final state, showing that the guessed evaluation satisfied all the clauses.

For the formal construction, let 
\begin{align*}
	\Domain = \Setcon{(x_i,v)}{i \in [1..n], v \in \setcon{0,1}} \cup \Setcon{\#_j}{j \in [1..m]} \cup \setcon{\initmem}.
\end{align*}
We define the leader to be the tuple $P_L = (\Ops{\Domain}, Q_L, q^{(x,0)}_L, \delta_L)$, where the states are given by \mbox{$Q_L = \Setcon{q^{(x,i)}_L}{i \in [0..n]} \cup \Setcon{q^{(\#,j)}_L}{j \in [1..m]}$}.
The transition relation $\delta_L$ is defined as follows.
We have rules for guessing the evaluation: \mbox{$q^{(x,i-1)}_L \xrightarrow{!(x_i,v)}_L q^{(x,i)}_L$} for each $i \in [1..n]$ and $v \in \setcon{0,1}$.
And we have rules for verifying that the guessed evaluation is correct: $q^{(\#,j-1)}_L \xrightarrow{?\#_j}_L q^{(\#,j)}_L$ for $j \in [1..m]$.
Note that we identify the state $q^{(x,n)}_L$ by $q^{(\#,0)}_L$.
We further define the set $F_L = \setcon{q^{(\#,m)}_L}$.

The contributor $P_C$ is defined by $P_C = (\Ops{\Domain}, Q_C, q^0, \delta_C)$ with set of states $Q_C = \Setcon{q_{(x_i,v)}}{i \in [1..n], v \in \setcon{0,1}} \cup \setcon{q^0}$.
Then we have $\sizeC = 2n+1 = \bigO(n)$.
The transition relation contains the following rules.
For storing a read evaluation we have: $q^0 \xrightarrow{?(x_i,v)}_C q_{(x_i,v)}$, for $i \in [1..n], v \in \setcon{0,1}$.
And for satisfying clauses, we get a rule $q_{(x_i,v)} \xrightarrow{!\#_j}_C q_{(x_i,v)}$ if variable $x_i$ evaluated to $v$ satisfies clause $C_j$.

The program $\asms$ is defined as the tuple $\asms = (\Domain, \initmem, (P_L, P_C))$ and the $\LCR$-instance is $(\asms, F_L)$.
The correctness of the construction is proven in the following lemma.

\begin{lemma}\label{Lemma:LCR_Contributor_Lower_ETH_Correctness}
	There is a $t \in \Naturals$ so that $c^0\rightarrow^*_{\asms^t}c$ with $c\in C^f$ if and only if $\varphi$ is satisfiable.
\end{lemma}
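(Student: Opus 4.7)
The plan is to mirror the two-phase structure of the leader $P_L$: a guessing phase where it writes tuples $(x_i, v_i)$, followed by a verification phase where it reads $\#_1, \dots, \#_m$ in order. Each direction of the equivalence corresponds to translating between an assignment for $\varphi$ and such a phased execution.

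For the ``satisfiable $\Rightarrow$ reachable'' direction, I would start from an assignment $v_1, \dots, v_n$ satisfying $\varphi$, take $t = n$, and label the contributor copies $P_1, \dots, P_n$ with $P_i$ dedicated to variable $x_i$. In the guessing stage, iterate $i = 1, \dots, n$: the leader fires $q^{(x,i-1)}_L \xrightarrow{!(x_i,v_i)}_L q^{(x,i)}_L$, and immediately after, $P_i$ reads the tuple via $q^0 \xrightarrow{?(x_i,v_i)}_C q_{(x_i,v_i)}$, leaving the remaining contributors idle. In the verification stage, for each clause $C_j$ pick an index $i_j$ such that $x_{i_j}$ evaluated to $v_{i_j}$ satisfies $C_j$; the contributor $P_{i_j}$ can then fire its self-loop $q_{(x_{i_j},v_{i_j})} \xrightarrow{!\#_j}_C q_{(x_{i_j},v_{i_j})}$, and the leader reads $\#_j$ via $q^{(\#,j-1)}_L \xrightarrow{?\#_j}_L q^{(\#,j)}_L$. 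After $m$ such rounds the leader reaches $q^{(\#,m)}_L \in F_L$.

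For the ``reachable $\Rightarrow$ satisfiable'' direction, I would take a computation $c^0 \rightarrow^*_{\asms^t} c$ with $c \in C^f$ and project it onto the leader to get a sequence of transitions $\rho_L$. Because $P_L$ is essentially a straight line with a two-way branch at each state $q^{(x,i-1)}_L$, $\rho_L$ must decompose as $\rho^{g}_L \cdot \rho^{v}_L$, where $\rho^{g}_L$ writes tuples $(x_1, v_1), \dots, (x_n, v_n)$ for one specific choice of $v_i \in \setcon{0,1}$ per variable, and $\rho^{v}_L$ reads $\#_1 \dots \#_m$ in order. I would then argue that this extracted assignment satisfies $\varphi$: fix any clause $C_j$; since the leader reads $\#_j$, some contributor executed $!\#_j$; by the shape of $P_C$ that contributor sits in a state $q_{(x_{i_j}, v'_{i_j})}$ with $x_{i_j}$ at $v'_{i_j}$ satisfying $C_j$; and to reach that state the contributor must have read $(x_{i_j}, v'_{i_j})$ from the shared memory, which only $P_L$ can have placed there, forcing $v'_{i_j} = v_{i_j}$.

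The main obstacle is pinning down the uniqueness of the leader's value for each variable: one must verify that the leader writes each $(x_i, \cdot)$ exactly once in any accepting computation, so that the assignment extracted in the reverse direction is well-defined. Inspecting $\delta_L$ this is immediate, but it is the step on which the whole reduction hinges, and it is what makes the bookkeeping of the contributors align with a single Boolean valuation. A secondary point is handling the case where several contributors would be needed to serve reads or write multiple $\#_j$; the self-loop on $q_{(x_i,v)}$ lets a single contributor emit $\#_j$ repeatedly, and the Copycat Lemma of Esparza et al.\ provides extra copies mirroring any $P_i$ state should they be required, so no further care is needed.
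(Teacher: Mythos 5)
Your proposal is correct and matches the paper's proof almost line for line: the same choice of $t = n$ with one contributor $P_i$ dedicated to $x_i$ in the forward direction, and the same decomposition of the leader's projection into a guessing phase followed by a $\#$-reading phase in the backward direction, with the assignment read off from the tuples the leader must have written. The aside about the Copycat Lemma is unnecessary here (the self-loops on $q_{(x_i,v)}$ already let a single contributor serve every $\#_j$ it is responsible for, and the paper does not invoke Copycat in this argument), but it does not affect correctness.
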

\begin{proof}
	Let $\varphi$ be satisfiable, by the evaluation $v_1, \dots, v_n$.
	We show how to construct the desired computation.
	First set $t = n$, so we have $n$ copies of $P_C$.
	Let these denoted by $P_1, \dots, P_n$.
	
	The leader $P_L$ first guesses the correct evaluation of the variables.
	Each $P_i$ will store the evaluation for variable $x_i$.
	\begin{align*}
		c^0 &\xrightarrow{!(x_1,v_1)}_L (q^{(x,1)}_L, q^0, \dots, q^0, (x_1,v_1)) \\
		&\xrightarrow{?(x_1,v_1)}_1 (q^{(x,1)}_L, q_{(x_1,v_1)}, q^0, \dots, q^0, (x_1,v_1)) \\
		\dots &\xrightarrow{?(x_n,v_n)}_n (q^{(x,n)}_L, q_{(x_1,v_1)}, \dots, q_{(x_n,v_n)}, (x_n,v_n)) = c^{(x,n)}.
	\end{align*}
	Since $v_1, \dots, v_n$ is a satisfying assignment, there is an index $i_1 \in [1..n]$ such that $x_{i_1}$ evaluated to $v_{i_1}$ satisfies clause $C_1$.
	Hence, the corresponding contributor $P_{i_1}$ can write the symbol $\#_1$.
	This is then read by the leader.
	The process gets repeated for $\#_2, \dots, \#_m$.
	Hence, we get
	\begin{align*}
		c^{(x,n)} &\xrightarrow{!\#_1}_{i_1} (q^{(x,n)}_L, q_{(x_1,v_1)}, \dots, q_{(x_n,v_n)}, \#_1) \\
		&\xrightarrow{?\#_1}_L (q^{(\#,1)}_L, q_{(x_1,v_1)}, \dots, q_{(x_n,v_n)}, \#_1) \\
		\dots &\xrightarrow{?\#_m}_L (q^{(\#,m)}_L, q_{(x_1,v_1)}, \dots, q_{(x_n,v_n)}, \#_1) = c^{(\#,m)},
	\end{align*}
	with $c^{(\#,m)} \in C^f$.
	
	For the other direction, let a $t \in \Naturals$ and a computation $\rho$ from $c^0$ to a configuration in $C^f$ be given.
	Let $\rho_L$ denote the subcomputation of $\rho$ carried out by the leader $P_L$.
	Then $\rho_L$ has the form $\rho_L = \rho^1_L.\rho^2_L$ with
	\begin{align*}
		\rho^1_L &= q^{(x,0)}_L \xrightarrow{!(x_1,v_1)}_L q^{(x,1)}_L \xrightarrow{!(x_2,v_2)}_L \dots \xrightarrow{!(x_n,v_n)}_L q^{(x,n)}_L, \\
		\rho^2_L &= q^{(x,n)}_L \xrightarrow{?\#_1}_L q^{(\#,1)}_L \xrightarrow{?\#_2}_L \dots \xrightarrow{?\#_m}_L q^{(\#,m)}_L.
	\end{align*}
	We show that $v_1, \dots, v_n$ is a satisfying assignment for $\varphi$.
	
	Since $P_L$ can read the symbol $\#_1$ during $\rho^2_L$, there is a contributor $P_{\ell}$ writing the symbol.
	But this can only happen if $P_{\ell}$ has stored a tuple $(x_i,v_i)$, written by $P_L$ during $\rho^1_L$, and if $x_i$ evaluated to $v_i$ satisfies clause $C_1$.
	Since all symbols $\#_1, \dots \#_m$ are read by $P_L$, we get that each clause in $\varphi$ is satisfiable by the evaluation chosen by $P_L$ during $\rho^1_L$.
	\qed	
\end{proof}

To prove Proposition \ref{Proposition:LCR_Contributor_Kernel_Lower}, we change the above construction slightly.
Let $\varphi_1, \dots, \varphi_I$ be the given $\kSAT{3}$-instances, each pair equivalent under $\polyrel$, where $\polyrel$ is the polynomial equivalence relation from Theorem \ref{Theorem:LCR_Leader_Mem_Kernel_Lower}.
Then each formula has the same number of clauses $m$ and uses the set of variables $\setcon{x_1, \dots, x_n}$.
We assume $\varphi_\ell = C^\ell_1 \wedge \dots \wedge C^\ell_m$.

First, we let the leader chose an evaluation of the variables $x_1, \dots, x_n$ as above.
The contributors are used to store it.
Then, instead of writing just $\#_j$, the contributors can write the symbols $\#^\ell_j$ to mention that the currently stored variable with its evaluation satisfies clause $C^\ell_j$.
The leader can now branch into one of the $I$ instances.
It waits to read a string $\#^\ell_1 \dots \#^\ell_m$ for a certain $\ell \in [1..I]$.
If it can succeed, it moves to its final state.

To realize the construction, we need to slightly change the structure of the leader, extend the data domain and add more transitions to the contributors.
The parameter $\sizeC$ will not change in this construction, it is still $\bigO(n)$.
Hence, the size-restrictions of a cross-composition are met.
The correctness of the construction is similar to Lemma \ref{Lemma:LCR_Contributor_Lower_ETH_Correctness}, the only difference is the fact that $P_L$ also chooses the instance $\varphi_\ell$ that should be satisfied.
\section{Proofs for Section \ref{Section:LCR_Intractability}}
\label{Appendix_LCR_Intractability}

We give the missing constructions and proofs for Section \ref{Section:LCR_Intractability}.

\subsection*{Formal Construction and Proof of Proposition \ref{Proposition:LCR_Intractability}}

We first give construction and proof for the $\W[1]$-hardness of $\LCR(\sizeL)$.
We denote by $V$ the vertices of $G$ and by $E$ the edges.
Set the data domain $\Domain = \Setcon{(v,i), (v^\#,i), \#_i}{ v \in V, i \in [1..k]} \cup \setcon{a^0}$.
The leader $P_L$ is given by the tuple $P_L = (\Ops{\Domain}, Q_L, q^0, \delta_L)$ with set of states $Q_L = \Setcon{q^i_V, q^i_{V^\#}, q^i_\#}{i \in [1..k]} \cup \setcon{q^0}$.
The transition relation $\delta_L$ is defined by the following rules.
(First Phase) For each $i \in [1..k]$ and $v \in V$, we add the rule $q^{i-1}_V \xrightarrow{!(v,i)}_L q^{i}_V$.
We identify the vertex $q^0_V$ by $q^0$.
(Second Phase) For each $i \in [1..k]$ and $v \in V$, we add $q^{i-1}_{V^\#} \xrightarrow{!(v^\#,i)}_L q^{i}_{V^\#}$.
Here, we denote by $q^0_{V^\#}$ the vertex $q^k_V$.
(Third Phase) For each $i \in [1..k]$, add $q^{i-1}_\# \xrightarrow{?\#_i}_L q^i_\#$.
Here, we assume $q^0_\# = q^k_{V^\#}$.
Further, we set $F_L = \setcon{q^k_\#}$ to be the final state of interest.

The contributor is defined by $P_C = (\Ops{\Domain}, Q_C, q^0_C, \delta_C)$.
The states are given by $Q_C = \Setcon{q^j_{(v,i)}}{ i \in [1..k], j \in [0..k] } \cup \setcon{q^0_C, q^f_C}$.
We define the transition relation by the following rules.
(First Phase) For each $i \in [1..k]$ and $v \in V$ we have $q^0_C \xrightarrow{?(v,i)}_C q^0_{(v,i)}$.
(Second Phase) For $i \in [1..k], j \in [0..k]$ and $v,w \in V$ we have $q^{j-1}_{(v,i)} \xrightarrow{?(w,j)}_C q^j_{(v,i)}$ if (1) $j = i$ and $v = w$, or if (2) $i \neq j$, $v \neq w$, and there is an edge between $v$ and $w$ in $E$.
(Third Phase) For any $i \in [1..k]$ and $v \in V$, add the rule $q^k_{(v,i)} \xrightarrow{!\#_i}_C q^f_C$.
The correctness is shown in the next lemma.

\begin{lemma}
	There is a $t \in \Naturals$ so that $c^0 \rightarrow^*_{\asms^t} c$ with $c \in C^f$ if and only if there is a clique of size $k$ in $G$.
\end{lemma}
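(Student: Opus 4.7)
The plan is to mirror the structure of the correctness proof for Proposition~\ref{Proposition:LCR_Leader_Mem_Lower} (from the $\kkClique$ reduction), adapted to the simpler three-phase program and the plain $\kClique$ setting.

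For the forward direction, suppose $G$ contains a clique on vertices $w_1,\dots,w_k$ (pick any ordering). Take $t=k$ copies of the contributor, denoted $P_1,\dots,P_k$. I construct the computation phase by phase. In the first phase, the leader writes $(w_1,1),(w_2,2),\dots,(w_k,k)$, and after each write $(w_i,i)$ thread $P_i$ reads it and moves to $q^0_{(w_i,i)}$. In the second phase, the leader writes the same sequence with hash-superscripts $(w_1^\#,1),\dots,(w_k^\#,k)$. For every $j\in[1..k]$ and every $P_i$ the transition $q^{j-1}_{(w_i,i)} \xrightarrow{?(w_j^\#,j)} q^j_{(w_i,i)}$ is enabled: when $j=i$ we have $v=w=w_i$, and when $j\neq i$ the clique property yields both $w_i\neq w_j$ and an edge between $w_i$ and $w_j$. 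In the third phase, each $P_i$ sits in $q^k_{(w_i,i)}$ and writes $\#_i$, which the leader immediately reads in the order $\#_1,\dots,\#_k$, reaching $F_L=\setcon{q^k_\#}$.

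For the backward direction, start from any $t\in\Naturals$ and any computation $c^0\rightarrow^*_{\asms^t}c$ with $c\in C^f$, and project it onto $P_L$. Because the leader's state space is acyclic apart from the branching into vertices, the projection factors uniquely as $\rho^1_L\cdot\rho^2_L\cdot\rho^3_L$: a sequence of $k$ writes $(v_1,1),\dots,(v_k,k)$, a sequence of $k$ writes $(w_1^\#,1),\dots,(w_k^\#,k)$, and a sequence of reads $\#_1,\dots,\#_k$. Each read $?\#_i$ must be served by some contributor $P_i$, and by the shape of $\delta_C$ this contributor must have traversed the states $q^0_{(u_i,i)}\to q^1_{(u_i,i)}\to\cdots\to q^k_{(u_i,i)}$ for some $u_i\in V$ whose index matches $i$ (the final transition is $q^k_{(u_i,i)}\xrightarrow{!\#_i}q^f_C$). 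I then read off the constraints each $P_i$ imposes. The initial read forces $(u_i,i)$ to have been written by $P_L$, so $u_i=v_i$. In the second phase, the $j$-th verification transition of $P_i$ reads $(w_j^\#,j)$ and is only enabled when either ($j=i$ and $v_i=w_i$) or ($j\neq i$, $v_i\neq w_j$, and $(v_i,w_j)\in E$). Taking $j=i$ gives $v_i=w_i$; feeding this back, the conditions for $j\neq i$ show that the $w_i$ are pairwise distinct and pairwise adjacent, so $\setcon{w_1,\dots,w_k}$ is a $k$-clique in $G$.

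The only delicate point I anticipate is the scheduling in the forward direction: each of the $k$ writes of the leader in phases one and two must be read by \emph{every} still-active contributor before the leader proceeds, otherwise a contributor could fall out of sync and miss a symbol it must later verify. This is handled by interleaving, after each leader write, one read step per contributor in a fixed order before permitting the next leader transition; the memory value is preserved across reads, so all $k$ contributors successfully consume the same symbol. The remaining bookkeeping is routine, and the parameter bound $\sizeL = 3k+1 = \bigO(k)$ is immediate from the definition of $Q_L$, yielding the parameterized reduction.
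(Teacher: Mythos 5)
Your proof is correct and follows essentially the same approach as the paper: the forward direction builds a three-phase computation with $t = k$ contributors, and the backward direction factors the leader's projection into $\rho^1_L\cdot\rho^2_L\cdot\rho^3_L$ and reads off the clique from the $k$ contributors that serve the $\#_i$ reads. One minor slip in your final scheduling remark: in Phase one only contributor $P_i$ should read $(w_i,i)$, since any contributor reading $(w_j,j)$ from $q^0_C$ commits itself to row $j$ and would ultimately write $\#_j$ rather than its intended $\#_i$; the ``every still-active contributor reads each leader write'' scheduling you describe is needed only in Phase two, and your main construction already gets Phase one right by having only $P_i$ consume $(w_i,i)$.
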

\begin{proof}
	We first assume that $G$ contains a clique of size $k$.
	Let it be the vertices $v_1, \dots, v_k$.
	We construct a computation on $\asms^t$ with $t = k$ that leads from $c^0$ to a configuration $c$ in $C^f$.
	The program contains $k$ contributors, denoted by $P_1, \dots, P_k$.
	We proceed in three phases, as described above.
	
	In the first phase, the leader writes the values $(v_1,1), \dots, (v_k,k)$ to the memory.
	Contributor $P_i$ reads value $(v_i,i)$ and stores it in its state space.
	We get:
	\begin{align*}
		c^0 &\xrightarrow{!(v_1,1)}_L (q^1_V, q^0_C, \dots, q^0_C, (v_1,1) ) \\
		&\xrightarrow{?(v_1,1)}_1 (q^1_V, q^0_{(v_1,1)}, q^0_C, \dots, q^0_C, (v_1,1) ) \\
		&\xrightarrow{!(v_2,2)}_L (q^2_V, q^0_{(v_1,1)}, q^0_C, \dots, q^0_C, (v_2,2) ) \\
		&\xrightarrow{?(v_2,2)}_2 (q^2_V, q^0_{(v_1,1)}, q^0_{(v_2,2)}, q^0_C, \dots, q^0_C, (v_2,2) ) \\
		\dots &\xrightarrow{?(v_k,k)}_k (q^k_V, q^0_{(v_1,1)}, \dots, q^0_{(v_k,k)}, (v_k,k) ) = c_0.
	\end{align*}
	After reaching $c_0$, the leader starts the second phase.
	It writes $(v^\#_1, 1)$ and each contributor reads it:
	\begin{align*}
		c_0 &\xrightarrow{!(v^\#_1, 1)}_L ( q^1_{V^\#}, q^0_{(v_1,1)}, \dots, q^0_{(v_k,k)}, (v^\#_1, 1) ) \\
		&\xrightarrow{?(v^\#_1, 1)}_1 ( q^1_{V^\#}, q^1_{(v_1,1)}, q^0_{(v_2,2)}, \dots, q^0_{(v_k,k)}, (v^\#_1, 1) ) \\
		\dots &\xrightarrow{?(v^\#_1, 1)}_k ( q^1_{V^\#}, q^1_{(v_1,1)}, \dots, q^1_{(v_k,k)}, (v^\#_1, 1) ) = c_1.
	\end{align*}
	Note that $P_1$ can read $(v^\#_1, 1)$ and move since it stores exactly $(v_1,1)$.
	Any $P_i$ with $i \neq 1$ can read $(v^\#_1, 1)$ and continue its computation since $v_i \neq v_1$ and the two vertices share an edge.
	Similarly, one can continue the computation: $c_1 \rightarrow^*_{\asms^t} c_k = (q^k_{V^\#}, q^k_{(v_1,1)}, \dots, q^k_{(v_k,k)}, (v^\#_k, k))$.
	
	In the third phase, contributor $P_i$ writes the symbol $\#_i$ to the memory.
	The leader waits to read the complete string $\#_1 \dots \#_k$.
	This yields the following computation:
	\begin{align*}
		c_k &\xrightarrow{! \#_1}_1 (q^k_{V^\#}, q^f_C, q^k_{(v_2,2)}, \dots, q^k_{(v_k,k)}, \#_1) \\
		&\xrightarrow{? \#_1}_L (q^1_{\#}, q^f_C, q^k_{(v_2,2)}, \dots, q^k_{(v_k,k)}, \#_1) \\
		\dots &\xrightarrow{? \#_k}_L (q^k_{\#}, q^f_C, \dots, q^f_C, \#_k) \in C^f.
	\end{align*}
	
	For the other direction, let a $t \in \Naturals$ and a computation $\rho = c^0 \rightarrow^*_{\asms^t} c$ with $c \in C^f$ be given.
	We denote by $\rho_L$ the part of the computation that is carried out by the leader $P_L$.
	Then we can factor $\rho_L$ into $\rho_L = \rho^1 . \rho^2 . \rho^3$ with
	\begin{align*}
		\rho^1 &= q^0 \xrightarrow{!(v_1,1)}_L q^1_V \xrightarrow{!(v_2,2)}_L \dots \xrightarrow{!(v_k,k)}_L q^k_V, \\
		\rho^2 &= q^k_V \xrightarrow{!(w^\#_1,1)}_L q^1_{V^\#} \xrightarrow{!(w^\#_2,2)}_L \dots \xrightarrow{!(w^\#_k,k)}_L q^k_{V^\#}, \\
		\rho^3 &= q^k_{V^\#} \xrightarrow{? \#_1}_L q^1_\# \xrightarrow{? \#_2}_L \dots \xrightarrow{? \#_k}_L q^k_\#.
	\end{align*}
	We show that $w_i = v_i$ for any $i \in [1..k]$ and that $v_i \neq v_j$ for $i \neq j$.
	Furthermore, we prove that each two vertices $v_i, v_j$ share an edge.
	Hence, $v_1, \dots, v_k$ form a clique of size $k$ in $G$.
	
	Since $P_L$ is able to read the symbols $\#_1, \dots, \#_k$ in $\rho^3$, there are at least $k$ contributors writing them.
	But a contributor can only write $\#_i$ in its computation if it reads (and stores) the symbol $(v_i,i)$ from $\rho^1$.
	Hence, there is at least one contributor storing $(v_i,i)$.
	We denote it by $P_{v_i}$.
	
	The computation $\rho^2$ starts by writing $(w^\#_1,1)$ to the memory.
	The contributors $P_{v_i}$ have to read it in order to reach a state where they can write the symbol $\#_i$.
	Hence, $P_{v_1}$ reads $(w^\#_1,1)$.
	By the definition of the transition relation of $P_{v_1}$, this means that $w_1 = v_1$.
	Now let $P_{v_i}$ with $i \neq 1$.
	This contributor also reads $(w^\#_1,1) = (v^\#_1,1)$.
	By definition this implies that $v_i \neq v_1$ and the two vertices share an edge.
	
	By induction, we get that $w^\#_i = v_i$, the $v_i$ are distinct and each two of the $v_i$ share an edge.
	\qed
\end{proof}

To prove the $\W[1]$-hardness of $\LCR(\sizeM)$, we go back to our idea to transmit vertices in binary.
Let $t = \log(\abs{V})$ and $\bin : V \rightarrow \setcon{0,1}^t$ be a binary encoding of the vertices.
Instead of a single symbol $(v,i)$ with $v \in V$ and $i \in [1..k]$, the leader will write a string $\#.(\alpha_1,i).\#.(\alpha_2,i).\# \dots (\alpha_t,i).\#$ to the memory, where $t = \log(\abs{V})$,  $\alpha_1.\alpha_2 \dots \alpha_t = \bin(v)$, and $\#$ is a special padding symbol.
We need the padding in order to prevent the contributors from reading a symbol $(\alpha_j,i)$ multiple times.
Note that the new data domain contains only $\bigO(k)$ many symbols.

The idea of the program over the changed data domain is similar to the idea above: It proceeds in three phases.
In the first phase, the leader chooses the vertices of a clique candidate.
This is done by repeatedly writing a string $\#.(\alpha_1,i).\#.(\alpha_2,i).\# \dots (\alpha_t,i).\#$ to the memory, for each $i \in [1..k]$.
Like above, the contributors non-deterministically decide to store a written vertex.
To this end, a contributor that wants to store the $i$-th suggested vertex has a binary tree branching on the symbols $(0,i)$ and $(1,i)$.
Leaves of the tree correspond to binary encodings of vertices.
Hence, a particular vertex can be stored in the contributor's states. 
Note, as we did not assume $\abs{V}$ to the a power of $2$, there might be leaves of the tree that do not correspond to encodings of vertices.
If a computation reaches such a leaf it will deadlock.

In the second phase, the leader again writes the binary encoding of $k$ vertices to the memory.
But this time, it uses a different set of symbols: Instead of $0$ and $1$, the leader uses $0^\#$ and $1^\#$ to separate Phase two from Phase one.
The contributors need to compare the suggested vertices as in the above construction.
To this end, a contributor storing the vertex $(v,i)$ proceeds in $k$ stages.
In stage $j \neq k$ it can only read the encodings of those vertices which are connected to $v$.
Hence, if the leader suggests a wrong vertex, the computation will deadlock.
In stage $i$, the contributor can only read the encoding of the stored vertex $v$.
This allows a verification of the clique as above.

The last phase is identical to the last phase of the above construction.
The contributors write the symbols $\#_i$, while the leader waits to read the string $\#_1 \dots \#_k$.
This constitutes a proper clique.
The formal construction and proof are omitted as they are quite similar to the above case.

\section{Proofs for Section \ref{Section:BSR_States_Threads}}
\label{Appendix_BSR_States_Threads}

We give the missing constructions and proofs for Section \ref{Section:BSR_States_Threads}.

\subsection*{Formal Construction and Proof of Proposition \ref{Proposition:BSR_States_Threads_Upper}}

The idea here is to reduce $\BSR$ to the reachability problem on an NFA $N$ of size at most $\bigO( \sizeP^{\nrt} \cdot \poly (k,n,\sizeM) ) $. The states of the NFA $N$ are the product of the states of each $P_i$ along with the current stage number, currently active process and the last value written to the memory i.e. it is of the form $Q_N = Q_1 \times \dots Q_t \times [0..\nrt] \times [0..k] \times \Domain$.  The currently active process records the information about who is allowed to write in that stage.
The initial state of $N$ is given by $q^0_N = (q^0_1, \dots q^0_{\nrt},0,0,\initmem)$. 

From any state of the form $(q_1,\dots,q_n,i,j,a), i\in [0..\nrt], j\in [0..k]$, for all moves of the form $q_{\ell} \moves{a?/\epsilon}_{P_{\ell}} q'_{\ell}$ for some $\ell \in [1..\nrt]$, we have a corresponding move of the form $(q_1,\dots,q_n,i,j,a) \moves{}_N (q_1,\dots,q_{\ell-1},q'_{\ell},\dots,q_n,i,j,a) $, this freely simulates any read move within  a stage. 

Similarly from any state of the form $(q_1,\dots,q_n,i,j,a), a \in \Sigma, i\in [1..\nrt], j\in [1..k]$,  for all moves of the form $q_{i} \moves{b!}_{P_{i}} q'_{i}$, we have a corresponding move of the form $(q_1,\dots,q_n,i,j,a) \moves{}_N (q_1,\dots,q_{i-1},q'_{i},\dots,q_n,i,j,b) $. These set of transitions allow the currently active process $P_i$ during any stage to write values to shared memory. 

Finally from any state of the form $(q_1,\dots,q_n,i,j,a), i\in [0..\nrt], j\in [0..k-1]$  we have moves of the form $(q_1,\dots,q_n,i,j,a) \moves{}_N (q_1,\dots,q_n,m,j+1,a) $, for all $m \in [1..\nrt]$. The correctness of such a construction is guaranteed by the following easy to see lemma. The complexity follows from the fact that reachability is quadratic and size of the automata that we construct is  at-most $\bigO( \sizeP^{\nrt} \cdot \poly (k,n,\sizeM) ) $.

\begin{lemma}
	There is an $\ell$-stage computation $c^0 \rightarrow^*_\asms c$  if and only if there is a computation of the form $q^0_N \rightarrow^*_N q_N$ for some $q_N \in \{ (q_1, \dots,q_n,i,\ell,a) \mid i \in [0..\nrt],  a \in \Domain, (q_1, \dots,q_n) = c \}$.
\end{lemma}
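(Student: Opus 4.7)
The plan is to reduce $\BSR$ to reachability on a product NFA $N$ whose states track all information needed to witness an $\nrs$-stage computation, and then invoke the standard quadratic-time reachability algorithm for NFAs. Concretely, I would take the state set of $N$ to be
\[
Q_N \;=\; Q_1 \times \cdots \times Q_\nrt \times [1..\nrt] \times [0..\nrs] \times \Domain,
\]
so that a state records the current local state of each thread, the index of the thread that currently holds the write permission, the number of stages already completed, and the current memory value. The initial state is $(q^0_1,\dots,q^0_\nrt, i, 0, \initmem)$ for some choice of initial writer $i$.

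Next I would install three families of transitions. First, for every read or $\varepsilon$-move $q_\ell \xrightarrow{?a/\varepsilon} q'_\ell$ of any thread $P_\ell$, allow $(q_1,\dots,q_\nrt,i,j,a)\to (q_1,\dots,q'_\ell,\dots,q_\nrt,i,j,a)$, matching the read against the currently stored memory value; this is permitted regardless of which thread is the writer. Second, for every write $q_i \xrightarrow{!b} q'_i$ of the \emph{current} writer $P_i$, allow $(q_1,\dots,q_i,\dots,q_\nrt,i,j,a)\to(q_1,\dots,q'_i,\dots,q_\nrt,i,j,b)$, overwriting the memory value. Third, add stage-change transitions $(q_1,\dots,q_\nrt,i,j,a)\to(q_1,\dots,q_\nrt,i',j+1,a)$ for every $i'\in[1..\nrt]$, provided $j<\nrs$; this models the handover of the write permission and increments the stage counter.

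To establish correctness one shows, by straightforward induction on the length of a computation, that a reachable configuration $c$ of $\asms$ together with a decomposition into at most $\nrs$ stages corresponds exactly to a reachable state of $N$ with stage counter $j\le\nrs$ in which the thread components record $c$ and the memory component records the current value. The stage counter can only increment via the dedicated transitions, enforcing the bound on stage switches, while restricting writes to the indexed writer enforces the single-writer-per-stage condition. Thus an $\nrs$-stage computation reaching $C^f$ exists iff some state of $N$ whose thread components lie in $C^f$ is reachable from an initial state.

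Finally, $|Q_N|\in\bigO(\sizeP^{\nrt}\cdot \nrt\cdot \nrs\cdot \sizeM)$, and reachability on an NFA is solvable in time quadratic in the state set, yielding the claimed $\bigOS(\sizeP^{2\nrt})$ bound. The only mild subtlety is bookkeeping for the stage counter and the writer index so that the two are updated together at a stage boundary and so that the final state carries $j\le\nrs$; this is routine. I do not expect any real obstacle, as the construction is essentially an asynchronous product augmented with a small controller for the writer and stage.
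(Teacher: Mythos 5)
Your construction is the paper's construction at heart: a product NFA over the local states of all threads, augmented with a writer index, a stage counter, and the current memory value, with reads and $\varepsilon$-moves unconstrained, writes permitted only to the designated writer, and dedicated stage-change transitions that simultaneously increment the counter and reassign the writer. Reachability on this product yields the $\bigOS(\sizeP^{2\nrt})$ bound. So the approach is essentially identical.

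There is, however, a genuine off-by-one in your bookkeeping that the paper's version sidesteps. You start the counter at $j=0$ with a writer $i\in[1..\nrt]$ already designated, so stage $j=0$ is already a writing stage, and you gate stage-change transitions on $j<\nrs$, which lets $j$ climb to $\nrs$. That is $\nrs$ handovers, hence $\nrs+1$ stages, one too many; your $N$ would accept configurations reachable only via an $(\nrs+1)$-stage computation. The paper avoids this by using $[0..\nrt]$ for the writer index, starting at the sentinel $i=0$ (no writer), and allowing writes only when $j\in[1..\nrs]$: the initial $j=0$ phase is then pure reads and can be absorbed into the first real stage, and reaching $j=\ell$ corresponds exactly to $\ell$ stage-change transitions and hence $\ell$ stages. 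If you prefer your convention of designating a writer up front, the fix is to gate the stage-change transitions on $j<\nrs-1$. Either way, as written the correspondence claimed in the lemma is off by one and needs this adjustment.
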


\begin{proof}
	($\Rightarrow$) For this direction, we will assume an $\ell$-stage computation of the form $c^0 \rightarrow^*_\asms c$  and show that there is a computation in $N$ of the form $q^0_N \rightarrow^*_N q_N$.  Since $c^0_\asms \rightarrow^*_\asms c_\asms$  is an $\ell$-stage computation, it can be split as $c_0 \rightarrow^*_\asms c_1 \rightarrow^*_\asms c_2 \dots \rightarrow^*_\asms c_{\ell + 1}$, where $c_0 = c^0, c_{\ell+1} = c$ and each of $\pi_i = c_i \moves{}_{\asms} c_{i+1}$ is an $1$-stage computation. Further for $i \in [1..\ell]$, let $p_i$ be the writer corresponding to the stage $\pi_i$.   It is easy to see that corresponding to every move  in the subcomputation of the form $(q_1,\dots,q_{\nrt},a) \rightarrow_{\asms} (q'_1,\dots,q'_{\nrt},b)$, there is a move of the form $(q_1,\dots,q_{\nrt},p_i,j,a) \moves{}_{\asms} (q'_1,\dots,q'_{\nrt},p_i,j,b)$ for all $j \in [1..k]$. Given a configuration $c = (q_1,\dots,q_{\nrt},a)$, we let $\mu(c,p,j) = (q_1,\dots,q_{\nrt},p,j,a)$. Given a computation $\pi$, we let $\mu(\pi,x,y)$ to be the sequence obtained by replacing each configuration $c$ occurring in $\pi$ by $\mu(c,x,y)$. It is easy to see that $\mu(\pi_0,0,0)$, $\mu(\pi_1,p_1,1)$, \dots, $\mu(\pi_{\ell},p_\ell,\ell)$ are all valid sub-computations in $N$. This is because we have for every move in the program, a corresponding move in the $\fingraph$ we construct. Combining these sub-computations will give us the required run in $N$. For combining these sub-computations, we use the transition of the form $(q_1,\dots,q_n,i,j,a) \moves{}_N (q_1,\dots,q_n,m,j+1,a) $, for all $m \in [1..\nrt]$ that was finally added in the construction.
	
	($\Leftarrow$)
	For this direction, we will assume a computation of the form $\pi = q^0_N \rightarrow^*_N q_N$, where $q_N = (q_1, \dots,q_n,i,\ell,a)$ for some $i \in [1..\nrt], a \in \Sigma$.
	Clearly such a computation can be split as follows
	\begin{align*}
		\pi = (q^0_1,\dots,q^0_{\nrt},0,0,r) &\rightarrow_N (q^1_1,\dots,q^1_{\nrt},0,0,a_1) \\
		\dots &\rightarrow_N (q^1_1,\dots,q^1_{\nrt},i_1,1,a_1) \\
		\dots &\rightarrow_N (q_1^2,\dots,q^2_{\nrt},i_1,1,a_2) \\
		\dots &\rightarrow_N (q^2_1,\dots,q^2_{\nrt},i_2,2,a_2) \\
		\dots &\rightarrow_N (q^{\ell+1}_1,\dots,q^{\ell+1}_{\nrt},i_{\ell},\ell,a_{\ell+1}).
	\end{align*}
	 Define $\pi_0 = (q^0_1,\dots,q^0_{\nrt},0,0,r) \rightarrow_N (q^1_1,\dots,q^1_{\nrt},0,0,a_1)$ and computation $\pi_j = (q^j_1,\dots,q^j_{\nrt},i_j,j,a_j) \rightarrow_N (q_1^{j+1},\dots,q^{j+1}_{\nrt},i_j,j,a_{j+1})$.

	It is easy to see that corresponding to every move of the form
	\begin{align*}
		(q_1, \dots, q_n,i,j,a) \moves{}_N (q'_1, \dots,q'_n,i,j,a'),
	\end{align*} 
	there is a move of the form $(q_1, \dots,q_n,a) \moves{} (q'_1, \dots,q'_n,a')$ such that it is either an read or internal move of some process or a write move of process $P_i$. Notice that the transition in $\fingraph$ was added because of existence of one such move in the program. Using this fact, it is easy to see that corresponding to each $\pi_i$, $i \in [0..\ell]$, there is a $1$-stage sub-computation in $\asms$. Concatenating these sub-computations will now give us the required run. \qed
\end{proof}

\subsection*{Formal Construction and Proof of Proposition \ref{Proposition:BSR_States_Threads_Lower}}

For the formal construction, we define for $i \in [1..k]$ the NFA $P_i$ to be the tuple $P_i = (\Ops{\Domain}, Q_i, q^0_i, \delta_i)$ with \mbox{$Q_i = \Setcon{q_{ij}}{ j \in [1..k]} \cup \Setcon{q^\ell_{ij}}{ j,\ell \in [1..k] } \cup \setcon{q^0_{i}}$}.
The states $q_{ij}$ are used to store one of the $k$ vertices of the $i$-th row, the states $q^\ell_{ij}$ are needed to check the edge relations to other rows and to perform the equality check.
The transition relation $\delta_i$ is given by the following rules:
For choosing and storing a vertex of the $i$-th row we have $q^0_i \xrightarrow{? \initmem}_i q_{ij}$ for any $j \in [1..k]$.
For checking the edge relations to vertices from a different row, we have for $\ell, j \in [1..k]$ the rule $q^{\ell - 1}_{ij} \xrightarrow{?(\ell, m)}_i q^{\ell}_{ij}$ if $\ell \neq i$ and if there is an edge between $(i,j)$ and $(\ell,m)$ in $G$.
Note that we identify $q_{ij}$ as $q^0_{ij}$.
Finally, to test the equivalence in the case of the same row we have $q^{i-1}_{ij} \xrightarrow{?(i,j)}_i q^{i}_{ij}$ for $j \in [1..k]$.

The writer $P_{ch}$ is given by $P_{ch} = (\Ops{\Domain}, Q, q_0, \delta)$, where $Q = \setcon{q_0, \dots, q_k}$ and $\delta$ is given by the rules $q_{i-1} \xrightarrow{!(i,j)}_{ch} q_{i}$ for $i,j \in [1..k]$.
Altogether, we define the program $\asms$ to be $\asms = (\Domain, \initmem, ((P_i)_{i \in [1..k]}, P_{ch}))$ and the set of configurations we want to reach as $C^f = \Setcon{(q^k_{1 j_1}, \dots, q^k_{k j_k}, q_k, a)}{j_1, \dots, j_k \in [1..k], a \in \Domain}$.
The correctness of the construction follows by the lemma below.

\begin{lemma}
	There is a $1$-stage computation $c^0 \rightarrow^*_\asms c$ for a $c \in C^f$ if and only if there is a clique of size $k$ in $G$ with one vertex from each row.
\end{lemma}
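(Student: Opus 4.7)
The plan is to handle the two directions separately by tracking, stage by stage, which thread is allowed to do what.

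For the forward direction, suppose $(1, j_1), \dots, (k, j_k)$ is a clique of size $k$ in $G$ with one vertex per row. I would exhibit an explicit $1$-stage run of $\asms$. First, every reader $P_i$ performs its initial transition $q^0_i \xrightarrow{? \initmem}_i q_{i j_i}$; since the memory starts out holding $\initmem$ and no one has written yet, all these reads can be interleaved arbitrarily. Then $P_{ch}$ writes the sequence $(1, j_1), (2, j_2), \dots, (k, j_k)$, and after each write $(\ell, j_\ell)$ every thread $P_i$ takes the transition $q^{\ell-1}_{i j_i} \xrightarrow{? (\ell, j_\ell)}_i q^{\ell}_{i j_i}$ before $P_{ch}$ writes the next value. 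For $\ell = i$ this is the equality transition (valid because $j_\ell = j_i$); for $\ell \neq i$ it is the edge transition (valid because the clique assumption gives an edge between $(i, j_i)$ and $(\ell, j_\ell)$). Only $P_{ch}$ ever writes along the run, so the computation lies in a single stage, and it ends in $(q^k_{1 j_1}, \dots, q^k_{k j_k}, q_k, (k, j_k)) \in C^f$.

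For the backward direction, fix a $1$-stage computation $\pi : c^0 \rightarrow^*_\asms c$ with $c \in C^f$. Inspecting $\delta_i$ shows that no $P_i$ has any write transition, so the unique writer in this stage must be $P_{ch}$. To reach $q_k$, the thread $P_{ch}$ executes exactly $k$ writes $(1, j_1), (2, j_2), \dots, (k, j_k)$ for some choice of column indices $j_1, \dots, j_k$. For each row $i$, let $j_i'$ be the column such that $P_i$ ends in $q^k_{i j_i'}$; this is the vertex $P_i$ committed to during its very first transition from $q^0_i$. To progress through $q^0_{i j_i'}, q^1_{i j_i'}, \dots, q^k_{i j_i'}$, the thread $P_i$ must at each step $\ell$ read some value $(\ell, m_\ell^i)$ from the memory. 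The only values of the form $(\ell, \cdot)$ that ever appear in the shared memory during $\pi$ are the ones written by $P_{ch}$, and for each $\ell$ there is only one such write, namely $(\ell, j_\ell)$. Hence $m_\ell^i = j_\ell$ for all $i, \ell$.

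From here the clique just falls out. At the step $\ell = i$ of $P_i$, the equality rule $q^{i-1}_{i j_i'} \xrightarrow{?(i, j_i)} q^i_{i j_i'}$ requires $j_i' = j_i$, so the vertex stored by $P_i$ is exactly $(i, j_i)$. At each step $\ell \neq i$, the transition $q^{\ell-1}_{i j_i} \xrightarrow{?(\ell, j_\ell)} q^{\ell}_{i j_i}$ exists in $\delta_i$ only if $(i, j_i)$ and $(\ell, j_\ell)$ are adjacent in $G$. Therefore every pair among $(1, j_1), \dots, (k, j_k)$ is adjacent, yielding a clique of size $k$ with one vertex per row. I expect the only delicate point to be ruling out that $P_i$ reads an $(\ell, \cdot)$-value that was not written by $P_{ch}$, and this is handled purely by observing that $\initmem$ is not of the form $(\ell, m)$ and that $P_{ch}$ writes each $(\ell, \cdot)$ at most once; the rest of the argument is bookkeeping.
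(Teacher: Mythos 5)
Your proof is correct and follows essentially the same approach as the paper: the forward direction exhibits the obvious interleaved run, and the backward direction observes that the unique writer $P_{ch}$ determines column indices $j_1,\dots,j_k$ and that each reader $P_i$, to advance from $q^{\ell-1}_{ij}$ to $q^\ell_{ij}$, can only read the single value $(\ell,j_\ell)$ present in memory, which forces the equality constraint at $\ell=i$ and the adjacency constraint at $\ell\neq i$. The paper phrases the backward direction slightly more procedurally, tracing the interleaving and arguing that $P_{ch}$ \emph{must} write each $(\ell,j_\ell)$ lest $P_\ell$ deadlock, but the substance is identical.
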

\begin{proof}
	First note that any computation in $\asms$ is a $1$-stage computation since the only thread that can write to the memory is $P_{ch}$.
	
	Let a clique of size $k$ in $G$ be given.
	It consists of the vertices $(1,j_1), \dots, (k, j_k)$.
	Then it is easy to construct a computation of $\asms$ starting in $(q^0_1, \dots, q^0_k, q_0, \gamma)$ and ending in $(q^k_{1 j_1}, \dots, q^k_{k j_k}, q_k, (k, j_k)) \in C^f$:
	The system just guesses the right vertices $(1,j_1)$ up to $(k,j_k)$ and then performs the edge-tests which are all positive since the vertices form a clique.
	
	For the other direction, let a computation $\rho$ leading to $(q^k_{1 j_1}, \dots, q^k_{k j_k}, q_0, (k, j_k))$ be given.
	Then we show that the vertices $(1, j_1), \dots, (k, j_k)$ form a clique.
	Since the $P_i$ can only start their computations on the initial memory symbol $\initmem$, they have to perform a step before $P_{ch}$ changes the memory content.
	Thus, we can split $\rho$ into $\rho = \rho_1 . \rho_2$ where $\rho_1$ contains only moves of the $P_i$ on reading $\initmem$, in any order.
	We may assume the following form for $\rho_1$:
	\begin{align*}
	(q^0_1, \dots, q^0_k, q_0, \initmem) &\xrightarrow{?\initmem}_1 (q^0_{1 j_1}, q^0_2, \dots, q^0_k, q_0, \initmem) \\
	\dots &\xrightarrow{?\initmem}_k (q^0_{1 j_1}, \dots, q^0_{k j_k}, q_0, \initmem).
	\end{align*}
	Note that the computation $\rho_1$ corresponds to the choice of $(1, j_1), \dots, (k, j_k)$ as a clique-candidate.
	
	After $\rho_1$, the thread $P_{ch}$ needs to write the symbol $(1, j_1)$ to the memory since otherwise, $P_1$ would deadlock.
	But then each $P_i$ with $i \neq 1$ needs to do a step on reading $(1,j_1)$ and this only happens if $(i,j_i)$ and $(1,j_1)$ share an edge.
	Then the computation $\rho_2$ goes on with $P_{ch}$ writing $(2,j_2)$ since otherwise, $P_2$ would deadlock.
	The other threads again perform a verification step on reading $(2,j_2)$.
	Since the computation reaches the configuration $(q^k_{1 j_1}, \dots, q^k_{k j_k}, q_0, (k, j_k))$ in the end, we can ensure that all chosen vertices indeed share an edge.
	\qed
\end{proof}

\subsection*{Formal Construction and Proof of Theorem \ref{Theorem:BSR_States_Threads_Kernel_Lower}}

Let $\varphi_1, \dots, \varphi_I$ be given $\kSAT{3}$-instances, each two equivalent under $\polyrel$.
We assume that each $\varphi_\ell$ has the form:
$\varphi_\ell = C^\ell_1 \wedge \dots \wedge C^\ell_m$ and the set of variables used by the $\kSAT{3}$-instances is $\setcon{x_1, \dots, x_n}$.
We define the data domain to be $\Domain = [1..I] \times [1..m] \times [1..n] \times \setcon{0,1} \cup \setcon{\initmem}$.

For each $i \in [1..n]$, we introduce a thread $P_{x_i} = (\Ops{\Domain}, Q_{x_i}, q^0_{x_i}, \delta_{x_i})$, where $Q_{x_i} = \Setcon{q^j_{(x_i,0)}, q^j_{(x_i,1)}}{j \in [0..m]} \cup \setcon{q^0_{x_i}}$.
Each state $q^j_{(x_i,v)}$ stores the chosen evaluation $v$ of $x_i$ and the number $j$ of clauses that were already satisfied. 
We have $\abs{Q_{x_i}} = 2(m+1) + 1$.
The transition relation $\delta_{x_i}$ contains the following rules:
For choosing an evaluation of $x_i$ we have $q^0_{x_i} \xrightarrow{?\initmem}_{x_i} q^0_{(x_i, v)}$ for $v \in \setcon{0,1}$.
For checking the satisfiability of clauses we have for $\ell \in [1..I], j \in [1..m]$, and $v \in \setcon{0,1}$:
$q^{j-1}_{(x_i,v)} \xrightarrow{?(\ell, j, i, v)}_{x_i} q^j_{(x_i,v)}$ if clause $C^\ell_j$ is satisfied by variable $x_i$ under evaluation $v$.
If the requested variable is not $x_i$, the thread $P_{x_i}$ just performs a counting step: $q^{j-1}_{(x_i,v)} \xrightarrow{?(\ell, j, i', v')}_{x_i} q^j_{(x_i,v)}$ for $i' \neq i$, $\ell \in [1..I]$, $j \in [1..m]$, and $v' \in \setcon{0,1}$.

Next, we introduce the writer-thread $P_w = (\Ops{\Sigma}, Q_w, q^0_w, \delta_w)$ with set of states $Q_w = \setcon{q^0_w, \dots, q^m_w}$. 
Thus, we have $\abs{Q_w} = m+1$.
The writer picks $m$ clauses of probably different instances that need to be satisfied.
To this end, it will not only guess the clause but also the instance that contains the clause, the variable that should satisfy it, and the evaluation of the variable.
This will then be discarded or verified by the variable-threads.
The transitions that we need in $\delta_w$ are thus of the form $q^{j-1}_w \xrightarrow{!(\ell, j, i, v)}_w q^j_w$ for any $\ell \in [1..I], j \in [1..m], i \in [1..n]$, and $v \in \setcon{0,1}$.
Writing $(\ell, j, i, v)$ to the memory reflects the claim that clause $C^\ell_j$ gets satisfied by variable $x_i$ under evaluation $v$.

The last type of threads that we introduce are the bit-checkers.
For each \mbox{$b \in [1..\log(I)]$} we define the thread $P_b = (\Ops{\Sigma}, Q_b, q^0_b, \delta_b)$ with set of states \mbox{$Q_b = \Setcon{q^j_{(b,0)}, q^j_{(b,1)}}{j \in [1..m]} \cup \setcon{q^0_b}$}.
Hence, we have that $\abs{Q_b} = 2m + 1$.
The task of bit-checker $P_b$ is to verify that along the instances $\varphi_{\ell_1}, \dots, \varphi_{\ell_m}$ guessed by the writer, the $b$-th bit of $\bin(\ell_j)$ for $j \in [1..m]$ does not change.
To this end, we construct $\delta_b$ the following way:
Initially, $P_b$ stores the $b$-th bit of the first guessed instance.
We add for $\ell \in [1..I], i \in [1..n]$, and $v \in \setcon{0,1}$ the rule $q^0_b \xrightarrow{?(\ell, 1, i, v)}_b q^1_{(b,u)}$ if the $b$-th bit of $\bin(\ell)$ is $u \in \setcon{0,1}$.
For the comparison with further guessed instances we have for $j \in [2..m], \ell \in [1..I], i \in [1..n]$, and $v,u \in \setcon{0,1}$ the rule $q^{j-1}_{(b,u)} \xrightarrow{?(\ell, j, i, v)}_b q^j_{(b,u)}$ if the $b$-th bit of $\bin(\ell)$ is $u$.

We define the program $\asms$ as $\asms = (\Domain, \initmem, ( P_w, (P_{x_i})_{i \in [1..n]}, (P_b)_{b \in [1..\log(I)]} ) )$.
We want the writer and the bit-checkers to perform exactly $m$ steps and the variable-threads to move exactly $m+1$ steps.
Intuitively, this amounts to the satisfiability of $m$ clauses that all belong to the same instance.
Hence, the set of configurations $C^f$ that we want to reach is the following:
$$
\Setcon{( q^m_w, q^m_{(x_1, v_1)}, \dots, q^m_{(x_n, v_n)}, q^m_{(1, u_1)}, \dots, q^m_{(\log(I), u_{\log(I)})}, a)}{ v_i, u_\ell \in \setcon{0,1}, a \in \Sigma }.
$$

Since $P_w$ is the only thread which is allowed to write, we are interested in reaching $C^f$ within a $1$-stage computation.
Hence, the $\BSR$-instance of interest is the tuple $(\asms, C^f, 1)$.
Note that the parameters obey the bounds of a cross-composition:
$\sizeP = 2(m+1) + 1$ and $\nrt = 1 + n + \log(I)$.
It is thus left to show that the above construction is correct.
This is proven in the following lemma.

\begin{lemma}
	There is a $1$-stage computation from $c^0 \rightarrow^*_\asms c$ for a $c \in C^f$ if and only if there is an $\ell \in [1..I]$ such that $\varphi_\ell$ is satisfiable.
\end{lemma}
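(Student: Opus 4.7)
The plan is to verify both directions by analyzing the interaction between the writer, the variable threads, and the bit checkers. The driving observation is that the state components in $C^f$ pin down exactly how many transitions each thread has taken, which in a $1$-stage computation leaves very little freedom in the interleaving.

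For the $(\Leftarrow)$ direction I would construct the witness computation explicitly. Given a satisfying assignment $v_1, \dots, v_n$ of $\varphi_\ell$, first let every variable thread $P_{x_i}$ fire its $\initmem$-read and move to $q^0_{(x_i, v_i)}$ while the memory still holds $\initmem$. Then, choosing for each $j \in [1..m]$ a variable $x_{i_j}$ whose value $v_{i_j}$ satisfies $C^\ell_j$, have the writer perform its $m$ moves in order, writing the tuples $(\ell, j, i_j, v_{i_j})$. After each write, all variable threads and all bit checkers consume the tuple: the thread $P_{x_{i_j}}$ uses the matching-variable rule (enabled because $v_{i_j}$ matches and $C^\ell_j$ is satisfied by $x_{i_j}$), every other variable thread takes its counting transition, and each bit checker either initializes (for $j = 1$) or confirms (for $j \geq 2$) its stored bit of $\bin(\ell)$. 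Since only $P_w$ ever writes, the whole run is a single stage reaching $C^f$.

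For the $(\Rightarrow)$ direction I would start from a $1$-stage computation reaching $c \in C^f$ and inspect it thread by thread. The writer $P_w$ must have executed its $m$ moves, yielding a sequence of tuples $(\ell_1, 1, i_1, v_1), \dots, (\ell_m, m, i_m, v_m)$; these are the only writes in the run. The bit checker $P_b$ reaching $q^m_{(b, u_b)}$ forces it to have consumed all $m$ tuples in order, and its transition structure requires the $b$-th bit of $\bin(\ell_j)$ to equal $u_b$ for every $j$. Ranging over $b \in [1..\log(I)]$ yields a common $\ell \in [1..I]$ with $\ell_1 = \dots = \ell_m = \ell$. Each variable thread $P_{x_i}$ reaching $q^m_{(x_i, v_i)}$ must have first read $\initmem$ (which is only enabled before the writer acts) to choose $v_i$, and then consumed all $m$ writer tuples in the forced order. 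When $P_{x_{i_j}}$ reads the $j$-th tuple the counting rule does not apply (it explicitly excludes the case $i' = i$), so the matching-variable rule must fire, giving $v_j = v_{i_j}$ and that $x_{i_j}$ at $v_{i_j}$ satisfies $C^\ell_j$. Hence every clause of $\varphi_\ell$ is satisfied by $v_1, \dots, v_n$, so $\varphi_\ell$ is satisfiable.

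The main subtlety I expect is the rigid bookkeeping enforced by the second coordinate $j$ of the tuples: every variable thread and every bit checker can only advance from a state indexed by $j-1$ by reading a tuple whose second coordinate is exactly $j$, keeping them in lockstep with the writer's counter. Without this synchronization, a variable thread could in principle skip over a tuple naming its own variable by taking a counting move on some later tuple, and the forcing argument above would break down; verifying this lockstep carefully is the core of the proof.
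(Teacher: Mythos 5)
Your proposal is correct and follows essentially the same approach as the paper's proof: both directions explicitly construct/trace the run, and the $(\Rightarrow)$ direction uses the stage counter $j$ inside the tuple to force each variable thread and bit checker to consume every writer tuple exactly once in order, so that the bit checkers collapse the sequence of instance indices to a single $\ell$ and the excluded case $i'=i$ in the counting rule forces the matching rule (hence clause satisfaction) whenever the writer names the thread's own variable. The explicit "lockstep" observation you flag is indeed the load-bearing detail, and the paper relies on it implicitly by noting that the computation does not deadlock.
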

\begin{proof}
	First assume that there is an $\ell \in [1..I]$ such that $\varphi_\ell$ is satisfiable.
	Let $v_1, \dots, v_n$ be the evaluation of the variables $x_1, \dots, x_n$ that satisfies $\varphi_\ell$ and let \mbox{$\bin(\ell) = u_1 \dots u_{\log(I)}$} be the binary representation of $\ell$.
	We construct a $1$-stage computation of $\asms$ from $c^0$ to the configuration
	$$
	c = (q^m_w, q^m_{(x_1, v_1)}, \dots, q^m_{(x_n, v_n)}, q^m_{(1, u_1)}, \dots, q^m_{(\log(I), u_{\log(I)})}, (\ell, m, x_z, v_z)),
	$$
	where $x_z$ is a variable in $C^\ell_m$.
	
	The computation starts with choosing the right evaluation for the variables:
	Each $P_{x_i}$ performs the move $q^0_{x_i} \xrightarrow{?\initmem}_{x_i} q^0_{(x_i,v_i)}$.
	This leads to the computation $c^0 \rightarrow^*_\asms (q^0_w, q^0_{(x_1,v_1)}, \dots, q^0_{(x_n,v_n)}, q^0_1, \dots, q^0_{\log(I)}, \gamma) = c_0$.
	Then $P_w$ writes the tuple $(\ell, 1, i', v_{i'})$ to the memory, where $x_{i'}$ is a variable in clause $C^\ell_1$.
	Furthermore, $x_{i'}$ evaluated to $v_{i'}$ satisfies the clause.
	This is read by all $P_{x_i}$ and each performs the step $q^0_{(x_i, v_i)} \xrightarrow{?(\ell, 1, i', v_{i'})}_{x_i} q^1_{(x_i,v_i)}$.
	Note that due to the definition of $\delta_{x_i}$, in both cases, $i \neq i'$ and $i = i'$, the move can indeed be done.
	The bit-checkers $P_b$ also read the tuple.
	Each $P_b$ does the following step: $q^0_b \xrightarrow{?(\ell, 1, i', v_{i'})}_b q^1_{(b, u_b)}$ since $u_b$ is the $b$-th bit of $\bin(\ell)$.
	If we put the individual moves together, this leads to a new configuration 
	\begin{align*}
		c_0 \rightarrow^*_\asms (q^1_w, q^1_{(x_1, v_1)}, \dots, q^1_{(x_n, v_n)}, q^1_{(1, u_1)}, \dots, q^1_{(\log(I),u_{\log(I)})}, (\ell, 1, i', v_{i'})) = c_1.
	\end{align*}
	
	Similarly, we can construct a computation that leads to a configuration $c_2$.
	If we go on with the construction, we get a computation that leads to $c$.
	
	For the other direction, let $\rho$ be a $1$-stage computation of $\asms$, ending in the configuration $(q^m_w, q^m_{(x_1,v_1)}, \dots, q^m_{(x_n, v_n)}, q^m_{(1,u_1)}, \dots, q^m_{(\log(I), u_{\log(I)})}, a) = c$.
	Let \mbox{$\ell \in [1..I]$} be such that $\bin(\ell) = u_1 \dots u_{\log(I)}$.
	We show that $\varphi_\ell$ is satisfiable.
	More precise, $\varphi_\ell$ is satisfied under $x_i$ evaluating to $v_i$.
	
	Since $P_{x_i}$ can start its computation only on reading $\initmem$, we get that each $P_{x_i}$ performs the step $q^0_{x_i} \xrightarrow{?\initmem}_{x_i} q^0_{(x_i,v_i)}$.
	Note that the chosen $v_i$ is indeed the one appearing in $c$.
	Hence, we get as an initial part of $\rho$ the computation
	\begin{align*}
		\rho^0 = c^0 \rightarrow_\asms (q^0_w, q^0_{(x_1,v_1)}, \dots, q^0_{(x_n, v_n)}, q^0_1, \dots, q^0_{\log(I)}, \initmem) = c_0.
	\end{align*}
	
	The computation can then only continue if $P_w$ changes the content of the memory.
	The thread guesses and writes $(\ell, 1, i', v_{i'})$ to the memory.
	He has to choose the index $\ell$, and thus the instance $\varphi_\ell$, since otherwise there would be a bit-checker $P_b$ not reaching $q^m_{(b,u_b)}$.
	The bit-checkers $P_b$ read $(\ell, 1, i', v_{i'})$ and store $u_b$ in their states: 
	$q^0_b \xrightarrow{?(\ell, 1, i', v_{i'})}_b q^1_{(b,u_b)}$.
	Now all $P_{x_i}$ have to perform a step since the computation does not deadlock.
	This means that especially $P_{x_{i'}}$ performs a step on reading $(\ell, 1, i', v_{i'})$.
	By definition, this is only possible if $x_{i'}$ is a variable that evaluated to $v_{i'}$ satisfies clause $C^\ell_1$.
	Hence, we have that under evaluating each $x_i$ to $v_i$, clause $C^\ell_1$ is satisfied.
	If we combine all the moves done, we get another part $\rho_1$ of the computation $\rho$ that leads from $c_0$ to the configuration $(q^1_w, q^1_{(x_1,v_1)}, \dots, q^1_{(x_n, v_n)}, q^1_{(1,u_1)}, \dots, q^1_{(\log(I), u_{\log(I)})}, (\ell,1, i', v_{i'})) = c_1$.
	
	Similarly, one proves that the second part of $\rho$, the computation $\rho^2$ shows that $C^\ell_2$ is satisfiable under evaluating each $x_i$ to $v_i$.
	Hence, by induction we get that $\varphi_\ell$ is satisfiable.
	\qed
\end{proof}
\section{Proofs for Section \ref{Section:BSR_Intractability}}
\label{Appendix_BSR_Intractability}

We give the missing constructions and proofs for Section \ref{Section:BSR_Intractability}.

\subsection*{Proof of Proposition \ref{Proposition:BSR_Intractability}}

We show that a memory domain of constant size and a single stage suffice to reduce $\kSAT{3}$ to $\BSR$.
Let $\varphi = C_1 \wedge \dots \wedge C_m$ be a formula in CNF with at most three literals per clause $C_i$.
Let the variables of $\varphi$ be $x_1, \dots, x_n$.
Our goal is to construct a program $\asms = (\Domain, \initmem, (P_1,\dots,P_n,P_v))$ such that $c^0$ can reach an unsafe configuration of $\asms$ in a single stage if and only if $\varphi$ has a satisfying assignment.
Moreover, the size of the domain in the construction will be $\sizeM = 4$ and is thus constant.

We set $\Domain = \setcon{\initmem, \#, 0, 1}$.
For communication over this domain, we encode literals of $\varphi$ in binary.
Let $\bin_\#(i) \in (\setcon{0,1} . \#)^{\log(n)+1}$ be the binary encoding of $i$ into $n$ bits where each bit is separated by the symbol $\#$.
For instance, we get $\bin_\#(2) = 0 \# 0 \# 1 \# 0 \#$ in the case $n = 8$.
Given a literal $\ell$ of $\varphi$, we encode it by $\enc(\ell) = v \# \bin_\#(i)$, where $x_i$ is the variable in $\ell$ and $v$ its evaluation.

We have a separate thread $P_v$, called the verifier.
It iterates over the clauses and for each clause $C_i = \ell_{i_1} \vee \ell_{i_2} \vee \ell_{i_3}$, the thread picks a literal and writes $\enc(\ell_{i_1})$, $\enc(\ell_{i_2})$, or $\enc(\ell_{i_3})$ to the shared memory.
To this end, it has states $\setcon{q_1, \dots, q_m}$ and sequences of transitions 
\begin{align*}
	q_i \xrightarrow{!\enc(\ell_{i_j})} q_{i+1} ~\text{for}~ j =1,2,3 ~\text{and}~ i \in [1..m-1].
\end{align*}
The notation $!\enc(\ell_{i_j})$ indicates that the whole encoding of $\ell_{i_j}$ is written to the shared memory.
This can be easily achieved by adding $\log(n)+1$ many intermediary states.
Hence, $P_v$ has $\bigO(m \cdot \log(n))$ many states in total and writes the encoding of exactly $m$ literals to the shared memory.

For each variable $x_i$, we have a thread $P_i$.
Initially, on reading $\initmem$, the thread $P_i$ chooses the evaluation for variable $x_i$.
It stores the evaluation.
To this end, the thread has states $\Setcon{p^i_{(v,j)}}{v = 0,1 ~\text{and}~ j \in [1..m]}$.
The $m$ copies are needed to count the number of literal encodings that were written to the memory by the verifier.

For each literal $\ell \notin \setcon{x_i, \neg x_i}$, thread $P_i$ has sequences of transitions
\begin{align*}
	p^i_{(v,j)} \xrightarrow{?\enc(\ell)} p^i_{(v,j+1)}, \, j \in [1..m-1].
\end{align*}
Since $\ell$ contains a different variable than $x_i$, the thread $P_i$ does not need to check whether the evaluation in $\ell$ matches the stored one.
These transitions only ensure that $P_i$ can keep track of the number of encodings that was already written by the verifier.
Note that, as above, the sequences can be realized by adding intermediary states.

For literals containing $x_i$, the thread $P_i$ needs to check whether the evaluation of the literal matches the stored evaluation.
This can be realized as follows.
If $P_i$ decides to store evaluation $v$ for $x_i$, then only encodings of the form $v \# \bin_\#(i)$ can be read.
Hence, $P_i$ has the transition sequences
\begin{align*}
	p^i_{(v,j)} \xrightarrow{? v \# \bin_\#(i)} p^i_{(v,j+1)}, \, j \in [1..m-1].
\end{align*}
Note that, if the verifier $P_v$ writes a literal $\ell$ to the memory which contains $x_i$ but has the wrong evaluation, $P_i$ is not able to read the encoding of $\ell$ and deadlocks.
Moreover, note the importance of the symbol $\#$.
It avoids repeated reading of the same symbol which can cause false encodings.

By construction, we get that $\varphi$ has a satisfying assignment if and only if all threads reach their last state.
If $\varphi$ has a satisfying assignment, the threads $P_i$ choose exactly this assignment and store it.
Now the verifier $P_v$ chooses for each clause a literal that satisfies it and the $P_i$ can read the encodings of these literals and terminate.
For the other direction, assume that all $P_i$ and $P_v$ reach their last state.
Since there is no loss in the communication between the threads, the assignment chosen by the $P_i$ is satisfying for $\varphi$.
This is due to that all encodings of literals chosen by $P_v$ can be read without a $P_i$ getting stuck.

Since $P_v$ is the only thread that writes to the shared memory, the computation has only one stage.

\end{document}